\newcommand\blfootnote[1]{
\begingroup
\renewcommand\thefootnote{}\footnote{#1}
\addtocounter{footnote}{-1}
\endgroup
}
\newcommand{\BbbR}{{\mathbb{R}}}
\newcommand{\BbbC}{{\mathbb{C}}}
\newcommand{\BbbP}{{\mathbb P}}
\newcommand{\fk}{{\mathfrak{k}}}
\newcommand{\Kr}{{K_{\text{rat}}}}
\newcommand{\CR}{{\mathcal{R}}}
\newcommand{\CB}{{\mathcal{B}}}
\newcounter{smalllist}
\newtheorem{lemma}{Lemma}[section]
\newtheorem{prop}[lemma]{Proposition}
\newtheorem{coro}[lemma]{Corollary}
\newtheorem{theorem}[lemma]{Theorem}
\theoremstyle{definition}
\newtheorem{exmp}[lemma]{Example}
\newtheorem{definition}[lemma]{Definition}
\theoremstyle{remark}
\newtheorem{remark}[lemma]{Remark}
\let\Im=\undefined\DeclareMathOperator{\Im}{Im}
\let\llldots=\ldots
\def\ldots{\llldots{}}
\numberwithin{equation}{section}
\begin{document}

\title[reflection maps]
{Reflection maps associated with involutions and factorization problems, and their Poisson geometry}

\author[L.-C.~Li, C.~Caudrelier]{Luen-Chau Li $^{\ast}$ and Vincent Caudrelier}
\address{Luen-Chau Li\\
        Department of Mathematics\\
        Pennsylvania State University\\
        University Park, PA 16802, USA}
\email{lxl22@psu.edu}
\address{Vincent Caudrelier\\
          School of Mathematics\\
          University of Leeds\\
          Leeds, LS2 9JT UK}
\dedicatory{Dedicated to the memory of Henry McKean}
\email{V.Caudrelier@leeds.ac.uk}
\date{\today}

\subjclass[2020]{16T25, 37J39, 37K25, 35Q55, 35C08}
\keywords{Reflection map, Yang-Baxter map, symplectic submanifold, Dirac submanifold,
Poisson involution, Dirac reduction, $n$-Manakov system, polarization reflection map, rational loop group}
\blfootnote{$^{\ast}$  Corresponding author}
 
\begin{abstract}   The study of the set-theoretic solutions of the reflection equation, also known as reflection maps, is closely related to that of the Yang-Baxter maps.
In this work, we construct reflection maps on various geometrical objects, associated with factorization problems on rational loop groups and involutions.  We
show that such reflection maps are smoothly conjugate to the composite of permutation maps, with corresponding reduced Yang-Baxter maps.  In the case
when the reduced Yang-Baxter maps are independent of parameters, the latter are just braiding operators.  We also study the symplectic and
Poisson geometry of such reflection maps.   In a special case, the factorization problems are associated with the collision of $N$-solitons of the $n$-Manakov
system with a boundary, and in this context the $N$-body polarization reflection map is a symplectomorphism.
\end{abstract}

\thanks{}
\maketitle

\section{Introduction}

The reflection equation, which is a close companion of the Yang-Baxter equation (YBE) \cite{Y, Ba}, first arose in the context of factorized scattering on a half-line \cite{C}.  It is
an important equation in the study of quantum integrable systems with non-periodic boundary conditions \cite{Sk}.   The YBE, as is well-known, is related to a multitude
of topics \cite{Y, B, J, T, JS, Tur, S, KL, Z}.    In particular,  any solution of YBE gives rise to  linear 
representation of the braid group $B_n$ \cite{K}.     Likewise, the reflection equation is connected with various branches of mathematics and physics (see, for example, \cite{C, Sk, KS, MR, DM, G, RV, AV, BS, BK, W, Xu}).   And in the same vein as YBE, the reflection equation deals with representation of the generalized
braid group $B_{1,n},$ which can be regarded as a subgroup of $B_{n+1}$ consisting of braids with one frozen strand (see, for example,
\cite{Ch, Sch, Lam}).

In the early 1990s, Drinfeld posed the problem of finding set-theoretic solutions of the Yang-Baxter
equation \cite{Dr2}.   More precisely, given a set $X,$ the problem is to find invertible maps $R:X\times X\longrightarrow X\times X$
satisfying
\begin{equation}\label{1.1}
R_{12}R_{13} R_{23} = R_{23}R_{13}R_{12}
\end{equation}
where $R_{ij}$ denotes the map from $X\times X\times X$ to itself which acts as $R$ on the $i$-th and $j$-th component
and as the identity on the other component.   This problem has led to numerous works (see, for example, \cite{WX, ESS, LYZ, GV, V, R, APT1, T, L1, CGP, ABS, HJN}).
Of these, the papers \cite{GV, V, APT1, T} are connected with soliton collisions in multi-component integrable PDEs, and such maps are referred to
as Yang-Baxter maps in \cite{V}.    It should be pointed out that in some cases in \cite{L1}, solutions of the YBE are not necessarily defined on a product
space, but we continue to call such maps Yang-Baxter maps, and this is the usage which we are following here.   In a nutshell, the YBE or its set-theoretic version is a condition which ensures the
factorization property.    In the context of the $n$-Manakov system (a.k.a.~vector NLS) on the line \cite{Man, APT2} (the $n=2$ case is due to Manakov in \cite{Man}), the vector solitons have internal degrees of
freedom, called polarizations.   Colliding solitons alter each other's polarization states, which is what makes colliding solitons of interest in optical computing
\cite{JSS}.   Here the factorization property means that an $N$-soliton collision process can be factorized into a nonlinear superposition of $N(N-1)/2$ pairwise collisions
in an arbitrary order, and the YBE ensures that all these possibilities gives the same result \cite{APT1, T}.  By studying the $n$-Manakov system on a half-line, with Robin boundary condition or 
mixed Dirichlet/Neumann condition at $x=0,$  the authors in \cite{CZ2} showed that there is also factorization in 
the presence of a boundary, and were led to introduce a set-theoretic version of the (parametric) reflection equation.   In recent years, the study of
set-theoretic solutions of the reflection equation, which are called reflection maps, and their relations to Yang-Baxter maps,
have been the subject of several studies \cite{CCZ, KO, dC, DS, SVW, LV, D}.

Our initial motivation for this work is to study the Poisson properties, if any, of the parametric reflection map which arises in \cite{CZ2},
where the authors are studying the interaction of $N$-solitons of the $n$-Manakov system on the half-line $x\geq 0$ with the boundary  at $x=0.$
Following \cite{APT2}, recall that the $n$-component Manakov system is given by the equation
\begin{equation}\label{1.2} 
i \bm{q}_t = \bm{q}_{xx} + 2\|\bm{q}\|^2 \bm{q},
\end{equation}
where $\bm{q}$ is a $\mathbb{C}^n$-valued function and $\|\bm{q}\|=(\bm{q}^{*} \bm{q})^{1/2}$ is the Euclidean norm of $\bm{q}.$ 
In \cite{CZ2}, the authors consider (1.2) on the half-line $x\geq 0$ and impose the following boundary conditions at $x=0:$
\newline
(a)  Robin boundary conditions, of the form
\begin{equation}\label{1.3}
\bm{q}_x(0, t) -2\alpha \bm{q}(0,t) =0,\,\,\, \alpha\in \mathbb{R},
\end{equation}
or 
\newline
(b) mixed Dirichlet/Neumann boundary conditions, given by
\begin{equation}\label{1.4}
\begin{aligned}
& q_j(0, t) =0, \,\,\, j\in S \subset \{1,\cdots, n\},\\
& q_{jx} (0, t) =0, \,\,\, j\in \{1,\cdots, n\}\setminus S.\\
\end{aligned}
\end{equation}
These boundary conditions are not arbitrary, but were derived earlier in \cite{CZ1},  in which the authors showed that a nonlinear mirror image method 
\cite{Kha, BT, BH} can be used to construct an inverse scattering transform for the half-line problem with that of a full-line problem.   As a consequence, 
one can obtain the $N$-soliton solution of the half-line problem with the above boundary conditions 
as the restriction to $x>0$ of a $2N$-soliton solution of a full-line problem provided that the norming constants and the
poles $\alpha_j$  satisfy appropriate mirror symmetry conditions which are dependent on the boundary conditions.  This is worked out in \cite{CZ2}
and the reflection map is precisely the map which describes the change in the polarization vector of a $1$-soliton when it
interacts with the boundary.   Note that in using the nonlinear mirror image method mentioned above, the collision of a $1$-soliton
with the boundary at $x=0$ becomes identified with the collision of the $1$-soliton with its `mirror' soliton.   Since soliton collision problems
correspond to refactorization problems associated with simple elements in rational loop groups, the kind of refactorization problems we consider in this connection
will have some special structure, or symmetry.   

Motivated by what we described in the previous paragraph, our goal in this work is to construct set-theoretic solutions of the reflection equation, or reflection maps, for a variety of geometric objects, and to study their symplectic and Poisson geometry.  The heuristic reason why the Yang-Baxter maps in \cite{L1}, and the reflection maps we consider here should have some interesting symplectic/Poisson
geometry is the following.   The $n$-Manakov system, and more generally multi-component integrable PDEs, are infinite dimensional Hamiltonian systems.  By general arguments, the dynamics of the multi-soliton solutions of such equations is expected to give rise to canonical maps on their respective phase spaces.   In the case of many scalar integrable PDEs, this is well-known (see, for example, \cite{FT}).   For a recent nontrivial example connected with the Benjamin-Ono equation, we refer the reader to \cite{Sun}.   As explained above, the dynamics of the $n$-Manakov system on the line can be described by  Yang-Baxter maps, and on the half-line with integrable boundary conditions by  reflection maps (up to asymptotic velocities and phase shifts).   It is therefore not only natural, but also a fundamental question to investigate if such maps have symplectic/Poisson properties with respect to
some symplectic/Poisson structures.   This is the overarching principle in \cite{L1} and in the present work.
Thus our main result here is that we establish, for the first time, the symplectic/Poisson nature
of our reflection maps, at the level of projectors, at the level of complex projective spaces, and at the level of Poisson Lie groups.  Since our reflection maps are associated with refactorization problems with some symmetry, associated with involutions,
it is natural to consider the reduction of the Yang-Baxter maps in \cite{L1}, which are diffeomorphisms on the graphs of
the involutions.  As it turns out, our reflection maps are smoothly conjugate to the composite of permutation maps, with
corresponding reduced Yang-Baxter maps.   In the case when the reduced Yang-Baxter maps are independent of parameters,
the latter are just braiding operators.
 This relationship opens up an approach to investigate the symplectic/Poisson geometry of the reflection maps, by means
 of reduction to symplectic submanifolds or reduction to Dirac submanifolds \cite{L2}, starting with the results in \cite{L1}.
\vskip .1in
The paper is organized as follows.   In section 2, we assemble some of the basic facts which we will use in
this work from several domains.   First of all, we will summarize some of the results in \cite{L1} on refactorization
problems in the rational loop group $K_{\text{rat}},$ and the symplectic and Poisson geometry of Yang-Baxter
maps,  as they provide the starting point of this investigation.   Then we will give the basic facts 
on the notion of Dirac submanifolds \cite{X},  and the method of Dirac reduction \cite{L2}.  In the context of
our present work, we are mainly dealing with the case in which the symplectic submanifolds and Dirac
submanifolds are given by the stable loci of symplectic/Poisson involutions.   In section 3, we begin
by introducing the class of involutions on $K_{\text{rat}}$ which we consider in this work.   Since what
we are doing here is motivated by the study of the $n$-Manakov system on the half-line with
$\mathbb{C}^n$-valued solutions $\mathbf{q},$   we must
include at least the two kinds of involutions which are implicit in \cite{CZ2}.    
To cut the story short, the two kinds of involutions corresponds to the two distinct kinds of boundary
conditions (a) and (b) described above.
As the reader will see, case (a) is not really of interest, and the class of involutions
which we will consider in this work, at the level of loops in the rational loop group $K_{\text{rat}}$ 
(we will also consider involutions at the level of projectors, at the level of $\mathbb{C}\mathbb{P}^{n-1}$ or abstractly on a Lie group), is given by
\begin{equation}\label{1.5}
\sigma: K_{\text{rat}}\longrightarrow K_{\text{rat}},\,\, \sigma(g)(z) = Ug^{*}(-\overline{z}) U^{*},
\end{equation}
where $U$ is a Hermitian unitary matrix, and the special case with
\begin{equation}\label{1.6}
U= I_{S} =  \text{diag}(d_1,\cdots, d_n), \,\,\text{where}\,\, d_i = \begin{cases}  1 & \,\text{if}\,\, i\in S\subset \{1,\cdots, n\},\\
-1 & \,\text{if}\,\, i\notin S,
\end{cases}
\end{equation}
is what corresponds to case (b) with mixed Dirichlet/Neumann boundary conditions.  Thus what 
we consider here is way beyond what we need to understand the case where $U= I_{S}.$   Note that if we restrict $\sigma$
to simple elements $g_{\alpha, P},$ of the form
\begin{equation}\label{1.7}
g_{\alpha, P}(z) = I + \frac{\overline{\alpha}-\alpha}{z-\overline{\alpha}}P,
\end{equation}
where $P$ is an $n\times n$ Hermitian projector of rank $k,$ $1\leq k\leq n-1,$  then we obtain an induced map
\begin{equation}\label{1.8}
(\alpha, P)\mapsto  (\tau(\alpha), c_{U,k}(P)),\,\,\, \tau(\alpha) = -\overline{\alpha}, c_{U,k}(P) = UPU^{*},
\end{equation}
which is also an involution and indeed,  this is what we will be working with in section 3.   In \cite{L1}, the author
showed that the parametric Yang-Baxter map $R^{k,k}(\alpha_1, \alpha_2)$ (to be defined below) is a symplectomorphism
on $P(n)_k\times P(n)_k,$  where $P(n)_k$ is the set of $n\times n$ Hermitian projectors of rank $k.$
The starting point of our investigation in that section consists of studying the restriction of 
$R^{k, k}(\tau(\alpha), \alpha)$ to the graph of $c_{U,k},$ which we denote by $\mathcal{G}_{U,k}.$   Indeed, 
$R^{k,k}(\tau(\alpha), \alpha)\mid_{\mathcal{G}_{U,k}}$ maps $\mathcal{G}_{U,k}$ to itself.
Thus we have the induced diffeomorphism $R^{k,k}_{\text{red}}(\tau(\alpha), \alpha)$ on 
$\mathcal{G}_{U,k}.$   As it turns out, $\mathcal{G}_{U,k}$ is a symplectic submanifold of
$P(n)_k\times P(n)_k,$ and the braiding operator corresponding to $R^{k,k}_{\text{red}}(\tau(\alpha), \alpha)$
is smoothly conjugate to the parametric reflection map $B(\alpha).$  This is the path we take in 
showing that $B(\alpha)$ is a symplectomorphism.    By using the partial action $\xi$ associated
with the refactorization problem in Theorem 5.13 (a) of \cite{L1}, and consideration related to 
the method of nonlinear mirror images (see the proof in Theorem 3.3 of \cite{CZ2}), 
we also show that $B(\alpha)$ satisfies the parametric reflection equation.

In section 4, we specialize to the rank $1$ case, in which we describe our results at the level
of complex projective space $\mathbb{C}\mathbb{P}^{n-1}.$   Here the involution is given
by the map
\begin{equation}\label{1.9}
\widetilde{c}_{U}: \mathbb{C}\mathbb{P}^{n-1}\longrightarrow \mathbb{C}\mathbb{P}^{n-1}, \,\, [p]\mapsto [Up].
\end{equation}
 In the particular case where
$U = I_{S},$  the corresponding parametric reflection map is what appeared in \cite{CZ2}
and describes the change in polarization when a $1$-soliton solution of the $n$-Manakov
system is reflected by the boundary at $x=0.$   Motivated by the interaction of $N$-soliton solutions
with the boundary at $x=0$ in this context,  we introduce the $N$-body {\sl polarization reflection
map} corresponding to a general Hermitian unitary matrix $U,$ and we show that this map is a symplectomorphism.   We wrap up that section with an explanation of the physical meaning of
the $N$-body polarization reflection map, in the context of the $n$-Manakov system on
the half-line with mixed Dirichlet/Neumann boundary conditions at $x=0.$   We also point out
the relationship between the full polarization scattering map, and that of the $N$-body
polarization reflection map.

In section 5, the last section of this work, we begin by proving two abstract results in the context of a Poisson Lie group $G.$  Here a number of assumptions must be
made.    As shown by the work in \cite{L1}, the Yang-Baxter maps associated with refactorization problems in $K_{\text{rat}}$ are not defined everywhere
on the product $K_{\text{rat}}\times K_{\text{rat}},$ and are associated with {\it partial} actions.  (This is in contrast to what is assumed in \cite{LYZ}.)
Thus we must postulate the existence of a left partial group action $\xi: G\ast G\longrightarrow G$ and a right partial group action  $\eta: G\ast G\longrightarrow G$ 
which are compatible in the sense that
\begin{equation}\label{1.10}
gh = \xi_{g}(h)\eta_{h}(g)\,\,\,\text{for all}\,\,\, (g,h) \in G\ast G,
\end{equation}
where $G\ast G$ is assumed to be an open submanifold of $G\times G.$   In addition, we have to make several assumptions on the domain of $\xi_{g}$
and $\eta_{g}$ for $g\in G.$   Of course, such assumptions are vacuous in case $\xi$ and $\eta$ are genuine actions.   On the other hand, we  have to postulate the existence of a Poisson involution $\sigma$ which is also a Lie group anti-morphism satisfying some additional assumptions.    These assumptions have to do with the intersection of the graph of $\sigma$ with $G\ast G,$ as well as the way $\sigma$ interacts with 
the two partial group actions.   As a matter of fact, if we let $G^{\prime}(\sigma) := G(\sigma) \cap (G\ast G),$ where $G(\sigma)$ denote the graph of 
$\sigma,$ we have to assume that
\begin{equation}\label{1.11}
G^{\prime} := (\sigma, id_{G})^{-1} (G^{\prime}(\sigma))\neq \emptyset .
\end{equation}
Under the first set of assumptions, we show that the map
\begin{equation}\label{1.12}
R: G\ast G\longrightarrow G\ast G, \,\, (g, h) \mapsto (\eta_{h}(g), \xi_{g}(h))
\end{equation}
is a Yang-Baxter map, and moreover, is a Poisson diffeomorphism, when the open Poisson submanifold $G\ast G$ is equipped with the structure induced from
$G\times G.$   With the additional postulates on $\sigma,$  we show that $G(\sigma),$ the graph of $\sigma,$ is a Dirac submanifold of $G\times G,$ equipped
with the product structure.   Thus $G(\sigma)$ carries an induced Poisson structure  $\pi_{G(\sigma)}.$  On the other hand, we can push the Poisson structure
on $G$ forward to $G(\sigma)$ using the map $(\sigma, id_{G})$ so that it becomes a Poisson diffeomorphism, when its codomain is equipped with the pushforward structure.  It is miraculous that this pushforward structure is precisely $2 \pi_{G(\sigma)}.$    Note that the reflection map in this setting, which we denote by $\mathbf{B},$
is only defined on the open submanifold $G^{\prime}$ of the Poisson Lie group $G,$  but we can show that
it is smoothly conjugate to the braiding operator, of the reduced Yang-Baxter map $R_{\text{red}}: G^{\prime}(\sigma)\longrightarrow G^{\prime}(\sigma).$
Combining this with the results on the various Poisson structures, we  conclude that $\bold{B}$ is a Poisson diffeomorphism.   With suitable restrictions, we
can also show that $\mathbf{B}$ satisfies the reflection equation.   Finally, we conclude the section by applying the general results to $K_{\text{rat}}.$  There
are various conditions to check, see, in particular, Proposition 5.9 and Proposition 5.10.

We end the paper with a conclusion section in which we discuss what has been achieved as well as giving some perspectives on future directions.

\section{Preliminaries}

In this section, we will first of all collect a number of results from \cite{L1} which serves as the starting point of our analysis in this work.   We will also recall the notion of Dirac submanifolds \cite{X} and the method of Dirac reduction \cite{L2} which will be used in the section on reflection maps and Poisson Lie groups.  

\subsection{Parametric Yang-Baxter maps}  

We begin by introducing the loop groups which play an essential role in \cite{L1}.    To do so, let $\BbbC\BbbP^{1}= \BbbC\cup \{\infty\},$ $\Omega_{+}=\BbbC,$
$\Omega_{-} = \mathcal{O}_{\infty},$ a neighborhood of $\infty$ invariant
under complex conjugation.  Also, let $U(n)$ be the unitary group, and denote its Lie algebra by $u(n).$  Following Terng and Uhlenbeck \cite{TU},
we introduce the loop group
\begin{equation}\label{2.1}
\begin{aligned}
& K=\{g:\Omega_{+}\cap\Omega_{-}\longrightarrow GL(n,\BbbC)\mid g\,\,\text{is holomorphic, and}\,\, g(\bar{z})^{*}g(z)=I, \,\, \text{for}\\
& \quad \quad\,\,\, \text{all}\,\, z\in \Omega_{+}\cap \Omega_{-}\}.\\
 \end{aligned}
\end{equation}
For a matrix loop $g$ which is holomorphic in $\mathcal{U}\subset \BbbC\BbbP^{1}$  satisfying $g(\bar{z})^{*}g(z)=I$ for all $z\in \mathcal{U},$
we say that $g$ is $u(n)$-holomorphic on $\mathcal{U}.$   Then we have the following Lie subgroups of $K$:
\begin{equation}\label{2.2}
\begin{aligned}
& K_{+} =\{g\in K\mid g\,\,\text{extends}\,\,u(n)-\text{holomorphically to}\,\,  \Omega_{+}\},\\
& K_{-} =\{g\in K\mid g\,\,\text{extends}\,\,u(n)-\text{holomorphically to}\,\,  \Omega_{-}, g(\infty)=I\},\\
& \Kr =\,\,\text{subgroup of rational maps}\,\, g\in K_{-}.\\
\end{aligned}
\end{equation}
In $\Kr,$ there are some special elements that are of basic importance.  To introduce 
these elements, let $\mathcal{H}(n)$ denote the set of $n\times n$ Hermitian matrices, 
and let
\begin{equation}\label{2.3}
P(n) = \{ P\in \mathcal{H}(n)\mid P^2 =P \}.
\end{equation}
Then associated to each $\alpha\in \BbbC\setminus \BbbR,$ and $P\in P(n),$  is the simple element
\begin{equation}\label{2.4}
g_{\alpha, P}(z) = I + \frac{\overline{\alpha}-\alpha}{z-\overline{\alpha}}P.
\end{equation}
These simple elements are  known as dressing factors in the work of Zakharov and Shabat \cite{ZS}, they are
called Blaschke-Potapov factors in \cite{FT} (see \cite{P}).
\begin{theorem}[\cite{U}] \label{T:2.1}
(a)  $g_{\alpha, P}\in \Kr.$
\newline
(b)  $\Kr$ is generated by the simple elements, i.e., every $g\in \Kr$ can be factorized into a product of simple elements.
\end{theorem}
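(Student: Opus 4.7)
The plan is to handle the two parts in turn.

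\textbf{Part (a).} The cleanest route is to rewrite the simple element in the equivalent form
\[
g_{\alpha, P}(z) \;=\; \frac{z-\alpha}{z-\bar\alpha}\, P \;+\; (I - P),
\]
which follows immediately from $P^2 = P$. In this form, $g_{\alpha, P}$ is a rational function of $z$ whose only singularity in $\BbbC\BbbP^1$ is a simple pole at $z = \bar\alpha \in \BbbC$; choosing the neighborhood $\mathcal{O}_\infty$ to avoid $\bar\alpha$, it extends $u(n)$-holomorphically to $\Omega_-$, and $g_{\alpha, P}(\infty) = P + (I-P) = I$. The unitarity relation $g_{\alpha, P}(\bar z)^* g_{\alpha, P}(z) = I$ is then a one-line computation using $P^* = P$, $P^2 = P$, and $P(I-P) = (I-P)P = 0$: the two scalar prefactors on the projector are reciprocal complex conjugates whose product is $1$, and the cross terms vanish.

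\textbf{Part (b).} I would argue by induction on the degree $d(g) := \deg \det g$, where $\det g$ is the rational function on $\BbbC\BbbP^1$ determined by $g$, and $d(g)$ is its number of zeros in $\BbbC$ counted with multiplicity. The unitarity relation on $g$ combined with $g(\infty) = I$ forces $\det g$ to be a finite Blaschke product whose zeros $\alpha_i$ and poles $\bar\alpha_i$ are paired by conjugation, so $d(g) \geq 0$; moreover $d(g) = 0$ forces $\det g \equiv 1$, hence $g$ is entire and bounded, and the normalization at $\infty$ forces $g \equiv I$. For the inductive step, take $g$ with $d(g) \geq 1$, fix a pole $\bar\alpha \in \BbbC$, and perform a residue analysis: let $W \subset \BbbC^n$ be the image of the principal-part coefficient of $g$ at $\bar\alpha$, and let $P$ be the orthogonal projector onto $W$. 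Then $g(z) \, g_{\alpha, P}(z)^{-1} = g(z) \, g_{\bar\alpha, P}(z)$ has the leading singularity at $\bar\alpha$ cancelled on $W$; the would-be pole of $g_{\bar\alpha, P}$ at $\alpha$ is cancelled by a corresponding zero of $g$, which is forced by the unitarity pairing of poles at $\bar\alpha$ with zeros at $\alpha$; and $\det(g \, g_{\alpha, P}^{-1}) = \det g \cdot \bigl(\tfrac{z-\bar\alpha}{z-\alpha}\bigr)^{\operatorname{rank} P}$ has strictly fewer zeros. Hence the product lies in $\Kr$ with $d$ dropped by $\operatorname{rank} P \geq 1$, and the inductive hypothesis yields the factorization.

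\textbf{Main obstacle.} The essential content lies entirely in the residue analysis of the inductive step: producing a Hermitian projector $P$ such that (i) the leading singularity of $g \cdot g_{\alpha, P}^{-1}$ at $\bar\alpha$ genuinely cancels rather than getting displaced up the Laurent tail, and (ii) no new pole is introduced at $\alpha$. Both of these are where the unitarity relation $g(\bar z)^* g(z) = I$ is exploited, coupling the residue data of $g$ at $\bar\alpha$ to the kernel data at $\alpha$ in exactly the complementary manner needed for both cancellations to close up simultaneously. This is the classical Zakharov--Shabat dressing construction at a simple pole; for higher-order poles one iterates, and the bookkeeping of invariant subspaces under successive residues is where the combinatorics become delicate.
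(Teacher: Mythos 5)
The paper does not prove this theorem; it is imported verbatim from Uhlenbeck's work \cite{U}, so there is no internal proof to compare against. Your part (a) is correct and complete: rewriting $g_{\alpha,P}=\frac{z-\alpha}{z-\overline{\alpha}}P+(I-P)$ makes the unitarity relation and the normalization at infinity immediate.

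Part (b), however, has two concrete problems beyond the admitted omission of the higher-order-pole bookkeeping. First, the induction measure $d(g)=\deg\det g$ does not work: the base case ``$d(g)=0\Rightarrow g\equiv I$'' is false, because $\det g\equiv 1$ does not force $g$ to be entire. The element $h=g_{\alpha,P}\,g_{\overline{\alpha},Q}\in \Kr$ with $\operatorname{rank}P=\operatorname{rank}Q$ but $P\neq Q$ has $\det h\equiv 1$ yet genuine poles at both $\overline{\alpha}$ and $\alpha$ (its residues there are proportional to $P(I-Q)$ and $(I-P)Q$, which are nonzero in general). This is exactly the phenomenon the paper flags after \eqref{2.31}: an element of $\Kr$ can have a zero and a pole at the same point, and these cancel in the determinant. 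The induction must instead run on the total pole order (McMillan degree) of $g$, which does strictly decrease when a factor is stripped correctly.

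Second, your projector is on the wrong side. If you right-multiply by $g_{\alpha,P}^{-1}=g_{\overline{\alpha},P}$ and $g$ has a simple pole at $\overline{\alpha}$ with residue $A$, the cancellation condition at $\overline{\alpha}$ is $A(I-P)=0$, i.e.\ $\operatorname{im}(A^{*})\subseteq\operatorname{im}(P)$, not $\operatorname{im}(A)\subseteq\operatorname{im}(P)$; and the no-new-pole condition at $\alpha$ is $g(\alpha)P=0$, which unitarity delivers for $P$ the projector onto $\operatorname{im}(A^{*})$ (since $g(z)g(\overline{z})^{*}=I$ forces $g(\alpha)A^{*}=0$) but not for the projector onto $\operatorname{im}(A)$. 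The image of the residue is the correct choice only if you strip the simple factor off on the \emph{left}. Both defects are repairable, and the repaired argument is indeed the classical dressing/Uhlenbeck induction, but as written the induction does not close.
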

Note that if we drop the reality condition $g(\overline{z})^* g(z) =I$ in $K_{\text{rat}},$ the result is $G_{\text{rat}},$ which is called the {\sl full rational 
loop group} in \cite{Goe}.   Clearly, we have the (involutive) automorphism $G_{\text{rat}}\longrightarrow G_{\text{rat}}: g(z)\longrightarrow (g(\overline{z})^*)^{-1}$
and $K_{\text{rat}}$ is the fixed point subgroup of this automorphism.   The reason why we consider $K_{\text{rat}}$ is due to the fact that we
are geared towards the $n$-Manakov system here, whose Lax operator in the zero curvature representation has certain symmetries (see, e.g. \cite{APT2}), but this is by no means necessary in the study Yang-Baxter maps.   Indeed, in \cite{L1}, $G_{\text{rat}}$ is also considered.  In this connection, we refer the reader to \cite{Mi} for a systematic study of various reductions in  zero curvature representations.

We will equip $\mathcal{H}(n)\simeq u(n)^*$ with the Lie-Poisson structure, where the identification is through the map
\begin{equation}\label{2.5}
\kappa: \mathcal{H}(n)\longrightarrow u(n)^{*},\,\,\, \kappa(A)(B) = -\sqrt{-1} \,\text{tr}\,(AB).
\end{equation}
Then $P(n)$ is a Poisson submanifold of  $\mathcal{H}(n).$   If for $1\leq k\leq n-1,$ we let
\begin{equation}\label{2.6}
P(n)_k =\{ P\in P(n)\mid \text{tr}\, P =k\}
\end{equation}
then $P(n)_k$ is nothing but the coadjoint orbit of the unitary group $U(n)$ through the point $E_k,$ defined by
the following formula:
\begin{equation}\label{2.7}
E_k = \begin{pmatrix} I_k & 0\\
                    0  & 0_{n-k}\\
\end{pmatrix} .
\end{equation}
Hence there is a standard symplectic structure on $P(n)_k,$ given by
\begin{equation}\label{2.8}
\omega_{E_k}(P)([X,P],[Y,P]) = \sqrt{-1} \text{tr}\,P[X,Y], \,\,P\in P(n)_k,\, X,Y\in u(n).
\end{equation}

\begin{theorem}[\cite{L1}]\label{T:2.2}
(a) For given $g_{\alpha_i, P_i}$ with $\alpha_i\in \BbbC\setminus \BbbR,$ $i=1,2,$ the
refactorization problem
\begin{equation}\label{2.9}
g_{\alpha_1, P_1}\, g_{\alpha_2, P_2} = g_{\alpha_2, \widetilde{P}_2}\, g_{\alpha_1, 
\widetilde{P}_1}
\end{equation}
has a unique solution if 
\begin{equation}\label{2.10}
\{\alpha_1, \overline{\alpha}_1\}\cap \{\alpha_2, \overline{\alpha}_2\} =\emptyset.
\end{equation}
In this case, the projections $\widetilde{P}_1$ and $\widetilde{P}_2$ are given by
\begin{equation}\label{2.11}
\widetilde{P}_i = \phi P_i \phi^{-1},
\end{equation}
where
\begin{equation}\label{2.12}
\phi = (\alpha_2 -\overline{\alpha}_1)I + (\overline{\alpha}_2 -\alpha_2)P_2  + (\overline{\alpha}_1-\alpha_1) P_1,
\end{equation}
and we define 
\begin{equation}\label{2.13}
R(\alpha_1, \alpha_2)(P_1, P_2) = (\widetilde{P}_1, \widetilde{P}_2).
\end{equation}
(b)  Let $\alpha_i\in \mathbb{C}\setminus \mathbb{R},$ $i=1,2,3$ satisfy
\begin{equation}\label{2.14}
\{\alpha_i, \overline{\alpha}_i\}\cap \{\alpha_j, \overline{\alpha}_j\}= \emptyset, i\neq j,
\end{equation} 
then $R_{ij}(\alpha_i, \alpha_j)$ satisfy the parametric Yang-Baxter equation
\begin{equation}\label{2.15}
R_{12}(\alpha_1, \alpha_2)R_{13}(\alpha_1, \alpha_3) R_{23}(\alpha_2,\alpha_3) = R_{23}(\alpha_2, \alpha_3) R_{13}(\alpha_1,\alpha_3)R_{12}(\alpha_1,\alpha_2)
\end{equation}
on $P(n) \times P(n)\times P(n).$                                                                                                                                                                                                                                                                                                                                                                                                                                                                                                                                                                                                                                                                                                                                                                                                                                                                                                                                                    
(c)  Let $\alpha_1,\alpha_2$ be as in part (a), and denote by $\{\cdot, \cdot\}_{P(n)}$  the bracket on the Poisson submanifold $P(n)$ 
of $\mathcal{H}(n)$ equipped with the Lie-Poisson structure. Consider
$P(n) \times P(n)$ with the product Poisson structure, where the first copy of $P(n)$ is equipped
with $(-2\Im\alpha_1)\{\cdot, \cdot\}_{P(n)},$ and the second copy of $P(n)$
is equipped with $(-2\Im{\alpha_2})\{\cdot, \cdot\}_{P(n)}.$  If we denote
this bracket by $\{\cdot, \cdot\},$ then the map
\begin{equation}\label{2.16}
R(\alpha_1,\alpha_2): (P(n)\times P(n), \{\cdot, \cdot\})\longrightarrow (P(n)\times P(n), \{\cdot, \cdot \})
\end{equation}
is a Poisson diffeomorphism.
Indeed, for any $1\leq k, \ell\leq n-1,$ if we let 
$R^{k,\ell}(\alpha_1,\alpha_2)= R(\alpha_1,\alpha_2)\mid P(n)_k\times P(n)_{\ell},$
and equip $P(n)_k\times P(n)_{\ell}$ with the
symplectic structure given by the $2$-form
\begin{equation}\label{2.17}
\omega^{\prime}_{\alpha_1,\alpha_2} = (-2\Im \alpha_1)\omega_{E_k}
\oplus (-2\Im \alpha_2)\omega_{E_{\ell}},
\end{equation}
then 
\begin{equation}\label{2.18}
R^{k,\ell}(\alpha_1,\alpha_2)
: (P(n)_k\times P(n)_{\ell}, \omega^{\prime}_{\alpha_1,\alpha_2})\longrightarrow
(P(n)_k\times P(n)_{\ell}, \omega^{\prime}_{\alpha_1,\alpha_2})
\end{equation}
is a symplectomorphism.
\end{theorem}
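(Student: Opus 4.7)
The plan is to proceed in three steps, one per part. For part (a), I would exploit the pole-residue structure of the simple factors (2.4). Setting $F(z):=g_{\alpha_1,P_1}(z)\,g_{\alpha_2,P_2}(z)$, one checks from the definition that $F$ is a rational $u(n)$-holomorphic map with $F(\infty)=I$ and at most simple poles at $\overline{\alpha}_1,\overline{\alpha}_2$; the non-collision hypothesis (2.10) ensures these poles are distinct. Any such function can equally be written as $g_{\alpha_2,\widetilde{P}_2}\,g_{\alpha_1,\widetilde{P}_1}$, and the projectors $\widetilde{P}_1,\widetilde{P}_2$ are pinned down uniquely by matching residues at the two poles. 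Using in addition the evaluations $g_{\alpha,P}(\alpha)=I-P$ at $z=\alpha_1,\alpha_2$ produces the intertwining conditions on $\widetilde{P}_i$; a direct substitution then verifies that the conjugation ansatz $\widetilde{P}_i=\phi P_i\phi^{-1}$ with $\phi$ given by (2.12) satisfies them. Finally, (2.10) forces $\phi$ to be invertible, and conjugation by $\phi$ (together with Hermiticity of $\phi^{*}\phi$) preserves Hermiticity, idempotency, and rank of each projector.

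For part (b), I would invoke the standard associativity-plus-uniqueness argument. Refactorize the triple $g_{\alpha_1,P_1}\,g_{\alpha_2,P_2}\,g_{\alpha_3,P_3}$ into its order-reversed form $g_{\alpha_3,\widehat{P}_3}\,g_{\alpha_2,\widehat{P}_2}\,g_{\alpha_1,\widehat{P}_1}$ through two distinct sequences of three adjacent pairwise swaps, each sequence corresponding to one side of (2.15). Hypothesis (2.14) guarantees that all six intermediate pairwise refactorizations are well-defined, and the uniqueness from part (a) forces the two resulting output triples of projectors to coincide, which is precisely the content of (2.15).

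For part (c), I would parametrize tangent vectors to $P(n)_k$ at $P$ as $[X,P]$ with $X\in u(n)$, so that $\omega_{E_k}$ is given by (2.8). Differentiating $\widetilde{P}_i=\phi P_i\phi^{-1}$ expresses $d\widetilde{P}_i$ in terms of $dP_1$ and $dP_2$ through the linear dependence of $d\phi$ on these, after which the goal is to verify that $(-2\Im\alpha_1)\omega_{E_k}\oplus(-2\Im\alpha_2)\omega_{E_\ell}$ is invariant under $R^{k,\ell}(\alpha_1,\alpha_2)$; the Poisson statement (2.16) then follows since the $P(n)_k\times P(n)_\ell$ for $1\le k,\ell\le n-1$ exhaust the non-trivial symplectic leaves of $P(n)\times P(n)$. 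The main obstacle is the algebraic bookkeeping: organizing the pullback so that the cross-terms mixing $dP_1$ with $dP_2$ visibly cancel requires care, and it is precisely the specific coefficient ratio $(\overline{\alpha}_1-\alpha_1):(\overline{\alpha}_2-\alpha_2)$ of $P_1$ and $P_2$ in $\phi$ combined with the weights $(-2\Im\alpha_1):(-2\Im\alpha_2)$ that orchestrates the cancellation. A more conceptual route I would pursue in parallel is to read the refactorization as a Poisson-Lie dressing move associated with a factorization of $K_{\text{rat}}$, and invoke the general theorem that refactorization maps in double Lie groups are Poisson; in that framework the scalings in (2.17) arise naturally from the coadjoint orbit symplectic structure on a Poisson Lie group.
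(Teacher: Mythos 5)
A preliminary remark: in this paper Theorem~\ref{T:2.2} is quoted from \cite{L1} and no proof is reproduced, so there is no in-text argument to compare against line by line; what follows measures your plan against the strategy that \cite{L1} is known (from the paper's own remarks, in particular the conclusion section) to follow. For parts (a) and (b) your route is the standard one and is essentially right: match residues of $g_{\alpha_1,P_1}g_{\alpha_2,P_2}$ at the distinct poles $\overline{\alpha}_1,\overline{\alpha}_2$ and use the evaluations at the zeros to pin down $\widetilde{P}_1,\widetilde{P}_2$, then obtain the Yang--Baxter identity by refactorizing the triple product along two sequences of adjacent swaps and invoking uniqueness. Two points are glossed, though. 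First, your justification that $\phi P_i\phi^{-1}$ is again a Hermitian projector is not the right one: $\phi^{*}\phi$ is Hermitian for \emph{any} $\phi$, which gives nothing; what is actually needed is that $\phi^{*}\phi$ commutes with $P_1$ and $P_2$, and this holds because a short computation gives $\phi^{*}\phi=|\alpha_2-\overline{\alpha}_1|^2 I+(\overline{\alpha}_1-\alpha_1)(\overline{\alpha}_2-\alpha_2)(P_1-P_2)^2$, and $(P_1-P_2)^2$ commutes with both projectors (the same identity, combined with $\|P_1-P_2\|\le 1$, also yields the invertibility of $\phi$ under \eqref{2.10}, which you assert without argument). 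Second, in (b) you need uniqueness of the \emph{ordered factorization into three} simple factors with prescribed pole sets, not merely the two-factor uniqueness of part (a); it follows by the same residue argument applied factor by factor, but it is a separate lemma and should be stated.

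Part (c) is where the proposal has a genuine gap. The direct pullback verification is precisely the hard content of the statement, and you defer it entirely: identifying ``the algebraic bookkeeping'' and the role of the coefficient ratios is a description of the problem, not a proof, and nothing in your sketch shows that the cross-terms actually cancel. Your parallel ``conceptual route'' is in fact the one taken in \cite{L1}: as recalled in the conclusion of the present paper, $R^{k,\ell}(\alpha_1,\alpha_2)$ is smoothly conjugate to the restriction of the Poisson diffeomorphism $R$ on $K_{\text{rat}}\ast K_{\text{rat}}$ to a product of dressing orbits $\mathcal{L}(\alpha_1,E_k)\times\mathcal{L}(\alpha_2,E_{\ell})$ of the dual group $K_{J}$, and the specific weights $-2\Im\alpha_i$ in \eqref{2.17} arise from identifying the orbit symplectic structure on $\mathcal{L}(\alpha,E_k)$ with a multiple of $\omega_{E_k}$ under $P\mapsto g_{\alpha,P}$. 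Invoking ``the general theorem that refactorization maps in double Lie groups are Poisson'' does not by itself deliver those weights, nor the Poisson statement \eqref{2.16} on all of $P(n)\times P(n)$ (for which one also uses that the $P(n)_k$ exhaust the symplectic leaves, as you note); the orbit identification is the decisive step and it is missing from both of your proposed routes.
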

While the map $R(\alpha_1,\alpha_2)$ is defined by the refactorization problem in \eqref{2.9}, we define the map
$R_{21}(\alpha_2, \alpha_1)$ on $P(n)\times P(n)$ by 
\begin{equation}\label{2.19}
R_{21}(\alpha_2, \alpha_1)(P_1, P_2) = (\widetilde{P}_1, \widetilde{P}_2), \,\,\,\text{where}\,\,\, g_{\alpha_2, P_2} g_{\alpha_1, P_1} = g_{\alpha_1, \widetilde{P}_1} g_{\alpha_2, \widetilde{P}_2},
\end{equation}
and of course, this gives rise to maps $R^{k,\ell}_{21}(\alpha_2, \alpha_1)$ from $P(n)_k \times P(n)_{\ell}$ to itself.   Clearly, we have
\begin{equation}\label{2.20}
R_{21}(\alpha_2,\alpha_1)R(\alpha_1,\alpha_2) = id_{P(n)\times P(n)}
\end{equation}
and moreover,
\begin{equation}\label{2.21}
R^{k,\ell}_{21} (\alpha_2, \alpha_1) = S_{k,\ell}^{-1}\circ R^{\ell, k}(\alpha_2, \alpha_1)\circ S_{k,\ell},
\end{equation}
where $S_{k,\ell}: P(n)_k\times P(n)_{\ell}\longrightarrow P(n)_{\ell}\times P(n)_k$ is the permutation map that sends $(P_1, P_2)$ to $(P_2, P_1).$

\subsection{The rank 1 case.}
\vskip .1in

The rank $1$ case, which corresponds to $k=1$ in \eqref{2.6}, is related to soliton collisions in the $n$-Manakov system. In that system, people
usually deal with the change of {\sl unit} polarization vectors \cite{APT1, APT2} and projectors of rank $1$ are constructed from
such vectors.    Since the {\sl change of unit polarization map} is a map from $S^{2n-1}\times S^{2n-1}$ into itself, such a map
cannot be symplectic as $S^{2n-1}$ is odd-dimensional.   This is the reason why the author is working with $\mathbb{C}\mathbb{P}^{n-1}$
instead of $S^{2n-1}$ in \cite{L1}.   Here is the result we will use in this work.

\begin{theorem}[\cite{L1}]\label{T: 2.3}
(a) Let $j_{\delta}$ be the map given by
\begin{equation}\label{2.22}
j_{\delta} :\mathbb{C}\mathbb{P}^{n-1}\longrightarrow P(n)_1, [p]\mapsto \frac{pp^*}{p^*p} = \pi_{[p]},
\end{equation}
then the pullback of $\omega_{E_1}$ under $j_{\delta}$ is the Fubini-Study $2$-form
\begin{equation}\label{2.23}
j_{\delta}^* \omega_{E_1} = \omega_{FS} = \frac{p^* dp\wedge dp^* d + (p^*p)dp^*\wedge dp}{(p^*p)^2} .
\end{equation}
(b) Equip $\mathbb{C}\mathbb{P}^{n-1}\times \mathbb{C}\mathbb{P}^{n-1}$ with the symplectic $2$-from
\begin{equation}\label{2.24}
\begin{aligned}
\Omega_{\alpha_1, \alpha_2} = & (j_{\delta}\times j_{\delta})^{*}( (\overline{\alpha}_1-\alpha_1)\omega_{E_1}\oplus
 (\overline{\alpha}_2-\alpha_2) \omega_{E_1})\\
= & (\overline{\alpha}_1-\alpha_1) \omega_{\text{FS}} \oplus (\overline{\alpha}_2-\alpha_2)
\omega_{\text{FS}},\\
\end{aligned}
\end{equation}
then the map 
\begin{equation}\label{2.25}
\begin{aligned}
& \widetilde{R} (\alpha_1,\alpha_2) : \mathbb{C}\mathbb{P}^{n-1} \times \mathbb{C}\mathbb{P}^{n-1}
\longrightarrow \mathbb{C}\mathbb{P}^{n-1}\times \mathbb{C}\mathbb{P}^{n-1},\\
& ([p_1],[p_2])\mapsto ([\phi_{\alpha}([p_1], [p_2]) p_1], [\phi_{\alpha}([p_1],[p_2]) p_2])
\end{aligned}
\end{equation}
is a symplectomorphism, where
\begin{equation}\label{2.26}
\phi_{\alpha}([p_1],[p_2]) = (\alpha_2-\overline{\alpha}_1) I + (\overline{\alpha}_2-\alpha_2) \pi_{[p_2]} + (\overline{\alpha}_1-\alpha_1)\pi_{[p_1]}.
\end{equation}
Moreover, we have the following formulas:
\begin{equation}\label{2.27}
\begin{aligned}
& \phi_{\alpha}([p_1],[p_2]) p_1 = (\alpha_2-\alpha_1) g_{\overline{\alpha}_2,\alpha_2, \pi_{[p_2]}}
(\alpha_1) p_1,\\
& \phi_{\alpha}([p_1],[p_2]) p_2 = (\overline{\alpha}_2-\overline{\alpha}_1) g_{\alpha_1,\overline{\alpha}_1, \pi_{[p_1]}}(\overline{\alpha}_2) p_2.\\
\end{aligned}
\end{equation}
\end{theorem}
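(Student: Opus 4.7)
The plan is to derive (a) by an intrinsic computation relating the Kirillov--Kostant--Souriau (KKS) form $\omega_{E_1}$ on the coadjoint orbit $P(n)_1$ to the Fubini--Study form on $\mathbb{C}\mathbb{P}^{n-1}$, and then to deduce (b) by showing that $\widetilde{R}(\alpha_1,\alpha_2)$ is conjugate, via $j_\delta\times j_\delta$, to $R^{1,1}(\alpha_1,\alpha_2)$ from Theorem~\ref{T:2.2}(c); the symplectomorphism property will then transfer along this diffeomorphism.

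For (a), I would first observe that $j_\delta$ is a diffeomorphism onto $P(n)_1$, since every Hermitian projector of rank $1$ is uniquely of the form $pp^{*}/(p^{*}p)$ for a line $[p]\in \mathbb{C}\mathbb{P}^{n-1}$. Differentiating $\pi_{[p]}=pp^{*}/(p^{*}p)$ produces a tangent vector at $\pi_{[p]}$ that one writes in the coadjoint form $[X,\pi_{[p]}]$ for a suitable $X\in u(n)$, so that \eqref{2.8} pulls back to
\begin{equation*}
(j_\delta^{\ast}\omega_{E_1})([p])\bigl([X,\pi_{[p]}],[Y,\pi_{[p]}]\bigr) = \sqrt{-1}\,\text{tr}\bigl(\pi_{[p]}[X,Y]\bigr).
\end{equation*}
Expanding via cyclicity of the trace and substituting the explicit form of $d\pi_{[p]}$ in terms of $dp, dp^{*}$ then yields the right-hand side of \eqref{2.23}. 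This is just the classical identification of $\omega_{FS}$ with the pullback of the KKS form on the minimal coadjoint orbit of $U(n)$.

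For (b), the key step is to establish the intertwining
\begin{equation*}
(j_\delta\times j_\delta)\circ \widetilde{R}(\alpha_1,\alpha_2) = R^{1,1}(\alpha_1,\alpha_2)\circ (j_\delta\times j_\delta).
\end{equation*}
With $P_i=\pi_{[p_i]}$, Theorem~\ref{T:2.2}(a) gives $\widetilde{P}_i = \phi P_i \phi^{-1}$, a Hermitian projector of rank $1$ whose image is the line $\mathbb{C}(\phi_\alpha([p_1],[p_2])\,p_i)$; since a Hermitian rank-one projector is determined by its image, $\widetilde{P}_i = \pi_{[\phi_\alpha p_i]} = j_\delta([\phi_\alpha p_i])$, which is exactly the intertwining. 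From part (a) and the definition \eqref{2.24}, the $2$-form $\Omega_{\alpha_1,\alpha_2}$ coincides with $(j_\delta\times j_\delta)^{*}$ of the symplectic form on $P(n)_1\times P(n)_1$ appearing in Theorem~\ref{T:2.2}(c), up to the factor $\sqrt{-1}$ produced by the identification \eqref{2.5}. Since $R^{1,1}(\alpha_1,\alpha_2)$ is a symplectomorphism for that form and $j_\delta\times j_\delta$ is a diffeomorphism, $\widetilde{R}(\alpha_1,\alpha_2)$ is a symplectomorphism for $\Omega_{\alpha_1,\alpha_2}$.

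Finally, the formulas \eqref{2.27} are verified by direct expansion: applying $\phi_\alpha([p_1],[p_2])$ to $p_1$ and using $\pi_{[p_1]}p_1=p_1$ yields $(\alpha_2-\alpha_1)p_1 + (\overline{\alpha}_2-\alpha_2)\pi_{[p_2]}p_1$, which is then rewritten as $(\alpha_2-\alpha_1)$ times a simple-element type factor built from $\pi_{[p_2]}$ evaluated at $\alpha_1$; the second formula follows symmetrically. The main obstacle is purely one of bookkeeping, namely tracking the factors of $\sqrt{-1}$ coming from the identification $\kappa$ in \eqref{2.5} and the sign convention $\overline{\alpha}-\alpha = -2\sqrt{-1}\,\Im\alpha$ relating the two normalizations of the symplectic form; once this is under control, the symplectomorphism statement in (b) is essentially a corollary of Theorem~\ref{T:2.2}.
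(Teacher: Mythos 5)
The paper does not actually prove Theorem~\ref{T: 2.3}: it is imported verbatim from \cite{L1} as a preliminary, so there is no in-paper proof to compare against. Your reconstruction is correct and is the expected argument: part (a) is the classical identification of the Fubini--Study form with the Kirillov--Kostant--Souriau form on the minimal coadjoint orbit $P(n)_1$ via the diffeomorphism $j_{\delta}$, and part (b) follows from the intertwining $(j_{\delta}\times j_{\delta})\circ \widetilde{R}(\alpha_1,\alpha_2)=R^{1,1}(\alpha_1,\alpha_2)\circ (j_{\delta}\times j_{\delta})$ together with Theorem~\ref{T:2.2}(c), the discrepancy between the normalizations $\overline{\alpha}-\alpha=-2\sqrt{-1}\,\Im\alpha$ in \eqref{2.24} and $-2\Im\alpha$ in \eqref{2.17} being a harmless overall nonzero constant. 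Two small points are worth making explicit in your image argument: the Hermiticity of $\widetilde{P}_i=\phi P_i\phi^{-1}$ is part of the \emph{conclusion} of Theorem~\ref{T:2.2}(a) rather than visible from the formula (since $\phi$ is not unitary), and $\phi_{\alpha}([p_1],[p_2])p_i\neq 0$ because $\phi_{\alpha}$ is invertible under the hypothesis \eqref{2.10}; with these noted, the identification $\phi P_i\phi^{-1}=\pi_{[\phi_{\alpha}p_i]}$ and hence the whole argument is complete.
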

In Section 4 of \cite{L1},  the author also has a result on the polarization scattering map.    For our purpose here, let us recall its definition since we will use it in
Section 4 of the present work.  We start with the fact that for the $n$-Manakov system, an $N$-soliton solution corresponding to distinct eigenvalues
\begin{equation}\label{2.28}
\alpha_j = \frac{1}{2}(u_j + iv_j), j=1,\cdots, N
\end{equation}
in the upper half-plane behaves asymptotically as a sum of $N$ one-soliton solutions as $t\to\pm \infty$:
\begin{equation}\label{2.29}
\bold{q}(x,t)\sim \sum_{j=1}^{N} q_{j}^{\pm}(x,t) p_j^{\pm},
\end{equation}
where $p_j^{\pm}$ are unit vectors in $\mathbb{C}^n,$ and $q_j^{\pm}(x,t)$  is parametrized by $u_j, v_j,$ $j=1,\cdots, N.$ Here $p_j^{-}$ (resp.~ $p_j^{+})$ is the unit asymptotic polarization vector of soliton $j$
before (resp.~after) all its collisions.   We call the map defined by
\begin{equation}\label{2.30}
\begin{aligned}
\mathcal{S}(\alpha_1,\cdots, \alpha_N) &: (\mathbb{C}\mathbb{P}^{n-1})^N\longrightarrow (\mathbb{C}\mathbb{P}^{n-1})^N \\
& ([p_1^{-}],\cdots, [p_N^{-})])\mapsto   ([p_1^{+}],\cdots, [p_N^{+})])\\
\end{aligned}
\end{equation}
 the {\sl polarization scattering map}.   In Section 4 below,  we will make use of
 it to explain the meaning of the $N$-{\sl body polarization reflection map} in the context of soliton-boundary interactions for the $n$-Manakov system.

\subsection{Yang-Baxter maps and $K_{\text{rat}}$}
\vskip .1in
Recall that a non-singular rational matrix function $A$ has as many poles and zeros in $\mathbb{C}\mathbb{P}^1 = \mathbb{C}\cup \{\infty\}$ \cite{Kai}.  If $A$
is such a rational matrix function, the divisor of $A$ is denoted by $(A) = (A)_0 - (A)_{\infty},$ where $(A)_0$ is the divisor of zeros and $(A)_{\infty}$
is the divisor of poles.    We are dealing with $g\in K_{\text{rat}},$ the rational loop group.   Since $g(\overline{z})^{*} g(z) = I,$ it follows that
the divisor of $g$ is of the form
\begin{equation}\label{2.31}
(g) = \sum_{i=1}^{\ell} n_i\cdot \alpha_i -\sum_{i=1}^{\ell} n_i \cdot \overline{\alpha}_i,
\end{equation}
where $\alpha_1,\cdots, \alpha_{\ell}$  (resp.~ $\overline{\alpha}_1,\cdots, \overline{\alpha}_{\ell}$)  are distinct
zeros (resp.~ poles) of $g$ with orders $n_1,\cdots, n_{\ell}.$   Conversely, given a divisor $D\in \text{Div}^0(\mathbb{C}\setminus\mathbb{R})$
satisfying $\overline{D} = -D,$ we let
\begin{equation}\label{2.32}
K_{\text{rat}} (D) = \{g\in K_{\text{rat}}\mid (g) = D\},
\end{equation}
and we denote by $\text{supp}\, D$ the support of $D.$
Note that in contrast to the $n=1$ case, a rational matrix function can have zero and pole at the same point (see Remark 2.3 in \cite{L1}).
In order to state the next result, we first recall 
the notion of partial group actions, which first appeared in the study of some $C^*$-algebras \cite{E} (see also
\cite{B, L1}).     
\begin{definition}\label{D:2.4}  Let $M$ be a smooth manifold, and let $G$ be a Lie group.  A left partial action of  $G$ on $M$ consists of a family $\{M_g\}_{g\in G}$
of subsets of $X$ and a family of bijections $\{\Phi_g: M_{g^{-1}}\longrightarrow M_g\}_{g\in G}$
satisfying the following conditions:
\newline
(a) $M_e= X,$ $\Phi_e = \text{id}_M,$
\newline
(b)  $\Phi_h^{-1}(M_{g^{-1}}\cap M_h)\subset M_{(gh)^{-1}},$  $g,h\in G,$
\newline
(c) $\Phi_g(\Phi_h(x)) = \Phi_{gh}(x)$\,\,  for each $x\in \Phi_h^{-1}(M_{g^{-1}}\cap M_h).$

We say that the left partial group action is smooth if 
\begin{equation}\label{2.33}
 G\ast M =\{(g,x)\in G\times M\mid x\in M_{g^{-1}}\}
 \end{equation}
 is a smooth submanifold of the product manifold $G\times M$ and the map
 \begin{equation}\label{2.34}
 \Phi: G\ast M\longrightarrow M, \,\,(g,x)\mapsto \Phi_{g}(x)
 \end{equation}
 is a smooth map.
\end{definition}
In a similar way, we can define right partial group action of $G$ on $M.$  We will make use of the following result established in \cite{L1}.
\begin{theorem}[\cite{L1}]\label{T: 2.5}  (a) Given $u\in \Kr(D),$ $v\in \Kr(D^{\prime}),$ where $D\neq D^{\prime}$ are divisors in $\text{Div}^0 (\BbbC\setminus\BbbR)$
satisfying $\overline{D} = -D,$ $\overline{D^{\prime}}= -D^{\prime},$ the refactorization problem 
\begin{equation}\label{2.35}
uv = \widetilde{v}\widetilde{u} 
\end{equation}
of finding $\widetilde{u}\in \Kr(D),$ $\widetilde{v}\in \Kr(D^{\prime})$ has a unique solution if  $\text{supp}\,D\cap \text{supp}\,D^{\prime}=\emptyset,$ in which case we write
\begin{equation}\label{2.36}
\widetilde{v} = \xi_{u}(v),\quad \widetilde{u} =\eta_{v}(u).
\end{equation}
Thus for each $u\in \Kr,$ $\xi_u$
is defined on 
\begin{equation}\label{2.37}
K^{u^{-1}}_{\text{rat}} = \{v\in \Kr\mid \text{supp}\, (u) \cap \text{supp}\, (v) =\emptyset \}= K^{u}_{\text{rat}}
\end{equation}
and it takes values in the same set.
Similarly, for given $v\in \Kr,$ $\eta_v$ is defined on $K^{v^{-1}}_{\text{rat}}= K^{v}_{\text{rat}}$  and takes values in the same set. 
\newline
(b)  Define
\begin{equation}\label{2.38}
\Kr \ast \Kr =\{ (u,v)\in \Kr \times \Kr \mid \text{supp} (u)\cap \text{supp} (v) =\emptyset \},
\end{equation}
then the map 
\begin{equation}\label{2.39}
\xi: \Kr \ast \Kr \longrightarrow \Kr, \,\,(u,v) \mapsto \xi_{u} (v)
\end{equation}
 is a left partial group action.   Similarly, the map
 \begin{equation}\label{2.40}
\eta: \Kr \ast \Kr\longrightarrow \Kr,\,\, (u,v) \mapsto \eta_{v} (u)
\end{equation}
 is a right partial group action.
\end{theorem}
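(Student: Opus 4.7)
The plan is to reduce everything to the pairwise refactorization of Theorem 2.2(a), using the fact from Theorem 2.1(b) that every element of $K_{\text{rat}}$ is a product of simple elements, and then verify the axioms of Definition 2.4.

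For existence in part (a), I would write $u=g_{\alpha_1,P_1}\cdots g_{\alpha_m,P_m}$ and $v=g_{\beta_1,Q_1}\cdots g_{\beta_k,Q_k}$ using Theorem 2.1(b). The support hypothesis translates into $\{\alpha_i,\overline{\alpha}_i\}\cap\{\beta_j,\overline{\beta}_j\}=\emptyset$ for all $i,j$, which is exactly what Theorem 2.2(a) demands of an adjacent pair consisting of an $\alpha$-factor and a $\beta$-factor. I would then commute each $\beta_j$-factor leftward past every $\alpha_i$-factor by successive pairwise swaps, which replaces $g_{\alpha_i,P_i}g_{\beta_j,Q_j}$ by $g_{\beta_j,\widetilde{Q}_j}g_{\alpha_i,\widetilde{P}_i}$ with new projectors of the form $\phi P\phi^{-1}$ that remain Hermitian and of the correct rank, hence valid data for the next swap. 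Once all $\beta_j$-factors have been moved to the far left, one obtains $uv=\widetilde{v}\widetilde{u}$ with $\widetilde{v}$ a product of simple elements at the $\beta_j$'s and $\widetilde{u}$ at the $\alpha_i$'s, so $\widetilde{v}\in K_{\text{rat}}(D')$ and $\widetilde{u}\in K_{\text{rat}}(D)$. For uniqueness, suppose $\widetilde{v}_1\widetilde{u}_1=\widetilde{v}_2\widetilde{u}_2$ with both pairs in $K_{\text{rat}}(D')\times K_{\text{rat}}(D)$. Then $\widetilde{v}_2^{-1}\widetilde{v}_1=\widetilde{u}_2\widetilde{u}_1^{-1}$ lies in $K_{\text{rat}}$ with divisor supported in $\text{supp}\,D'$ on the left side and in $\text{supp}\,D$ on the right; disjointness forces the common element to have no zeros and no poles on $\mathbb{C}\mathbb{P}^1$, hence to be a constant invertible matrix, and the normalization $g(\infty)=I$ built into $K_-$ forces that constant to be $I$. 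Therefore $\widetilde{v}_1=\widetilde{v}_2$ and $\widetilde{u}_1=\widetilde{u}_2$.

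For part (b), the identity axiom follows from $K_{\text{rat}}(0)=\{I\}$ (same Liouville-type argument), which forces $\xi_e(v)=v$. Since $\xi_{u'}(v)$ inherits the divisor of $v$, the compatibility sets $K_{\text{rat}}^u$ are preserved, yielding the domain-inclusion axiom $\xi_{u'}^{-1}(K_{\text{rat}}^u\cap K_{\text{rat}}^{u'})\subset K_{\text{rat}}^{uu'}$. For the cocycle identity, I combine the two refactorizations
\begin{equation*}
u'v=\xi_{u'}(v)\,\eta_v(u'),\qquad u\,\xi_{u'}(v)=\xi_u(\xi_{u'}(v))\,\eta_{\xi_{u'}(v)}(u)
\end{equation*}
to write $uu'v=\xi_u(\xi_{u'}(v))\,\eta_{\xi_{u'}(v)}(u)\,\eta_v(u')$. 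The first factor on the right has divisor equal to that of $v$, while the product of the last two has divisor equal to that of $uu'$, and by the standing hypotheses these supports are mutually disjoint. Uniqueness from part (a), applied to the alternative refactorization $uu'v=\xi_{uu'}(v)\,\eta_v(uu')$, then yields $\xi_u(\xi_{u'}(v))=\xi_{uu'}(v)$, and the right-partial-action statement for $\eta$ is obtained by the symmetric argument.

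The main point requiring care is the inductive bookkeeping in the existence proof: at each intermediate stage of the chain of pairwise swaps one must check that the disjointness hypothesis of Theorem 2.2(a) still holds for the pair being commuted, and that the cumulative products on the left and right genuinely assemble into elements with the prescribed divisors $D'$ and $D$ (rather than producing spurious zero-pole pairs). Smoothness of $\xi$ and $\eta$ on the open submanifold $K_{\text{rat}}\ast K_{\text{rat}}\subset K_{\text{rat}}\times K_{\text{rat}}$ then follows from the explicit, smooth closed-form expression $\widetilde{P}_i=\phi P_i\phi^{-1}$ in Theorem 2.2(a), iterated through the factorization.
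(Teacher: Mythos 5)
This theorem is imported verbatim from \cite{L1} (it is Theorem~5.1 there) and the present paper gives no proof of it, so there is no in-paper argument to compare against; I can only assess your reconstruction on its own terms. Your route --- factor $u$ and $v$ into simple elements, bubble-sort the $\beta$-factors past the $\alpha$-factors via Theorem~\ref{T:2.2}(a), prove uniqueness by a Liouville argument on the quotient $\widetilde{v}_2^{-1}\widetilde{v}_1=\widetilde{u}_2\widetilde{u}_1^{-1}$, and then derive the partial-action axioms from uniqueness --- is the natural one and is essentially sound. The uniqueness step and the verification of Definition~\ref{D:2.4} are correct as written: divisor-preservation of $\xi_{u'}$ gives the domain inclusion, and the cocycle identity follows from uniqueness applied to the two refactorizations of $uu'v$, since both candidate left factors carry divisor $(v)$ and both right factors have divisor support inside $\operatorname{supp}(u)\cup\operatorname{supp}(u')$, disjoint from $\operatorname{supp}(v)$.

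Two points in the existence step deserve to be closed rather than merely flagged. First, Theorem~\ref{T:2.1}(b) as quoted only asserts that $u\in\Kr(D)$ is \emph{some} product of simple elements; your argument needs the sharper form of Uhlenbeck's factorization in which the simple factors have their poles exactly at the points of $\operatorname{supp}D$ (a priori a factorization could involve extraneous points whose contributions cancel, and then your disjointness hypothesis for the pairwise swaps would fail). This refinement is true and is what \cite{U, L1} actually provide, but it is not what you cited. Second, the worry you raise about ``spurious zero--pole pairs'' in the assembled products has a clean resolution you should state: when two rational matrix functions have disjoint divisor supports, the divisor of their product is the \emph{sum} of their divisors (near a point of $\operatorname{supp}(\widetilde{v})$ the factor $\widetilde{u}$ is holomorphic and invertible, and conversely). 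Hence $(\widetilde{v})+(\widetilde{u})=(uv)=D+D'$ with $(\widetilde{v})$ supported in $\operatorname{supp}D'$ and $(\widetilde{u})$ in $\operatorname{supp}D$, which forces $(\widetilde{v})=D'$ and $(\widetilde{u})=D$ exactly. With those two additions the proof is complete.
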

Now introduce the map
\begin{equation}\label{2.41}
R: \Kr \ast \Kr\longrightarrow \Kr \ast \Kr, (u,v)\mapsto (\eta_{v}(u), \xi_{u}(v)).
\end{equation}
We also introduce
\begin{equation}\label{2.42}
K^{(3)}_{\text{rat}} =\{(u_1,u_2, u_3)\in \Kr\times\Kr\times \Kr\mid \text{supp} (u_i)\cap \text{supp} (u_j)=\emptyset, i\neq j\}.
\end{equation}
Then as a corollary of the theorem above, we have
\begin{coro}[\cite{L1}]\label{C: 2.6}  The map $R$ is a diffeomorphism satisfying the set-theoretical Yang-Baxter equation
\begin{equation}\label{2.43}
R_{12} R_{13} R_{23} = R_{23} R_{13} R_{12},
\end{equation}
where we interpret \eqref{2.43} as an equality of maps from $K^{(3)}_{\text{rat}}$ to $K^{(3)}_{\text{rat}}.$
\end{coro}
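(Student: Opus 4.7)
My plan splits cleanly into two halves: first that $R$ is a diffeomorphism, then the set-theoretic Yang-Baxter equation. Smoothness of $R$ is immediate from Theorem \ref{T: 2.5}(b), since $\xi$ and $\eta$ are smooth on $K_{\text{rat}}\ast K_{\text{rat}}$. For the inverse, note that if $R(u,v)=(\widetilde u,\widetilde v)$ then by construction $uv=\widetilde v\widetilde u$, and since $\widetilde u,\widetilde v$ have the same divisors as $u,v$, the pair $(\widetilde v,\widetilde u)$ still satisfies the support-disjointness hypothesis of Theorem \ref{T: 2.5}(a). Applying that theorem to the reversed product $\widetilde v\widetilde u$ then yields the explicit and smooth formula
\begin{equation*}
R^{-1}(\widetilde u,\widetilde v)=(\xi_{\widetilde v}(\widetilde u),\eta_{\widetilde u}(\widetilde v)),
\end{equation*}
so $R$ is a diffeomorphism.

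For the Yang-Baxter equation on $K^{(3)}_{\text{rat}}$, I first observe that every $R_{ij}$ preserves the divisors of its two active components and fixes the third, so the mutual disjointness of $\text{supp}(u_1), \text{supp}(u_2), \text{supp}(u_3)$ is preserved at each intermediate stage; hence both $R_{12}R_{13}R_{23}$ and $R_{23}R_{13}R_{12}$ are well-defined self-maps of $K^{(3)}_{\text{rat}}$. The heart of the argument is then bookkeeping on the ordered product $u_1 u_2 u_3$: each elementary $R_{ij}$ rewrites the factor $u_i u_j$ by the refactorization identity $u_i u_j=\xi_{u_i}(u_j)\,\eta_{u_j}(u_i)$ inside that triple product. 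Tracing $R_{12}R_{13}R_{23}$ applied to $(u_1,u_2,u_3)$ yields a triple $(a_1,a_2,a_3)$ with $a_i\in K_{\text{rat}}(D_i)$ (where $D_i=(u_i)$) satisfying $u_1 u_2 u_3 = a_3 a_2 a_1$, and similarly $R_{23}R_{13}R_{12}$ yields $(b_1,b_2,b_3)$ with $b_i\in K_{\text{rat}}(D_i)$ and $u_1 u_2 u_3 = b_3 b_2 b_1$.

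The remaining step, which I expect to be the main obstacle, is to upgrade the pairwise uniqueness in Theorem \ref{T: 2.5}(a) to an ordered triple uniqueness: from $a_3 a_2 a_1 = b_3 b_2 b_1$ with the divisor hypothesis, deduce $a_i = b_i$. My approach is to rewrite this as $a_3^{-1} b_3 = (a_2 a_1)(b_2 b_1)^{-1}$; the left side is an element of $K_{\text{rat}}$ whose possible poles and zeros lie in $\text{supp}(D_3)\cup\overline{\text{supp}(D_3)}$, while the right side's lie in the analogous set for $D_1$ and $D_2$. The assumed pairwise disjointness forces both sides to be pole- and zero-free rational matrix functions, and the normalization $g(\infty)=I$ built into $K_{\text{rat}}$ pins the common value to $I$, giving $a_3 = b_3$. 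Applying the same reasoning to the residual equation $a_2 a_1 = b_2 b_1$ yields $a_2 = b_2$ and $a_1 = b_1$, completing the proof of \eqref{2.43}. Everything outside this triple-uniqueness step is formal manipulation built on Theorem \ref{T: 2.5}.
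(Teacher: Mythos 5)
The paper does not prove this corollary itself --- it is imported from \cite{L1} --- but your argument is correct and rests on exactly the machinery the authors deploy for the analogous statements they do prove (compare Theorem 5.3(a), where the two sides of the YBE are identified as refactorizations of the ordered product $g_1g_2g_3$, and Proposition 5.9, where divisor bookkeeping plus the fact that a pole-free element of $K_{\text{rat}}$ with $g(\infty)=I$ must equal $I$ pins down the answer). Your reduction to uniqueness of the ordered triple factorization with prescribed pairwise-disjoint divisors, together with the inverse formula $R^{-1}(\widetilde u,\widetilde v)=(\xi_{\widetilde v}(\widetilde u),\eta_{\widetilde u}(\widetilde v))$ coming from refactorizing $\widetilde v\widetilde u$, is essentially the same route, so I see no gap.
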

In order to describe the Poisson character of the map $R,$ let $\fk,$ $\fk_{\pm}$ be the Lie algebras of the loop groups $K,$ $K_{\pm}$ introduced in the first
subsection.  We equip
$\fk$ with the invariant pairing
\begin{equation}\label{2.44}
(X,Y)_{\fk} = \oint_{\gamma}  \text{tr} (X(z) Y(z))\, \frac{dz}{2\pi i}, \quad \gamma=\partial \mathcal{O}_{\infty}, \quad X,Y\in \fk.
\end{equation}
Since $\fk = \fk_{+} \oplus \fk_{-},$ we have the associated projection operators $\Pi_{\fk_{+}},$ $\Pi_{\fk_{-}}$
and 
\begin{equation}\label{2.45}
J= \Pi_{\fk_{+}} -\Pi_{\fk_{-}}
\end{equation}
is a skew-symmetric solution (w.r.t. $(\cdot, \cdot)_{\fk}$ ) of the modified Yang-Baxter equation (mYBE).   In order to introduce 
the Poisson structure on $K,$ it is necessary to restrict ourselves
to a subclass of functions in $C^{\infty}(K).$   Following the approach in
\cite{LN}, a function $\varphi\in C^{\infty}(K)$ is {\sl smooth at $g\in K$} iff 
there exists $D\varphi(g)\in \fk$ (called the right gradient
of $\varphi$ at $g$) such that 
\begin{equation}\label{2.46}
\frac{d}{dt}{\Big|_{t=0}} \varphi(e^{tX}g) = 
(D\varphi(g), X)_{\fk},\quad X\in \fk
\end{equation}
where $(\cdot, \cdot)_{\fk}$ is the pairing in \eqref{2.44}.
If $\varphi\in C^{\infty}(K)$ is smooth at $g$ for all 
$g\in K,$  then we say it is {\sl smooth on} $K.$ 
Note that the nondegeneracy of $(\cdot, \cdot)_{\fk}$ implies that 
the map
\begin{equation}\label{2.47}
i:\fk\longrightarrow \fk^{*},\quad X\mapsto (X,\cdot)_{\fk}
\end{equation} 
is an isomorphism 
onto a subspace of $\fk^{*}$ which we will call
the smooth part of $\fk^{*}.$
Thus $\varphi\in C^{\infty}(K)$ is smooth at $g$
iff $T_{e}^{*}r_{g} d\varphi(g)$ is in
the smooth part of $\fk^{*}$ and we can
define the left gradient of such a function at $g$ by 
\begin{equation}\label{2.48}
\frac{d}{dt}\Big|_{t=0} \varphi(ge^{tX}) = 
(D' \varphi(g), X)_{\fk},\quad X\in \fk.
\end{equation}
For each $g\in K,$ we will denote the collection of all 
smooth functions at $g$ by ${\mathcal F}_{g}(K)$ and we set
${\mathcal F}(K) =\cap_{g\in K} {\mathcal F}_{g}(K).$  With the above considerations, it is easy
to check that ${\mathcal F}(K)$ is non-empty and forms an algebra under ordinary
multiplication of functions.   Moreover, for $\varphi, \psi\in {\mathcal F}(K)$ and $g\in K,$ the expression
\begin{equation}\label{2.49}
\{\varphi, \psi\}_{J}(g) = \frac{1}{2}(J(D\varphi(g)), D\psi(g))_{\fk}-\frac{1}{2} (J(D'\varphi(g)), 
D'\psi(g))_{\fk} .
\end{equation}
defines $\{\varphi, \psi\}_{J}\in {\mathcal F}(K)$ (Proposition 5.3 of \cite{L1}, the proof is identical to that in \cite{LN}, Proposition 3.1)
and hence is a Poisson bracket on ${\mathcal F}(K).$
Hence $(K, \{\cdot, \cdot\}_{J})$ is a  coboundary Poisson Lie group, in the sense of Drinfeld \cite{Dr1}.
In \cite{L1}, we showed that $K_{\text{rat}}$ is a Poisson Lie subgroup of $(K, \{\cdot, \cdot\}_{J}).$  

 In order to state the next result, we will have to use the notion of Poisson group partial actions introduced in \cite{L1}.   We begin by recalling the notion of a coisotropic submanifold of a Poisson manifold \cite{CdSW}.
\begin{definition}\label{D:2.7}  A submanifold $C$ of a Poisson manifold $(M, \pi)$ is coisotropic if for each $m\in C,$ the annihilator
$T_mC^{\perp}\subset T_m^*M$  is isotropic, i.e., 
\begin{equation}\label{2.50}
\pi(m)(T_mC^{\perp}, T_mC^{\perp})=0.
\end{equation}
\end{definition}
\begin{definition}\label{D:2.8}
Let $(M, \pi)$ be a Poisson manifold, and $G$ a Poisson Lie group.  A left partial group action $\Phi: G\ast M\longrightarrow M$
is called a left Poisson group partial action iff
\begin{equation}\label{2.51}
\text{Graph}(\Phi) =\{(g,x,y)\in G\times M\times M\mid (g,x)\in G\ast M, y=\Phi_{g}(x)\}
\end{equation}
is a coisotropic submanifold of the product Poisson manifold $G\times M\times M^{-},$ where $M^{-}$ is
the manifold $M$ equipped with the minus Poisson structure.   In a similar way, we can define right Poisson group partial action.
\end{definition} 
We will make use of the following result.
\begin{theorem}\cite{L1}\label{T:2.9} (a)  The map $R$ in \eqref{2.41} is a Poisson diffeomorphism, when
the Poisson submanifold $\Kr\ast \Kr$ is equipped with the structure induced from $\Kr \times \Kr.$
\newline
(b) The maps
\begin{equation}\label{2.52}
\begin{aligned}
& \xi: \Kr \ast \Kr \longrightarrow \Kr, \,\,(u,v) \mapsto \xi_{u} (v)\\
& \eta: \Kr \ast \Kr\longrightarrow \Kr,\,\, (u,v) \mapsto \eta_{v} (u)\\
\end{aligned}
\end{equation}
are Poisson group partial actions.
\end{theorem}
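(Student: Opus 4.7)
The plan is to reduce both parts to the Poisson Lie group property of $\Kr$ via the standard criterion that a smooth map between Poisson manifolds is Poisson if and only if its graph is coisotropic in the product with the opposite Poisson structure on the target. The only structural input is the fact, cited from Proposition 5.3 of \cite{L1}, that $(\Kr, \pi_J)$ is a Poisson Lie subgroup of $(K, \{\cdot,\cdot\}_J)$; equivalently, the graph $\Gamma_m = \{(u,v,uv)\}$ of multiplication is coisotropic in $\Kr \times \Kr \times \Kr^{-}$.

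For part (a), I would write the graph of $R$ as
\begin{equation*}
\Gamma_R = \{(u,v,\widetilde{u},\widetilde{v}) \in (\Kr \ast \Kr) \times (\Kr \ast \Kr) : uv = \widetilde{v}\widetilde{u}\},
\end{equation*}
and realise it as the preimage of the diagonal $\Delta \subset \Kr \times \Kr$ under the map $\phi: \Kr^4 \to \Kr \times \Kr$ sending $(u,v,\widetilde{u},\widetilde{v})$ to $(uv, \widetilde{v}\widetilde{u})$. Equip $\Kr^4$ with $\pi_J \oplus \pi_J \oplus (-\pi_J) \oplus (-\pi_J)$ and the target with $\pi_J \oplus (-\pi_J)$. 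Multiplication is Poisson on the first pair of factors by the Poisson Lie property, and after applying the swap (a Poisson isomorphism) to the second pair, multiplication is Poisson from $(-\pi_J) \oplus (-\pi_J)$ to $-\pi_J$; hence $\phi$ is Poisson. The diagonal $\Delta$ is Lagrangian in $\Kr \times \Kr^{-}$, hence coisotropic, and $\phi$ is a submersion (the variation $du \mapsto (du)v$ already surjects onto $T_{uv}\Kr$ at fixed $v$), so the preimage is a clean intersection. It follows that $\phi^{-1}(\Delta)$ is coisotropic; restricting to the open locus $(\Kr\ast\Kr)^2$ yields coisotropy of $\Gamma_R$, and together with the diffeomorphism property of $R$ from Corollary 2.6, this establishes part (a).

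For part (b), the result then follows immediately by composition: $\xi = \mathrm{pr}_2 \circ R$ and $\eta = \mathrm{pr}_1 \circ R$, where $\mathrm{pr}_1, \mathrm{pr}_2: \Kr \times \Kr \to \Kr$ are the canonical projections, which are Poisson by construction of the product structure. Consequently $\xi$ and $\eta$ are Poisson maps on the open Poisson submanifold $\Kr\ast\Kr$, and unwinding Definition 2.8, this is exactly the statement that $\text{Graph}(\xi)$ and $\text{Graph}(\eta)$ are coisotropic in $\Kr \times \Kr \times \Kr^{-}$, so both partial actions are Poisson group partial actions.

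The main obstacle is carrying out this coisotropic calculus rigorously in the setting of the loop group $K$, where the Poisson bracket \eqref{2.49} is defined only on the restricted algebra $\mathcal{F}(K)$ of functions smooth in the sense of \eqref{2.46}: both the Poisson property of $\phi$ and the pull-back of coisotropy must be phrased through this algebra rather than by appeal to standard finite-dimensional differential geometry. A secondary technical point is to verify that the preimage $\phi^{-1}(\Delta)$ and the graphs of $\xi, \eta$ genuinely lie in the loci of disjoint divisor supports, so that Theorem 2.5 applies uniformly and one obtains a genuine smooth submanifold structure throughout the argument.
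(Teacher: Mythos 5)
The paper does not actually prove Theorem 2.9 --- it is imported from \cite{L1} --- so the only question is whether your argument stands on its own, and part (a) has a genuine gap at the last step. The preimage $\phi^{-1}(\Delta)$ consists of \emph{all} quadruples $(u,v,\widetilde{u},\widetilde{v})$ with $uv=\widetilde{v}\widetilde{u}$; its fibre over a fixed $(u,v)$ is an entire copy of $\Kr$ (take $\widetilde{u}\in\Kr$ arbitrary and $\widetilde{v}=uv\widetilde{u}^{-1}$), whereas $\Gamma_R$ selects, for each $(u,v)$, the single refactorization with $(\widetilde{u})=(u)$ and $(\widetilde{v})=(v)$. Intersecting with $(\Kr\ast\Kr)\times(\Kr\ast\Kr)$ still leaves every other admissible splitting of the divisor $(u)+(v)$ --- the point $(u,v,v,u)$ always survives, for instance --- and, more importantly, $\Gamma_R$ sits inside $\phi^{-1}(\Delta)$ with positive codimension, not as an open subset. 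Coisotropy does not descend to proper submanifolds: the conormal bundle of $\Gamma_R$ strictly contains that of $\phi^{-1}(\Delta)$, and vanishing of the Poisson tensor on the smaller conormal says nothing about the larger one. A symptom that the step must fail is that your argument uses nothing about $R$ beyond the identity $m\circ s\circ R=m$, so it would equally ``prove'' that any refactorization rule $(u,v)\mapsto(\widetilde{u},\,uv\widetilde{u}^{-1})$ with $\widetilde{u}=f(u,v)$ an arbitrary smooth $\Kr$-valued function is Poisson, which is false. To make a preimage argument work one would have to localize to divisor strata, where $m\colon\Kr(D)\times\Kr(D')\to\Kr(D+D')$ becomes a bijection by Theorem 2.5(a) and the preimage of the diagonal really does coincide with $\Gamma_R$; but that requires knowing that the strata $\Kr(D)$ (equivalently, the relevant dressing orbits) are Poisson submanifolds of $(\Kr,\{\cdot,\cdot\}_J)$, which is precisely the nontrivial input you have not supplied, and in its absence the result has to be established by direct computation with the bracket \eqref{2.49} and the left/right gradients, in the spirit of the calculation carried out in Proposition 5.10 of this paper.

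Your deduction of (b) from (a) is sound: the projections from a product Poisson manifold are Poisson, Definition 2.8 asks exactly that the graph of the partial action be coisotropic in $G\times M\times M^{-}$ --- equivalently that the action map be Poisson for the product structure on the open set $G\ast M$ --- and $\xi=\mathrm{pr}_2\circ R$, $\eta=\mathrm{pr}_1\circ R$. But since this rests entirely on part (a), the gap identified above is the one thing that must be repaired.
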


\subsection{Dirac submanifolds, Poisson involutions, and Dirac reduction}

It is well-known that the pullback of a symplectic form to a submanifold is closed, but not necessarily nondegenerate.    In the case when the pullback is nondegenerate,
the submanifold is known as a symplectic submanifold.    In the Poisson category, there is a natural generalization of the notion of symplectic submanifolds.
For our purpose here, we will make use of the notion of Dirac submanifolds introduced in \cite{X}.
In order to define this notion, let us begin by recalling the concept of Lie algebroids.
\begin{definition}\label{D:2.10}  (a)  A Lie algebroid over a smooth manifold $M$ is a smooth vector bundle $A\longrightarrow M$ equipped with a Lie bracket $[\cdot, \cdot]$ on the set  $\Gamma(A)$ of smooth
sections of $A$ and a base-preserving bundle map $\rho: A\longrightarrow TM$ (called the
anchor map) such that 
\begin{equation}\label{2.53}
\begin{aligned}
& \rho([\xi, \eta]) = [\rho(\xi), \rho(\eta)],\\
& [\xi, f\eta] = f[\xi, \eta] + \rho(\xi)(f)\eta\\
\end{aligned}
\end{equation}
for all $\xi, \eta\in \Gamma(A)$ and for all $f\in C^{\infty}(M).$
\newline
(b)  Let $A\longrightarrow M$ be a Lie algebroid with anchor map $\rho$ and $A^{\prime}\subset A$ a vector subbundle
along a submanifold $N\subset M.$   Then $A^{\prime}\longrightarrow N$ is a Lie subalgebroid of $A$ iff the following conditions
are satisfied:
\newline
(i) if  $s_1, s_2\in \Gamma(A)$ restrict to $N$ give sections of $A^{\prime},$ then so is $[s_1\mid N, s_2\mid N].$
\newline
(ii) $\rho(A^{\prime}) \subset TN.$
\end{definition}
\begin{exmp}\label{E:2.11} (a)  A Lie algebra is a Lie algebroid over a point.
\newline
(b) Let $M$ be a smooth manifold, then the tangent bundle $TM\longrightarrow M$ is a Lie algebroid
where the Lie bracket on $\Gamma(TM)$ is the usual bracket of vector fields on $M$ and
the anchor map is the identity map $id_{TM}$ on $TM.$
\newline
(c)  Let $(M, \pi)$ be a Poisson manifold, and let $\pi^{\#}: T^*M\longrightarrow TM$ be the bundle map corresponding
to the Poisson bivector field $\pi.$  Then the \emph{cotangent Lie algebroid} \cite{Wein, F} is
the cotangent bundle $T^*M$ with anchor map given by $\pi^{\#}$ and whose space of sections $\Gamma(T^*M)$ is equipped with the Lie bracket
\begin{equation}\label{2.54}
[s_1, s_2] = L_{\pi^{\#}(s_1)} s_2 -L_{\pi^{\#}(s_2)} s_1 -d[\pi (s_1, s_2)],\,\, s_1, s_2\in \Gamma(T^*M).
\end{equation}
\end{exmp}
For a readable account on Lie algebroids including the notion of cotangent Lie algebroid, we refer the reader to \cite{CdSW}.
 \begin{definition}\label{D: 2.12}
Let $(M, \pi)$ be a Poisson manifold.   A submanifold $N$ of $M$ is a Dirac submanifold iff there exists a Whitney sum decomposition
\begin{equation}\label{2.55}
T_{N} M = TN \oplus V_{N},
\end{equation}
where $V_{N}^{\perp}$ is a Lie subalgebroid of the cotangent Lie algebroid $T^*M.$
\end{definition}

If $N$ is a Dirac submanifold of $(M, \pi),$ then necessarily $N$ carries a natural Poisson structure $\pi_{N}$ whose symplectic leaves are
given by the intersection of $N$ with the symplectic leaves of $M.$   Indeed, $\pi^{\#}_{N}: T^{*}N\longrightarrow TN$ is just the anchor
map of the Lie subalgebroid $T^{*}N\simeq V_{N}^{\perp}$  of $T^{*}M.$   Moreover, from the knowledge of the injective Lie algebroid
morphism $T^{*} N\longrightarrow T^{*}M,$ it is easy to show that \cite{X}
\begin{equation}\label{2.56}
\pi^{\#}_{N} = pr\circ \pi^{\#}\circ pr^{*},
\end{equation}
where $pr: T_{N} M\longrightarrow TN$ is the projection map induced by the decomposition in \eqref{2.55}, and $pr^{*}$ is its dual.   Note that when the
Poisson manifold is symplectic, its Dirac submanifolds are precisely its symplectic submanifolds.

The following result gives an important class of Dirac submanifolds which we will use in this work.

\begin{theorem}[\cite{X}] \label{T: 2.13}
Let  $\mu:M\longrightarrow M$ be a Poisson involution, i.e., an involution which is also a Poisson map.  Then it stable locus
$N= M^{\mu}$ is a Dirac submanifold of $M$ with $V_{N} = \bigcup_{x\in N} \text{ker}\, (T_{x} \mu +1).$
\end{theorem}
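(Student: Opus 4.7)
The plan is to directly verify the conditions in Definition \ref{D: 2.12}: the Whitney-sum decomposition at the bundle level together with $V_N^\perp$ being a Lie subalgebroid of $T^*M$ in the sense of Definition \ref{D:2.10}(b). First I would observe that, since $\mu$ is a smooth involution, at each $x \in N$ the differential $T_x\mu$ is a linear involution of $T_xM$, giving the eigenspace decomposition $T_xM = T_xN \oplus V_{N,x}$ with $V_{N,x} = \ker(T_x\mu + 1)$; a standard local linearization near a fixed point shows that $N$ is a smooth submanifold with $T_xN$ equal to the $+1$-eigenspace, and that the two eigenspace dimensions are locally constant, yielding the global smooth Whitney sum $T_NM = TN \oplus V_N$. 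Under the induced duality $V_N^\perp$ is identified with $T^*N$.

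Next I would verify the anchor condition $\pi^\#(V_N^\perp) \subset TN$. The Poisson property $(T_x\mu \otimes T_x\mu)(\pi_x) = \pi_{\mu(x)}$, restricted to $x \in N$, is equivalent to $T_x\mu \circ \pi^\#_x = \pi^\#_x \circ (T_x\mu)^*$. Since $V_{N,x}^\perp$ is precisely the $+1$-eigenspace of $(T_x\mu)^*$ (it is the annihilator of the $-1$-eigenspace of $T_x\mu$), any $\alpha \in V_{N,x}^\perp$ satisfies $T_x\mu(\pi^\#(\alpha)) = \pi^\#(\alpha)$, placing $\pi^\#(\alpha)$ in $T_xN$.

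For the bracket condition I would exploit that $\mu^*$ is a Lie algebroid automorphism of the cotangent Lie algebroid, which is the section-level reformulation of the Poisson property of $\mu$ applied to the bracket \eqref{2.54}. Writing $\Gamma^\pm$ for the $\pm 1$-eigenspaces of $\mu^*$ acting on $\Gamma(T^*M)$, every 1-form decomposes as $s = s^+ + s^-$ with $s^\pm = \tfrac{1}{2}(s \pm \mu^* s) \in \Gamma^\pm$, and at $x \in N$ the pieces $s^+(x), s^-(x)$ are precisely the components of $s(x)$ in $V_{N,x}^\perp$ and $(T_xN)^\perp$; thus $s|_N \in \Gamma(V_N^\perp)$ is equivalent to $s^-|_N \equiv 0$. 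Since $\mu^*$ preserves the bracket, $[\Gamma^+, \Gamma^+] \subset \Gamma^+$, $[\Gamma^-, \Gamma^-] \subset \Gamma^+$, and $[\Gamma^\pm, \Gamma^\mp] \subset \Gamma^-$. For $s_1, s_2$ satisfying $s_i^-|_N \equiv 0$, the same-sign brackets lie in $\Gamma^+$, so their restrictions to $N$ land in $V_N^\perp$ automatically. For a cross term, with $a \in \Gamma^+$ and $b \in \Gamma^-$ satisfying $b|_N \equiv 0$, I would apply $[a,b] = L_{\pi^\#(a)}b - L_{\pi^\#(b)}a - d(\pi(a,b))$ at $x \in N$: the vector $\pi^\#(a)$ is tangent to $N$ along $N$ by the anchor step, so its flow $\phi_t$ preserves $N$, and since $b$ vanishes on $N$ the pullbacks $\phi_t^* b$ vanish at $x$ for all small $t$, giving $L_{\pi^\#(a)}b|_x = 0$; meanwhile $\pi^\#(b)$ vanishes at $x$, so Cartan's formula reduces $L_{\pi^\#(b)}a|_x$ to $d(a(\pi^\#(b)))|_x = -d(\pi(a,b))|_x$, which cancels the final term. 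Hence $[s_1, s_2]|_N \in \Gamma(V_N^\perp)$, finishing the verification.

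The main obstacle is the bracket condition: the cotangent Lie algebroid bracket is not $C^\infty(M)$-bilinear in its arguments, so it cannot be reduced to pointwise data on sections, and a naive approach of extending restricted covectors fails. The resolution is the $\mathbb{Z}/2$-grading induced by $\mu^*$ on $\Gamma(T^*M)$, which splits the bracket cleanly into pieces automatically in $\Gamma^+$ and mixed terms that vanish on $N$ by the explicit cancellation in the formula above.
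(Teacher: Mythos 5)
This theorem is stated in the paper as a quoted result from \cite{X}, with no proof supplied in the text, so there is no internal argument to compare against; I can only assess your proof on its own terms, and it is correct. Your three steps are exactly the right ones for a direct verification of Definition \ref{D: 2.12}: (i) Bochner-type linearization of the involution to get smoothness of $N=M^{\mu}$ and the eigenspace splitting $T_NM=TN\oplus V_N$ with $V_{N,x}=\ker(T_x\mu+1)$; (ii) the anchor condition, which follows from $T_x\mu\circ\pi^{\#}_x=\pi^{\#}_x\circ(T_x\mu)^{*}$ at fixed points together with the identification of $V_{N,x}^{\perp}$ with the $+1$-eigenspace of $(T_x\mu)^{*}$; (iii) the closure of $\Gamma(V_N^{\perp})$ under the Koszul bracket, which you obtain from the $\mathbb{Z}/2$-grading induced by the algebroid automorphism $\mu^{*}$. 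The one place a reader must check signs is the cross-term cancellation: with the convention $\langle\pi^{\#}\alpha,\beta\rangle=\pi(\alpha,\beta)$ (the one under which \eqref{2.54} gives $[df,dg]=d\{f,g\}$) one gets $L_{\pi^{\#}(b)}a\big|_x=d\bigl(a(\pi^{\#}b)\bigr)\big|_x=-d\bigl(\pi(a,b)\bigr)\big|_x$ at a zero of $\pi^{\#}(b)$, so the last two terms of the bracket do cancel as you claim. Two minor points worth making explicit: the eigenspace dimensions need only be locally constant, so $N$ may have components of different dimensions (harmless, or restrict to a component); and since condition (i) of Definition \ref{D:2.10}(b) is phrased for ambient sections whose restrictions lie in $V_N^{\perp}$, your computation of $[s_1,s_2]\big|_N$ for arbitrary such extensions is precisely what is required, so the non-tensoriality of the bracket causes no circularity. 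Your diagnosis that the grading is what rescues the non-$C^{\infty}(M)$-bilinearity is exactly the crux of the argument.
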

Since we will be dealing with Poisson maps between Poisson manifolds, the following result is fundamental in reducing such maps.
\begin{theorem}[\cite{L2}] \label{T:2.14}
 Let $\phi: M_1\longrightarrow M_2$ be a Poisson map and let $N_1\subset M_1,$ $N_2 \subset M_2$
be Dirac submanifolds with respective Whitney sum decompositions
\begin{equation}\label{2.57}
T_{N_1} M_1 = TN_1 \oplus V_{N_1}, \quad T_{N_2} M_2 = TN_2 \oplus V_{N_2}.
\end{equation}
Then under the assumptions that
\newline
(i) $\phi(N_1)\subset N_2,$
\newline
(ii) $T_{x}\phi (V_{N_1})_{x} \subset (V_{N_2})_{\phi(x)}, \,\, x\in N_1,$
\newline
then the reduced map $\phi\mid_{N_1} : N_1\longrightarrow N_2$ given by $\phi\mid_{N_1}(x) = \phi(x)$ for $x\in N_1$ is a Poisson map, when $N_1$ and $N_2$ are
equipped with the induced Poisson structures.
\end{theorem}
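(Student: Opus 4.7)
The plan is to establish directly the intertwining identity
\[
T_x f \circ \pi_{N_1}^{\#}(x) \circ (T_x f)^{*} \;=\; \pi_{N_2}^{\#}(\phi(x)), \qquad x\in N_1,
\]
where $f := \phi\mid_{N_1}$; this is equivalent to $f$ being a Poisson map between the induced structures on $N_1$ and $N_2$.

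The first step is to extract a key dual identity from (i) and (ii). Together these assumptions say that $T_x\phi$ respects both summands of the Whitney decompositions \eqref{2.57}: for $w=w_1+w_2$ with $w_1\in T_xN_1$ and $w_2\in (V_{N_1})_x$, by (i) one has $T_x\phi(w_1)\in T_{\phi(x)}N_2$ with $T_x\phi(w_1)=T_xf(w_1)$, while by (ii) $T_x\phi(w_2)\in (V_{N_2})_{\phi(x)}$. Consequently $pr_2\circ T_x\phi = T_xf\circ pr_1$ as maps $T_xM_1\to T_{\phi(x)}N_2$. Dualizing and using that $pr_j^{*}$ realizes $T^{*}N_j$ as the annihilator $V_{N_j}^{\perp}\subset T^{*}_{N_j}M_j$, one obtains
\[
(T_x\phi)^{*}\circ pr_2^{*} \;=\; pr_1^{*}\circ (T_x f)^{*}.
\]

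The second step is to simplify the formula \eqref{2.56} using the Lie subalgebroid hypothesis. Since $T^{*}N_j\cong V_{N_j}^{\perp}$ is a Lie subalgebroid of $T^{*}M_j$, its anchor $\pi_j^{\#}$ sends $V_{N_j}^{\perp}$ into $TN_j$, so the outer $pr_j$ in $\pi_{N_j}^{\#}=pr_j\circ\pi_j^{\#}\circ pr_j^{*}$ acts as the identity. Hence, for $\alpha\in T^{*}_{x}N_j$,
\[
\pi_{N_j}^{\#}(x)\,\alpha \;=\; \pi_j^{\#}(x)(pr_j^{*}\alpha)\;\in\; T_xN_j\subset T_xM_j.
\]

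The third step is a direct chain of equalities. For $\beta\in T^{*}_{\phi(x)}N_2$,
\begin{align*}
T_x f\bigl(\pi_{N_1}^{\#}(x)((T_x f)^{*}\beta)\bigr)
&= T_x\phi\bigl(\pi_1^{\#}(x)(pr_1^{*}(T_x f)^{*}\beta)\bigr)\\
&= T_x\phi\bigl(\pi_1^{\#}(x)((T_x\phi)^{*}pr_2^{*}\beta)\bigr)\\
&= \pi_2^{\#}(\phi(x))(pr_2^{*}\beta) \;=\; \pi_{N_2}^{\#}(\phi(x))\beta,
\end{align*}
where the first equality uses step two together with $T_xf=T_x\phi\mid_{T_xN_1}$, the second uses the dual identity from step one, the third uses that $\phi$ is a Poisson map, and the fourth is step two applied at $N_2$. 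This is precisely the desired intertwining identity.

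The main obstacle is the bookkeeping between the abstract cotangent spaces $T^{*}N_j$ and their concrete realizations as annihilators $V_{N_j}^{\perp}\subset T^{*}_{N_j}M_j$ via the dual maps $pr_j^{*}$, which underpins step one. Both (i) and (ii) enter essentially there: without (ii) we would not know that $pr_2\circ T_x\phi$ factors through $T_xN_1$, and without (i) we could not identify its restriction with $T_xf$. Once this dual identity is secured, the remainder is a formal composition of \eqref{2.56} with the Poisson-map property of $\phi$.
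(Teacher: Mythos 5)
The paper does not prove Theorem~2.14 here; it is quoted from \cite{L2} without proof, so there is no in-text argument to compare against. Your proof is correct and is the natural one: the identity $pr_2\circ T_x\phi=T_xf\circ pr_1$ extracted from (i) and (ii), combined with the observation that the outer projection in $\pi_N^{\#}=pr\circ\pi^{\#}\circ pr^{*}$ is redundant because the anchor of the Lie subalgebroid $V_N^{\perp}$ already lands in $TN$, reduces everything to the bundle-map form of the Poisson property of $\phi$, and each step in your chain of equalities checks out.
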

In \cite{L2}, the map $\phi\mid_{N_1} : N_1\longrightarrow N_2$ is called a {\sl Dirac reduction} of the Poisson
map $\phi: M_1\longrightarrow M_2$ and we will use this terminology here.
In the special case when the Dirac submanifolds in the theorem above are the stable loci of Poisson involutions, we
have the following result.
\begin{coro}\label{C:2.15}
Let $\mu_1: M_1\longrightarrow M_1,$ $\mu_2: M_2\longrightarrow M_2$ be Poisson involutions with
stable loci given by $N_1$ and $N_2$ respectively.  If $\phi: M_1\longrightarrow M_2$ is a Poisson map
which commutes with $\mu_1$ and $\mu_2,$ i.e., $\mu_2\circ \phi = \phi\circ \mu_1.$ then $\phi\mid_{N_1}: N_1\longrightarrow N_2$
is a Poisson map, when $N_1$ and $N_2$ are equipped with the induced structures.
\end{coro}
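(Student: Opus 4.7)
The plan is to deduce this corollary directly from Theorem 2.14 (Dirac reduction) applied to the Poisson map $\phi:M_1\to M_2$ and to the Dirac submanifolds $N_1\subset M_1$, $N_2\subset M_2$. Theorem 2.13 guarantees that the stable loci $N_i=M_i^{\mu_i}$ are Dirac submanifolds, and more importantly, it identifies the complementary bundles explicitly as the $(-1)$-eigenspaces of the tangent maps to the involutions:
\[
(V_{N_i})_x=\ker(T_x\mu_i+\mathrm{id}),\qquad x\in N_i,\ i=1,2.
\]
This concrete description of $V_{N_1}$ and $V_{N_2}$ is what makes verifying the hypotheses (i) and (ii) of Theorem 2.14 a routine check using only the intertwining relation $\mu_2\circ\phi=\phi\circ\mu_1$.

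For hypothesis (i), I take any $x\in N_1$, i.e., $\mu_1(x)=x$, and apply the intertwining relation to obtain $\mu_2(\phi(x))=\phi(\mu_1(x))=\phi(x)$, so $\phi(x)\in N_2$. Thus $\phi(N_1)\subset N_2$. For hypothesis (ii), I differentiate the identity $\mu_2\circ\phi=\phi\circ\mu_1$ at $x\in N_1$ to get $T_{\phi(x)}\mu_2\circ T_x\phi=T_x\phi\circ T_x\mu_1$. Then for any $v\in (V_{N_1})_x$, i.e., $T_x\mu_1(v)=-v$, this yields
\[
T_{\phi(x)}\mu_2\bigl(T_x\phi(v)\bigr)=T_x\phi\bigl(T_x\mu_1(v)\bigr)=-T_x\phi(v),
\]
which says exactly that $T_x\phi(v)\in\ker(T_{\phi(x)}\mu_2+\mathrm{id})=(V_{N_2})_{\phi(x)}$. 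Hence $T_x\phi(V_{N_1})_x\subset(V_{N_2})_{\phi(x)}$ for every $x\in N_1$.

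With conditions (i) and (ii) verified, Theorem 2.14 applies and yields that the reduced map $\phi|_{N_1}:N_1\to N_2$ is a Poisson map, when $N_1$ and $N_2$ are equipped with the induced Dirac Poisson structures $\pi_{N_1}$, $\pi_{N_2}$. There is no genuine obstacle here: the corollary is little more than a convenient repackaging of Theorem 2.14 in the setting where the Dirac submanifolds arise as fixed-point sets of Poisson involutions, and all of the work has already been done in Theorems 2.13 and 2.14. The only point worth emphasizing is that the explicit $(-1)$-eigenspace description of $V_{N_i}$ from Theorem 2.13 is essential for the one-line verification of hypothesis (ii).
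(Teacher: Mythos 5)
Your proof is correct and follows exactly the route the paper intends: the corollary is stated without proof precisely because it is the direct combination of Theorem 2.13 (identifying $V_{N_i}=\ker(T\mu_i+1)$) with Theorem 2.14, and your verification of hypotheses (i) and (ii) from the intertwining relation $\mu_2\circ\phi=\phi\circ\mu_1$ is the intended one-line check.
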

To end this section, we will first of all introduce a piece of notation to unify the description of reduction maps in Sections 3, 4 and 5.
For this purpose, let  $X$ be a non-empty set and consider a bijection  $\psi: X\longrightarrow X.$   Suppose  $B\subset X$ and $\psi(B)=B,$  and
let $\iota: B\longrightarrow X$  be the inclusion map.   Then the map $\psi_{\text{red}}: B\longrightarrow B$
satisfying 
\begin{equation}\label{2.58}
\psi\circ \iota = \iota\circ \psi_{\text{red}}
\end{equation}
 will be called the {\sl reduction} of $\psi$ to $B.$  Finally, we give the definition of reflection maps and parametric reflection maps.
  \begin{definition}\label{D:2.16}   
 Let $X$ be a non-empty set and suppose $\CR: X\times X\longrightarrow X\times X$
 is a Yang-Baxter map. Then $\CB: X\longrightarrow X$ is a reflection map if it satisfies the set-theoretic reflection equation
 \begin{equation}\label{2.59}
 \CB_1 \CR_{21} \CB_2 \CR_{12} = \CR_{21}\CB_2 \CR_{12} \CB_1,
 \end{equation}
 interpreted as an equality on $X\times X.$
\end{definition}
In Section 5 below, we actually have to modify Definition 2.16 a little bit, as the Yang-Baxter map there is only defined on an open submanifold $G\ast G$
of $G\times G$, and the map $B$ is only defined on some open submanifold $G^{\prime}$ of $G,$ where $G$ is a Poisson Lie group.   But the formula above does
give the correct form of the set-theoretic reflection equation. 
 \begin{definition}\label{D:2.12}
 Let $\mathcal{R}(k_1, k_2):X\times X\longrightarrow X\times X$ be a parametric Yang-Baxter map, where $k_1,k_2$ belong to
 some parameter space $\Lambda,$ and let $\rho: \Lambda\to \Lambda$ be an involution.   Then $\CB(k): X\to X,$ $k\in \Lambda,$ is called a parametric reflection 
 map if it satisfies the parametric set-theoretic reflection equation (cf. (4.13), \cite{CZ2})
 \begin{equation}\label{2.60}
 \begin{aligned} 
 & \CB_1(k_1) \CR_{21}(\rho(k_2), k_1) \CB_2(k_2) \CR_{12}(k_1,k_2)\\
 = & \CR_{21}(\rho(k_2), \rho(k_1))\CB_2(k_2)R_{12}(\rho(k_1), k_2) \CB_1(k_1),\\
 \end{aligned}
 \end{equation}
 interpreted as an equality on $X\times X.$
 \end{definition}
 As the reader will see, in \eqref{3.26} below, we  have in fact a generalization of the above form of the parametric reflection equation which 
 we call the \emph{generalized parametric reflection equation} there.  Of course, \eqref{3.26} comes from a parametric reflection equation which involves
  $R(\alpha_1,\alpha_2)$  (see \eqref{2.13}) and $B(\alpha): P(n)\longrightarrow P(n)$ where $B(\alpha)(P) = B^k(\alpha)(P)$
  for $P\in P(n)_k$ for $1\leq k\leq n-1.$   We will leave the formulation of the abstract definition of the generalized
  parametric equation to the reader.

\section{Reflection maps at the level of projectors}

We begin by introducing a map
\begin{equation}\label{3.1}
U: \mathbb{C}\setminus \mathbb{R}\longrightarrow U(n)
\end{equation}
satisfying the property that
\begin{equation}\label{3.2}
U(\tau(\alpha)) = U(\alpha)^*,
\end{equation}
where $\tau$ is the involution defined by
\begin{equation}\label{3.3}
\tau:\mathbb{C}\longrightarrow \mathbb{C}: z\mapsto -\overline{z}.
\end{equation}
For $\alpha\in \mathbb{C}\setminus (\mathbb{R}\cup \sqrt{-1}\, \mathbb{R}),$ $P\in P(n),$  define a map $\sigma$ on the set of simple elements of $K_{\text{rat}}$ by
the formula
\begin{equation}\label{3.4}
\sigma(g_{\alpha, P})(z) = U(\alpha) (g^{*}_{\alpha, P}\circ\tau)(z)\, U(\alpha)^* = g_{-\overline{\alpha}, U(\alpha)PU(\alpha)^*}(z).
\end{equation}
What we would like to do is to extend $\sigma$ to a map on the entire loop group $K_{\text{rat}}$ by requiring it to be a Lie group anti-morphism, which is
the case if $U(\alpha)=I$ and we define  $\sigma(g)(z) = g^{*}(-z)$ for all $g\in K_{\text{rat}}.$   As we
will see, this is not always possible.   Since what we are doing here is motivated by the study of the $n$-Manakov system on the half-line with
$\mathbb{C}^n$-valued solution $\bm{q},$ we must
at least include the two kinds of $U(\alpha)$ which arise in \cite{CZ2}.  To recall, we have the following:
\newline
(a)  the first kind of $U(\alpha)$ is given by
\begin{equation}\label{3.5}
U(\alpha) = m(\alpha)= \frac{h(\alpha)}{|h(\alpha)|} I, \quad h(\alpha) = \frac{\alpha -i\beta}{\alpha + i\beta}, \,\, \beta\in \mathbb{R},
\end{equation}
which corresponds to imposing Robin condition on $\bm{q}$ at $x=0,$
\newline
(b) the second kind of $U(\alpha)$ consists of matrices of the form
\begin{equation}\label{3.6}
U(\alpha) = I_{S} = \text{diag}(d_1,\cdots, d_n), \,\,\text{where}\,\, d_i = \begin{cases}  1 & \,\text{if}\,\, i\in S\subset \{1,\cdots, n\},\\
-1 & \,\text{if}\,\, i\notin S,
 \end{cases}
\end{equation}
which corresponds to imposing Dirichlet condition on those components $q_i$ of the solution $\bm{q}$ with $i\in S$, and
Neumann condition on those $q_i$ with $i\notin S.$

Note, however, that in the case where $U(\alpha) = m(\alpha),$ we have $\sigma(g_{\alpha, P}) = g_{-\overline{\alpha}, P}$ and since this commutes
with $g_{\alpha, P},$ it follows that the corresponding parametric Yang-Baxter map is just the identity map.   So this case is not
interesting and for this reason, we are not going to deal with this case.    Now if we want the extension of $\sigma$ in \eqref{3.4} above to be a 
Lie-group anti-morphism, we have to make the definition
\begin{equation}\label{3.7}
\sigma(g_{\alpha_1, P_1} g_{\alpha_2, P_2}) =\sigma(g_{\alpha_2, P_2})\sigma(g_{\alpha_1, P_1}).
\end{equation}
And in order for this to be well-defined, we have to check that the result is independent of how we factorize the group
element $g = g_{\alpha_1, P_1} g_{\alpha_2, P_2}.$
So let us suppose $g_{\alpha_1, P_1} g_{\alpha_2, P_2}= g_{\alpha_2, \widetilde{P}_2} g_{\alpha_1, \widetilde{P}_1}.$ 
Then by a direct calculation using \eqref{3.4}, it is easy to check that the condition 
$\sigma(g_{\alpha_2, P_2})\sigma(g_{\alpha_1, P_1}) = \sigma(g_{\alpha_1, \widetilde{P}_1}) \sigma(g_{\alpha_2, \widetilde{P}_2})$
is not satisfied in general;  sufficient conditions which guarantee its validity are given by $U(\alpha) = m(\alpha)$ or 
$U(\alpha)$ is independent of $\alpha.$    As the former case in not of interest, we
will henceforth assume that $U$ is a constant map and by abuse of notation, we take $U$ to be a constant Hermitian
matrix in $U(n).$   It is easy to show that a Hermitian matrix $U\in U(n)$ is of the form $U = 2\Pi-I,$ where $\Pi\in P(n).$  Alternatively,
$U= V I_{S} V^*,$ where $S\subset \{1,\cdots, n\},$ and $V\in U(n).$  (This latter form of $U$ was used in \cite{CZ1} when the
authors derived integrable boundary conditions for the $n$-Manakov system.) Thus we have 
\begin{equation}\label{3.8}
\sigma(g_{\alpha, P})(z) = U (g^{*}_{\alpha, P}\circ\tau)(z)\, U^*= g_{-\overline{\alpha}, UPU^*}(z),
\end{equation}
and we can extend this to a Lie group anti-morphism of $K_{\text{rat}}$ by using the fact that $K_{\text{rat}}$ is generated by
the simple elements.  In what follows, we will denote the extension also by the same symbol $\sigma$ and we have
the general formula
\begin{equation}\label{3.9}
\sigma(g)(z) = U g^{*}(-\overline{z}) U^*, \quad g\in K_{\text{rat}}.
\end{equation}
From this, we find
\begin{equation}\label{3.10}
\begin{aligned}
\sigma^2(g) = & U(\sigma(g)^*\circ \tau) U^*\\
= & U(U((g\circ \tau)\circ \tau) U^*)U^* = g\\
\end{aligned}
\end{equation}
and therefore the map $\sigma$ is an involution.   Note that this map $\sigma$ is not used to impose a reduction in the sense of \cite{Mi}; indeed, $\sigma$
is a Lie group anti-morphism and therefore the fixed point set of $\sigma$ is not a subgroup.    In the next proposition, the reader will see the role played
by this map.
\begin{prop}\label{P: 3.1}
Given $\alpha\in \mathbb{C}\setminus (\mathbb{R}\cup \sqrt{-1}\,\mathbb{R}),$ $P\in P(n)_k,$ the refactorization problem
\begin{equation}\label{3.11}
\sigma(g_{\alpha, P})g_{\alpha, P} \,\ = g_{\alpha, \widetilde{P}_2} g_{-\overline{\alpha}, \widetilde{P}_1} 
\end{equation}
has a unique solution. Moreover, we have
\begin{equation}\label{3.12}
\widetilde{P}_2 = \phi P \phi^{-1}, \,\, \phi = 2\alpha I + (\overline{\alpha}-\alpha)(P + UPU^*),
\end{equation}
and
\begin{equation}\label{3.13}
\widetilde{P}_1 = U \widetilde{P}_2 U^*.
\end{equation}
\end{prop}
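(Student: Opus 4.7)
My plan is to read off the statement as a very slight repackaging of Theorem 2.2(a), together with a short invariance check involving $U$. The first move is to rewrite the left-hand side of \eqref{3.11} using the explicit formula \eqref{3.8}, namely $\sigma(g_{\alpha,P})=g_{-\overline{\alpha},\,UPU^{*}}$. Thus \eqref{3.11} becomes a standard refactorization problem
\begin{equation*}
g_{-\overline{\alpha},\,UPU^{*}}\;g_{\alpha,P}\;=\;g_{\alpha,\widetilde P_{2}}\;g_{-\overline{\alpha},\widetilde P_{1}}
\end{equation*}
with parameter pair $(\alpha_{1},\alpha_{2})=(-\overline{\alpha},\alpha)$ and projector pair $(P_{1},P_{2})=(UPU^{*},P)$.

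Next I would verify the non-resonance hypothesis \eqref{2.10}, i.e.\ $\{-\overline{\alpha},-\alpha\}\cap\{\alpha,\overline{\alpha}\}=\emptyset$. Equality of any of these four pairs forces either $\alpha\in\mathbb{R}$ (when $\alpha=-\overline{\alpha}$ or $\overline{\alpha}=-\alpha$ conflict) or $\alpha\in\sqrt{-1}\,\mathbb{R}$ (when $\alpha=-\alpha$ or $\overline{\alpha}=-\overline{\alpha}$, which really appears as $\alpha=-\overline{\alpha}$, etc.); the hypothesis $\alpha\in\mathbb{C}\setminus(\mathbb{R}\cup\sqrt{-1}\,\mathbb{R})$ is precisely what rules these out. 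Hence Theorem 2.2(a) applies: existence and uniqueness are immediate, and the explicit formula \eqref{2.12} gives
\begin{equation*}
\phi=(\alpha-(-\alpha))\,I+(\overline{\alpha}-\alpha)\,P+(-\alpha-(-\overline{\alpha}))\,UPU^{*}=2\alpha I+(\overline{\alpha}-\alpha)(P+UPU^{*}),
\end{equation*}
which is \eqref{3.12}, together with $\widetilde P_{2}=\phi P\phi^{-1}$ and $\widetilde P_{1}=\phi (UPU^{*})\phi^{-1}$.

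The only remaining point is the compatibility relation \eqref{3.13}. For this I would observe that $\phi$ is $U$-invariant: using $U^{2}=I$,
\begin{equation*}
U\phi U^{*}=2\alpha I+(\overline{\alpha}-\alpha)\bigl(UPU^{*}+U^{2}PU^{*2}\bigr)=2\alpha I+(\overline{\alpha}-\alpha)(UPU^{*}+P)=\phi .
\end{equation*}
Therefore $U\widetilde P_{2}U^{*}=U\phi P\phi^{-1}U^{*}=(U\phi U^{*})\,(UPU^{*})\,(U\phi U^{*})^{-1}=\phi(UPU^{*})\phi^{-1}=\widetilde P_{1}$, which is \eqref{3.13}.

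I do not expect any genuine obstacle: the non-resonance check is where one must be careful to see exactly why the hypothesis excludes both $\mathbb{R}$ and $\sqrt{-1}\,\mathbb{R}$ (rather than just $\mathbb{R}$), and the identity $U\phi U^{*}=\phi$ is the small structural miracle that makes the two sides of \eqref{3.11} produce a genuinely $\sigma$-compatible pair of projectors. Everything else is direct substitution into Theorem 2.2(a).
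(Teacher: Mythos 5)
Your proposal is correct and follows essentially the same route as the paper: verify the non-resonance condition $\{-\overline{\alpha},-\alpha\}\cap\{\alpha,\overline{\alpha}\}=\emptyset$, invoke Theorem~\ref{T:2.2}(a) with $(\alpha_1,P_1)=(-\overline{\alpha},UPU^{*})$ and $(\alpha_2,P_2)=(\alpha,P)$, and deduce \eqref{3.13} from the commutation $U\phi U^{*}=\phi$. (Two minor notes: the paper also records a second derivation of \eqref{3.13} via the $\sigma$-invariance of the left-hand side of \eqref{3.11} and uniqueness of refactorization, and in your case analysis it is $\alpha=-\overline{\alpha}$, equivalently $\overline{\alpha}=-\alpha$, that forces $\alpha\in\sqrt{-1}\,\mathbb{R}$ while $\alpha=-\alpha$ forces $\alpha=0$ — the attributions are swapped, but the conclusion is unaffected.)
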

\begin{proof}
First of all, since $\alpha$ is neither on the real line nor on the imaginary axis, it follows that $\{\alpha, \overline{\alpha}\}\cap \{-\alpha, -\overline{\alpha}\}=\emptyset.$
Therefore, it follows from Theorem 2.2 that the refactorization problem is guaranteed to have a unique solution for $(\widetilde{P}_1, \widetilde{P}_2).$   Moreover,
the formulas for $\widetilde{P}_2$ and $\phi$ in \eqref{3.12} follow immediately from \eqref{2.11} and \eqref{2.12}.

Now, observe that
\begin{equation}\label{3.14}
\phi U = 2\alpha U + (\overline{\alpha}-\alpha) (PU + UP) = U\phi
\end{equation}
and therefore we also have $U\phi^{-1} = \phi^{-1}U.$   From this, we find that
\begin{equation}\label{3.15}
\widetilde{P}_1 = \phi UPU^* \phi^{-1} = U\phi P\phi^{-1} U^* = U \widetilde{P}_2 U^*,
\end{equation}
as asserted.    Note that the same conclusion can also be obtained from the invariance of the left hand side of \eqref{3.11} under $\sigma$
from which it follows that 
\begin{equation} \label{3.16}
g_{\alpha, \widetilde{P}_2} g_{-\overline{\alpha}, \widetilde{P}_1} = \sigma(g_{-\overline{\alpha}, \widetilde{P}_1})\sigma(g_{{\alpha}, \widetilde{P}_2}).
\end{equation}
Thus it follows from the uniqueness of solution of the refactorization problem (Theorem 2.2) that
$g_{-\overline{\alpha}, \widetilde{P}_1} = \sigma(g_{\alpha, \widetilde{P}_2})$ and this gives \eqref{3.13}.
\end{proof}

In view of the special relation between $P_1 = U P U^*$ and $P_2 = P$ in the above refactorization problem in \eqref{3.11}, we introduce the map
\begin{equation}\label{3.17}
c_{U,k} : P(n)_k \longrightarrow P(n)_k,  P\mapsto U P U^*
\end{equation}
and we would like to restrict $R^{k,k}(\tau(\alpha), \alpha)$ (see \eqref{2.18}) to the graph of $c_{U,k},$ which we define to be 
\begin{equation}\label{3.18}
 \mathcal{G}_{U,k}=\{ (c_{U,k}(P), P)\mid P\in P(n)_k \}.
 \end{equation}
 Indeed, if
we let $\iota_{k}: \mathcal{G}_{U,k}\longrightarrow P(n)_k\times P(n)_k$ be the inclusion map, then Proposition 3.1 shows that
we have the reduced diffeomorphism
\begin{equation}\label{3.19}
\begin{aligned}
R^{k,k}_{\text{red}}( \tau(\alpha), \alpha) : \mathcal{G}_{U,k}\longrightarrow \mathcal{G}_{U,k}: ( c_{U,k}(P), P)\mapsto ( c_{U,k}(\widetilde{P}_2), \widetilde{P}_2),
\end{aligned}
\end{equation}
satisfying the relation  $\iota_{k}\circ R^{k,k}_{\text{red}}(\tau(\alpha), \alpha) = R^{k,k}(\tau(\alpha), \alpha)\circ \iota_{k},$  where $\widetilde{P}_2$ is
given by \eqref{3.12}.   Now we introduce
the pre-symplectic form
\begin{equation}\label{3.20}
\omega^k = \iota_{k}^{*}(\omega_{E_k} \oplus \omega_{E_k})
\end{equation}
on $\mathcal{G}_{U,k},$ where we recall that $\omega_{E_k}$ is defined in \eqref{2.8}.
Note that due to the relation $\text{Im}\, \alpha = \text{Im}\, (-\overline{\alpha}),$  we can simply drop the common factor $-2\text{Im}\, \alpha$
in the expression for $\omega_{-\overline{\alpha}, \alpha}$ and consider $\omega_{E_k} \oplus \omega_{E_k}$ in \eqref{3.20} above.
In our next result, we will show that $\omega^k$ is nondegenerate on $\mathcal{G}_{U,k}.$
\begin{prop}\label{P: 3.2}
The $2$-form $\omega^k$ is nondegenerate on $\mathcal{G}_{U,k}$ so that 
$(\mathcal{G}_{U,k}, \omega^k)$ is a symplectic submanifold of  $(P(n)_k\times P(n)_k, \omega_{E_k} \oplus \omega_{E_k}).$
Hence  $R^{k,k}_{\text{red}}(\tau(\alpha), \alpha)$ is a symplectomorphism when the domain and codomain are equipped with the symplectic
form $\omega^k .$
\end{prop}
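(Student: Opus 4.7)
The plan is to parametrize $\mathcal{G}_{U,k}$ by $P(n)_k$ via the obvious diffeomorphism and reduce the claim to the nondegeneracy of the Kirillov--Kostant--Souriau form on the coadjoint orbit $P(n)_k$.

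First I would use the bijection $\Psi:P(n)_k\longrightarrow \mathcal{G}_{U,k}$, $P\mapsto (UPU^*,P)$, which is a diffeomorphism since $c_{U,k}$ is smooth. A tangent vector to $\mathcal{G}_{U,k}$ at the point $(UPU^*,P)$ has the form $T_P\Psi([X,P]) = ([UXU^*,UPU^*],[X,P])$ for some $X\in u(n)$, because every tangent vector to $P(n)_k$ at $P$ can be written as $[X,P]$ by \eqref{2.8} and the coadjoint orbit structure. Then, pulling back by $\iota_k\circ\Psi$ and using \eqref{2.8} together with the unitarity of $U$,
\begin{equation*}
\Psi^{*}\omega^{k}(P)([X,P],[Y,P]) = \sqrt{-1}\,\mathrm{tr}\,(UPU^{*}[UXU^{*},UYU^{*}]) + \sqrt{-1}\,\mathrm{tr}\,P[X,Y] = 2\sqrt{-1}\,\mathrm{tr}\,P[X,Y],
\end{equation*}
since $UPU^{*}[UXU^{*},UYU^{*}] = U(P[X,Y])U^{*}$ and trace is conjugation invariant. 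Hence $\Psi^{*}\omega^{k} = 2\,\omega_{E_k}$.

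Next, since $\omega_{E_k}$ is the standard symplectic form on the coadjoint orbit $P(n)_k$ and is therefore nondegenerate, $\Psi^{*}\omega^{k} = 2\,\omega_{E_k}$ is nondegenerate on $P(n)_k$; pulling this conclusion back through the diffeomorphism $\Psi$ shows that $\omega^{k}$ is nondegenerate on $\mathcal{G}_{U,k}$. This proves that $(\mathcal{G}_{U,k},\omega^{k})$ is a symplectic submanifold of $(P(n)_k\times P(n)_k,\omega_{E_k}\oplus\omega_{E_k})$.

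For the symplectomorphism statement, I would exploit the identity $\mathrm{Im}\,\tau(\alpha) = \mathrm{Im}(-\overline{\alpha}) = \mathrm{Im}\,\alpha$, which makes the parametric symplectic form of Theorem \ref{T:2.2}(c) reduce to a scalar multiple of the product form:
\begin{equation*}
\omega^{\prime}_{\tau(\alpha),\alpha} = (-2\,\mathrm{Im}\,\alpha)\bigl(\omega_{E_k}\oplus \omega_{E_k}\bigr).
\end{equation*}
By Theorem \ref{T:2.2}(c), $R^{k,k}(\tau(\alpha),\alpha)$ is a symplectomorphism of $(P(n)_k\times P(n)_k,\omega^{\prime}_{\tau(\alpha),\alpha})$, hence also of the product $(P(n)_k\times P(n)_k,\omega_{E_k}\oplus \omega_{E_k})$ after dropping the nonzero common factor. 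Proposition \ref{P: 3.1} guarantees that $R^{k,k}(\tau(\alpha),\alpha)$ maps $\mathcal{G}_{U,k}$ into itself (with $\widetilde P_1 = U\widetilde P_2 U^{*}$), so applying $\iota_k^{*}$ to the identity $(R^{k,k}(\tau(\alpha),\alpha))^{*}(\omega_{E_k}\oplus\omega_{E_k}) = \omega_{E_k}\oplus\omega_{E_k}$ and using the intertwining relation $\iota_k\circ R^{k,k}_{\text{red}}(\tau(\alpha),\alpha) = R^{k,k}(\tau(\alpha),\alpha)\circ \iota_k$ yields $(R^{k,k}_{\text{red}}(\tau(\alpha),\alpha))^{*}\omega^{k} = \omega^{k}$, as desired.

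The only step requiring care is verifying that $\Psi^{*}\omega^{k} = 2\omega_{E_k}$ (i.e.\ that the two summands genuinely add rather than cancel): here the key is that the identification of tangent vectors along the graph couples $X$ on the second factor with $UXU^{*}$ on the first, and the cyclic property of the trace together with $U^{*}U = I$ makes both summands contribute the same value $\sqrt{-1}\,\mathrm{tr}\,P[X,Y]$; no $U$-dependent cancellation occurs. All remaining ingredients are routine.
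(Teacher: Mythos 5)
Your proof is correct and follows essentially the same route as the paper: the heart of both arguments is the identical trace computation showing that the two summands of $\omega_{E_k}\oplus\omega_{E_k}$ restricted to the graph add up to $2\sqrt{-1}\,\mathrm{tr}\,P[X,Y]$ rather than cancel, and the symplectomorphism claim is handled identically via the intertwining relation and Theorem~\ref{T:2.2}(c). The only cosmetic difference is that you phrase nondegeneracy as $\Psi^{*}\omega^{k}=2\omega_{E_k}$ with $\omega_{E_k}$ the nondegenerate orbit form, whereas the paper checks directly that the kernel is trivial; the paper itself records your pullback identity later in \eqref{3.48}.
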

\begin{proof}
Suppose $(c_{U,k}(H), H) \in \text{Ker}\, (\omega^k)_{(c_{U,k}(P), P)},$ then $H = [X, P]$ for some
$X\in u(n).$   Therefore, we have
\begin{equation}\label{3.21}
(\omega^k)_{(c_{U,k}(P), P)} \Big((c_{U,k}([X, P]), [X,P]), (c_{U,k}([Y,P]), [Y,P])\Big)=0
\end{equation}
for all $Y\in u(n).$
But by a direct computation using the definition of $\omega^k,$ we have
\begin{equation}\label{3.22}
\begin{aligned}
& (\omega^k)_{(c_{U,k}(P), P)} \Big( (c_{U,k}([X, P]), [X, P]), (c_{U,k}([Y,P]), [Y, P])\Big)\\
= & \omega_{E_k}(c_{U,k}(P)) ([c_{U,k}(X), c_{U}(P)], [c_{U,k}(Y), c_{U,k}(P)]) + \omega_{E_k} (P)([X, P], [Y, P])\\
= & \sqrt{-1}\,\text{tr}\, c_{U,k}(P) [ c_{U,k}(X), c_{U,k}(Y)] +\sqrt{-1} \,\text{tr}\, P[X, Y] \\
= & -2\sqrt{-1}\, \text{tr}\, [X, P] Y.
\end{aligned}
\end{equation}
Since this expression is zero for all $Y\in u(n),$ it follows that we must have $H = [X, P]=0.$   As this is true for all points $(P, c_{U,k}(P))\in \mathcal{G}_{U,k},$
this shows the closed $2$-form $\omega^k$ is nondegenerate.   The assertion that  $R^{k,k}_{\text{red}}( \tau(\alpha), \alpha)$ is a symplectomorphism
then follows from the relation $\iota_{k}\circ R^{k, k}_{\text{red}}( \tau(\alpha), \alpha) = R^{k,k}( \tau(\alpha), \alpha)\circ \iota_{k}$ and Theorem 2.2 (c).
\end{proof}
We will call $R^{k,k}_{\text{red}}(\tau(\alpha), \alpha)$  the {\sl reduced parametric Yang-Baxter map}.
\begin{definition}\label{D: 3.3}
 We define the map $B^k$ by
\begin{equation}\label{3.23}
\begin{aligned}
B^k: & (\mathbb{C}\setminus (\mathbb{R}\cup \sqrt{-1}\, \mathbb{R}))\times P(n)_k\longrightarrow (\mathbb{C}\setminus (\mathbb{R}\cup \sqrt{-1}\,
 \mathbb{R}))\times P(n)_k,\\ 
& (\alpha, P)\mapsto (\tau(\alpha), B^k(\alpha)(P))=(\tau(\alpha), U \phi P \phi^{-1} U^*),\\
\end{aligned}
\end{equation}
where $\phi$ is given in \eqref{3.12}.  
\end{definition}
From the above definitions and \eqref{3.19}, we clearly have
\begin{equation}\label{3.24}
R^{k,k}_{\text{red}}(\tau(\alpha), \alpha) (c_{U, k}(P), P) = (B^k(\alpha)(P), c_{U,k}(B^k(\alpha)(P))).
\end{equation}
\begin{theorem}\label{T: 3.4}
(a)  $B^k$ is an involution.   In particular, we have
\begin{equation}\label{3.25}
B^k(\tau(\alpha)) B^k(\alpha) = \text{id}_{P(n)_k}.
\end{equation}
\newline
(b)  For any $1\leq k, \ell\leq n-1,$ the pair $B^k(\alpha)$ and $B^{\ell}(\alpha)$ satisfies the generalized parametric reflection equation
\begin{equation}\label{3.26}
\begin{aligned}
& B_1^k(\alpha_1) R_{21}^{k,\ell}(\tau(\alpha_2), \alpha_1)  B_2^{\ell}(\alpha_2)R_{12}^{k,\ell}(\alpha_1,\alpha_2) \\
 = & R_{21}^{k,\ell}(\tau(\alpha_2), \tau(\alpha_1))B^{\ell}_2(\alpha_2)R_{12}^{k,\ell}(\tau(\alpha_1), \alpha_2) B_1^{k}(\alpha_1)\\
\end{aligned}
\end{equation}
for all $\alpha_1, \alpha_2\in \mathbb{C}\setminus (\mathbb{R}\cup \sqrt{-1}\, \mathbb{R})$ satisfying the conditions
\begin{equation}\label{3.27}
\{\alpha_1, \overline{\alpha}_1\}\cap \{\alpha_2, \overline{\alpha}_2\}=\emptyset,\,\, \{-\alpha_1, -\overline{\alpha}_1\}\cap \{\alpha_2, \overline{\alpha}_2\}=\emptyset.
\end{equation}
When $k=\ell,$ the generalized parametric reflection equation reduces to the usual parametric equation \eqref{2.60}.
\newline
(c)  The map
\begin{equation}\label{3.28}
R^k({\alpha}): (P(n)_k, 2\omega_{E_k})\longrightarrow  (\mathcal{G}_{U,k}, \omega^k): P\mapsto R^{k,k}_{\text{red}}(\tau(\alpha), \alpha) (c_{U,k}(P), P) 
\end{equation}
is a symplectomorphism.   Moreover, the parametric reflection map 
\begin{equation}\label{3.29}
\begin{aligned}
B^k(\alpha) & = (id_{P(n)_k}, c_{U,k})^{-1} \circ R^k(\alpha)\\
& = (c_{U,k}, id_{P(n)_k})^{-1}\circ
 (S\circ R^{k,k}_{\text{red}}(\tau(\alpha), \alpha))\circ (c_{U,k}, id_{P(n)_k})\\
 \end{aligned}
 \end{equation}
 is also a symplectomorphism, when $P(n)_k$ is equipped with the symplectic form $\omega_{E_k}.$  Here
 \begin{equation}\label{3.30}
(c_{U,k}, id_{P(n)_k}): P(n)_k \longrightarrow \mathcal{G}_{U,k}, P\mapsto  (c_{U,k}(P), P)
\end{equation}
and $S: \mathcal{G}_{U,k}\longrightarrow \mathcal{G}_{U,k}$ is the restriction of the permutation map on
$P(n)_k \times P(n)_k$ that sends $(P_1, P_2)$ to $(P_2, P_1).$
 \end{theorem}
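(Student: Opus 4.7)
The plan is to prove all three parts by exploiting the conjugation identity \eqref{3.29} between $B^k(\alpha)$ and the reduced Yang-Baxter map $R^{k,k}_{\text{red}}(\tau(\alpha),\alpha)$, together with the properties of $R^{k,k}(\tau(\alpha),\alpha)$ supplied by Theorem \ref{T:2.2} and Proposition \ref{P: 3.2}, and to settle \eqref{3.26} by a four-fold refactorization computation in $K_{\text{rat}}$ in the spirit of the nonlinear mirror-image argument from the proof of Theorem 3.3 in \cite{CZ2}.

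For (a), I would first observe that applying Proposition \ref{P: 3.1} with $\alpha$ replaced by $\tau(\alpha)$ produces the matrix $\phi'=-2\overline{\alpha}I+(\overline{\alpha}-\alpha)(P+UPU^*)$, which still commutes with $U$; hence $R^{k,k}(\alpha,\tau(\alpha))$ also preserves $\mathcal{G}_{U,k}$ and restricts to $R^{k,k}_{\text{red}}(\alpha,\tau(\alpha))$. Combining \eqref{2.20} and \eqref{2.21} yields the ambient identity $(S\circ R^{k,k}(\alpha,\tau(\alpha))\circ S)\circ R^{k,k}(\tau(\alpha),\alpha)=\mathrm{id}$, which then restricts to $\mathcal{G}_{U,k}$. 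Substituting this into $B^k(\tau(\alpha))\circ B^k(\alpha)$ computed through \eqref{3.29} causes the two consecutive $(c_{U,k},\mathrm{id}_{P(n)_k})^{\pm 1}$ factors at the junction to cancel, collapsing the composition to the identity and giving \eqref{3.25}; the full involution property of $B^k$ on $(\mathbb{C}\setminus(\mathbb{R}\cup\sqrt{-1}\,\mathbb{R}))\times P(n)_k$ then follows because $\tau$ is already an involution on the parameter factor.

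For (b), I would encode both sides of \eqref{3.26} as two reduction orderings of the $\sigma$-invariant four-fold product
\[
g_{\tau(\alpha_1),c_{U,k}(P_1)}\, g_{\tau(\alpha_2),c_{U,\ell}(P_2)}\, g_{\alpha_2,P_2}\, g_{\alpha_1,P_1}
\]
in $K_{\text{rat}}$. The left-hand side corresponds to the ordering in which one first scatters the inner bulk pair using $R^{k,\ell}_{12}(\alpha_1,\alpha_2)$, then absorbs the adjacent $\sigma$-image factor into a bounce $B_2^{\ell}(\alpha_2)$ via Proposition \ref{P: 3.1}, then scatters once more by $R^{k,\ell}_{21}(\tau(\alpha_2),\alpha_1)$, and finally bounces by $B_1^k(\alpha_1)$; the right-hand side corresponds to the opposite order (bounce first, then scatter, etc.). Commuting adjacent simple factors during the reduction is carried out using the parametric Yang-Baxter equation \eqref{2.15} from Theorem \ref{T:2.2}(b), and the equality of the two final expressions is forced by the uniqueness of simple-element factorization in Theorem \ref{T:2.2}(a). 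This is the main obstacle: each intermediate application of the Yang-Baxter equation or of Proposition \ref{P: 3.1} requires the corresponding nondegeneracy conditions \eqref{2.10} and \eqref{2.14} to hold among $\alpha_1,\alpha_2,\tau(\alpha_1),\tau(\alpha_2)$, which is precisely what the hypotheses \eqref{3.27} are designed to guarantee. The reduction to \eqref{2.60} when $k=\ell$ is then immediate, as $S_{k,k}$ is an involution and the two sides collapse to the standard reflection equation.

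For (c), Proposition \ref{P: 3.2} already gives that $R^{k,k}_{\text{red}}(\tau(\alpha),\alpha)$ is a symplectomorphism of $(\mathcal{G}_{U,k},\omega^k)$, and $S$ is manifestly a symplectomorphism of $(\mathcal{G}_{U,k},\omega^k)$ because swapping factors preserves the ambient $2$-form $\omega_{E_k}\oplus\omega_{E_k}$. The remaining step is to verify that $(c_{U,k},\mathrm{id}_{P(n)_k}):(P(n)_k,2\omega_{E_k})\to(\mathcal{G}_{U,k},\omega^k)$ is a symplectomorphism. Pulling back $\omega^k$ along $\iota_k\circ(c_{U,k},\mathrm{id}_{P(n)_k})$ yields $c_{U,k}^*\omega_{E_k}+\omega_{E_k}$, and since $c_{U,k}$ realizes the coadjoint action of the unitary $U$ on the orbit $P(n)_k$, it preserves the Kirillov-Kostant-Souriau form $\omega_{E_k}$, so the total pullback equals $2\omega_{E_k}$. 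Composition then shows $R^k(\alpha)=R^{k,k}_{\text{red}}(\tau(\alpha),\alpha)\circ(c_{U,k},\mathrm{id}_{P(n)_k})$ is a symplectomorphism from $(P(n)_k,2\omega_{E_k})$ to $(\mathcal{G}_{U,k},\omega^k)$. Finally, using \eqref{3.29} to realize $B^k(\alpha)$ as a conjugation of the symplectomorphism $S\circ R^{k,k}_{\text{red}}(\tau(\alpha),\alpha)$ by $(c_{U,k},\mathrm{id}_{P(n)_k})$, and noting that a symplectomorphism for $2\omega_{E_k}$ is automatically one for $\omega_{E_k}$, we conclude that $B^k(\alpha)$ is a symplectomorphism of $(P(n)_k,\omega_{E_k})$.
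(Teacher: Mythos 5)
Your part (c) is essentially the paper's argument verbatim: pull back $\omega^k$ along $(c_{U,k},id_{P(n)_k})$, use unitary invariance of $\omega_{E_k}$ to get $2\omega_{E_k}$, compose, and then conjugate. Your part (a), however, takes a genuinely different route: the paper simply writes down the two refactorization problems $\sigma(g_{\alpha,P})g_{\alpha,P}=\sigma(g_{B(\alpha,P)})g_{B(\alpha,P)}=\sigma(g_{B^2(\alpha,P)})g_{B^2(\alpha,P)}$ and invokes uniqueness once, whereas you derive the involutivity from the conjugation formula \eqref{3.29} combined with the inverse relations \eqref{2.20}--\eqref{2.21} and the observation that $\phi'=-2\overline{\alpha}I+(\overline{\alpha}-\alpha)(P+UPU^*)$ still commutes with $U$, so that $R^{k,k}(\alpha,\tau(\alpha))$ also preserves $\mathcal{G}_{U,k}$. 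This is valid (note that \eqref{3.29} is a purely set-theoretic consequence of \eqref{3.24}, so there is no circularity with part (c)), and it makes the mechanism ``$B^k(\alpha)$ is conjugate to a braiding operator'' do all the work; the paper's version is shorter but less structural. In part (b) your overall strategy --- realize both sides of \eqref{3.26} as two orderings of successive refactorizations of a $\sigma$-invariant four-fold product and force agreement at the end --- is exactly the paper's, except that the paper certifies the agreement by first proving an operator identity $\Pi_1=\Pi_2$ among six $R_{ij}$'s via the YBE and disjoint-factor commutation \eqref{3.36}--\eqref{3.38}, while you appeal directly to uniqueness of the ordered factorization into simple factors with prescribed divisors. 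That shortcut is legitimate (iterate Theorem \ref{T: 2.5}(a), using the pairwise disjointness of $\{\pm\alpha_i,\pm\overline{\alpha}_i\}$ guaranteed by \eqref{3.27}), and is arguably cleaner than the paper's $\Pi_1=\Pi_2$ computation.

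One concrete slip in (b): the four-fold product you wrote, $g_{\tau(\alpha_1),c_{U,k}(P_1)}\,g_{\tau(\alpha_2),c_{U,\ell}(P_2)}\,g_{\alpha_2,P_2}\,g_{\alpha_1,P_1}=\sigma(g_{\alpha_2,P_2}g_{\alpha_1,P_1})\,g_{\alpha_2,P_2}g_{\alpha_1,P_1}$, is indeed $\sigma$-invariant, but it encodes the reflection equation with the roles of the indices $1$ and $2$ interchanged relative to \eqref{3.26}: its two rightmost factors are $g_{\alpha_2,P_2}g_{\alpha_1,P_1}$, whose refactorization is $R_{21}(\alpha_2,\alpha_1)$, not the $R_{12}^{k,\ell}(\alpha_1,\alpha_2)$ with which the left side of \eqref{3.26} begins, and its inner adjacent pair is the $\alpha_2$-bounce, not a bulk scattering, so your phrase ``first scatters the inner bulk pair using $R_{12}^{k,\ell}(\alpha_1,\alpha_2)$'' does not match the product as written. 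The correct product is $\sigma(g_{\alpha_1,P_1}g_{\alpha_2,P_2})\,g_{\alpha_1,P_1}g_{\alpha_2,P_2}=g_{\tau(\alpha_2),c_{U,\ell}(P_2)}\,g_{\tau(\alpha_1),c_{U,k}(P_1)}\,g_{\alpha_1,P_1}\,g_{\alpha_2,P_2}$, whose rightmost pair gives $R_{12}(\alpha_1,\alpha_2)$ and whose inner pair gives the $B_1(\alpha_1)$ bounce; with that replacement the two reduction orderings reproduce the paper's chains \eqref{3.40} and \eqref{3.42} and your uniqueness argument closes the proof.
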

\begin{proof}
(a)  To simplify notation, we denote $B^k$ simply by $B.$ To show $B^2(\alpha, P) = (\alpha, P),$ we make use the the uniqueness of solution of refactorization
problems.   In what follows, in order to facilitate our calculations, we will (by abuse of notation) denote
$g_{\tau(\alpha),  U\phi P\phi^{-1} U^*}$  more economically as $g_{B(\alpha, P)}$ (see \eqref{3.23} above).  With this notation, the refactorization problem which defines $B(\alpha, P)$ is given by
\begin{equation}\label{3.31}
\sigma(g_{\alpha, P})g_{\alpha, P} =  \sigma(g_{B(\alpha, P)}) g_{B(\alpha, P)}.
\end{equation}
Similarly, the refactorization which defines $B^2(\alpha, P) = B(B(\alpha,P))$ is given by
\begin{equation}\label{3.32}
 \sigma(g_{B(\alpha, P)}) g_{B(\alpha, P)} = \sigma(g_{B^2(\alpha, P)})g_{B^2(\alpha, P)}.
\end{equation}
Hence it follows from the last two expressions and the uniqueness of solutions of refactorization problems that $B^2$ is the
identity map.  As $B(B(\alpha, P)) = B(\tau(\alpha), B(\alpha)(P)) = (\alpha, B(\tau(\alpha))B(\alpha)(P)),$ the second
assertion follows.
\newline
(b)  Take $\alpha_1,\alpha_2$ satisfying the assumptions, and let $\alpha_3 = \tau(\alpha_1),$ $\alpha_4=\tau(\alpha_2).$   We consider the graph of 
$c_{U,k}\times c_{U,\ell},$ which we define to be
\begin{equation}\label{3.33}
\mathcal{G}(U,k,\ell)=\{(P_1, P_2, c_{U,k}(P_1), c_{U,\ell}(P_2)\mid (P_1,P_2)\in P(n)_k\times P(n)_{\ell}\}.
\end{equation}
In the next two expressions, we consider maps $R_{ij}(\alpha_i,\alpha_j)$ apply to quadruples of projectors $(P_1, P_2, P_3, P_4)\in P(n)_{k_1}\times P(n)_{k_2}\times
P(n)_{k_3} \times P(n)_{k_4}.$  The notation means the following: if $i<j$ (resp.~ $i>j$),  $R_{ij}(\alpha_i,\alpha_j)$ is the map which acts as $R(\alpha_i,\alpha_j)$ 
(resp.~ $R_{21}(\alpha_i,\alpha_j)$) on the $i$-th and $j$-the component and as identity on the other components (see Theorem 2.2 for the definition of $R(\alpha_i,\alpha_j)$ and \eqref{2.19} for the definition of $R_{21}(\alpha_i,\alpha_j)$). 
Let 
\begin{equation}\label{3.34}
\begin{aligned}
& \Pi_{1} (\alpha_1,\alpha_2, \alpha_3, \alpha_4)\\
= & R_{31}^{k_1,k_3}(\alpha_3,\alpha_1) R_{32}^{k_2,k_3}(\alpha_3,\alpha_2) R_{41}^{k_1,k_4}(\alpha_4,\alpha_1) R_{42}^{k_2,k_4} (\alpha_4,\alpha_2) R_{43}^{k_3,k_4}(\alpha_4,\alpha_3) R_{12}^{k_1,k_2}(\alpha_1,\alpha_2),
\end{aligned}
\end{equation}
and 
\begin{equation}\label{3.35}
\begin{aligned}
& \Pi_{2} (\alpha_1,\alpha_2, \alpha_3, \alpha_4)\\
= & R_{43}^{k_3,k_4}(\alpha_4,\alpha_3) R_{12}^{k_1,k_2}(\alpha_1,\alpha_2) R_{42}^{k_2,k_4}(\alpha_4,\alpha_2) R_{32}^{k_2,k_3}(\alpha_3,\alpha_2) R_{41}^{k_1,k_4}(\alpha_4,\alpha_1) R_{31}^{k_1,k_3}(\alpha_3,\alpha_1),\\
\end{aligned}
\end{equation}
where we assume the parameters $\alpha_a$ are such that all the maps on the right hand sides of the above two expressions are defined.
In the following calculation, we will drop the superscripts to simplify notation.  Using the Yang-Baxter equation in the form
\begin{equation}\label{3.36}
R_{ab}(\alpha_a, \alpha_b) R_{ac}(\alpha_a, \alpha_c)R_{bc}(\alpha_b, \alpha_c) =R_{bc}(\alpha_b, \alpha_c)R_{ac}(\alpha_a, \alpha_c)R_{ab}(\alpha_a, \alpha_b)
\end{equation}
valid for any triplet $a,b,c\in \{1,2,3,4\}$ and the fact that
\begin{equation}\label{3.37}
R_{ab}(\alpha_a,\alpha_b)R_{cd}(\alpha_c,\alpha_d) = R_{cd}(\alpha_c,\alpha_d) R_{ab}(\alpha_a,\alpha_b)
\end{equation}
for any pairwise distinct $a, b, c, d,$  we have
\begin{equation}\label{3.38}
\begin{aligned}
& \Pi_1(\alpha_1,\alpha_2,\alpha_3,\alpha_4)\\
= & R_{31}(\alpha_3,\alpha_1) R_{32}(\alpha_3,\alpha_1) R_{41}(\alpha_4,\alpha_1) R_{42} (\alpha_4,\alpha_2)  R_{12}(\alpha_1,\alpha_2)  R_{43}(\alpha_4,\alpha_3) \\
= & R_{31}(\alpha_3,\alpha_1)R_{32}(\alpha_3,\alpha_1) R_{12}(\alpha_1,\alpha_2) R_{42} (\alpha_4,\alpha_2)R_{41}(\alpha_4,\alpha_1)
R_{43}(\alpha_4,\alpha_1)\\
= & R_{12}(\alpha_1,\alpha_2)R_{32}(\alpha_3,\alpha_1)R_{31}(\alpha_3,\alpha_1)R_{42} (\alpha_4,\alpha_2)R_{41}(\alpha_4,\alpha_1)
R_{43}(\alpha_4,\alpha_1)\\
= & R_{12}(\alpha_1,\alpha_2)R_{32}(\alpha_3,\alpha_1)R_{42} (\alpha_4,\alpha_2)R_{31}(\alpha_3,\alpha_1)R_{41}(\alpha_4,\alpha_1)
R_{43}(\alpha_4,\alpha_1)\\
= & R_{12}(\alpha_1,\alpha_2)R_{32}(\alpha_3,\alpha_1)R_{42} (\alpha_4,\alpha_2)R_{43}(\alpha_4,\alpha_1)R_{41}(\alpha_4,\alpha_1)
R_{31}(\alpha_3,\alpha_1)\\
= & R_{12}(\alpha_1,\alpha_2)R_{43}(\alpha_4,\alpha_1))R_{42} (\alpha_4,\alpha_2)R_{32}(\alpha_3,\alpha_1)R_{41}(\alpha_4,\alpha_1)
R_{31}(\alpha_3,\alpha_1)\\
= & \Pi_2(\alpha_1,\alpha_2,\alpha_3,\alpha_4).\\
\end{aligned}
\end{equation}
Therefore, in particular, for $\alpha_1,\alpha_2$ satisfying \eqref{3.27} and $\alpha_3 = \tau(\alpha_1),$  $\alpha_4 = \tau(\alpha_2),$  we obtain
\begin{equation}\label{3.39}
\Pi_1(\alpha_1,\alpha_2,\alpha_3,\alpha_4)\mid_{\mathcal{G}(U,k,\ell)}= \Pi_2(\alpha_1,\alpha_2,\alpha_3,\alpha_4)\mid_{\mathcal{G}(U,k,\ell)}.
\end{equation}
Now consider the product $\sigma(g_{\alpha_1, P_1} g_{\alpha_2, P_2}) g_{\alpha_1, P_1} g_{\alpha_2, P_2},$  where $(P_1, P_2)$
 belongs to $P(n)_k\times P(n)_{\ell}.$  Then corresponding to $\Pi_{1} (\alpha_1,\alpha_2, \alpha_3, \alpha_4)\mid_{\mathcal{G}(U,k,\ell)}$ applied to $(P_1, P_2, c_{U,k}(P_1), c_{U,k}(P_2)),$ we have the following sequence of expressions, obtained by solving successively the respective refactorization problems, which have unique solutions by the assumptions on $\alpha_1,$ $\alpha_2$:
\begin{equation}\label{3.40}
\begin{aligned}
& \sigma(g_{\alpha_2, P_2})\sigma(g_{\alpha_1, P_1}) g_{\alpha_1, P_1} g_{\alpha_2, P_2} \\
= &\sigma(g_{\alpha_1, P_1^{(ii)}})\sigma(g_{\alpha_2, P_2^{(ii)}}) g_{\alpha_2, P_2^{(ii)}} g_{\alpha_1, P_1^{(ii)}}\\
= & \sigma(g_{\alpha_1, P_1^{(ii)}}) g_{\alpha_2, P_2^{(iii)}} \sigma(g_{\alpha_2, P_2^{(iii)}}) g_{\alpha_1, P_1^{(ii)}}\\ 
= & g_{\alpha_2, P_2^{(iv)}} \sigma(g_{\alpha_1, P_1^{(iv)}})g_{\alpha_1, P_1^{(iv)}} \sigma(g_{\alpha_2, P_2^{(iv)}})\\
= & g_{\alpha_2, P_2^{(iv)}} g_{\alpha_1, P_1^{(v)}}\sigma(g_{\alpha_1, P_1^{(v)}}) \sigma(g_{\alpha_2,
P_2^{(iv)}}) ,\\
\end{aligned}
\end{equation}
so that
\begin{equation}\label{3.41}
\begin{aligned}
& \Pi_1(\alpha_1, \alpha_2, \alpha_3, \alpha_4) (P_1, P_2, c_{U,k}(P_1), c_{U,\ell}(P_2))\\
= & (P_1^{(v)}, P_2^{(iv)}, c_{U,k}(P_1^{(v)}), c_{U,\ell}(P_2^{(iv)})).\\
\end{aligned}
\end{equation}
In a similar way, corresponding to $\Pi_2(\alpha_1, \alpha_2, \alpha_3, \alpha_4)\mid_{\mathcal{G}(U,k,\ell)}$ applied to the same quadruple in $\mathcal{G}(U,k,\ell),$
we find
\begin{equation}\label{3.42}
\begin{aligned}
& \sigma(g_{\alpha_2, P_2})\sigma(g_{\alpha_1, P_1}) g_{\alpha_1, P_1} g_{\alpha_2, P_2} \\
= & \sigma(g_{\alpha_2, P_2}) g_{\alpha_1, \widetilde{P}_1^{(ii)}}\sigma(g_{\alpha_1, \widetilde{P}_1^{(ii)}}) g_{\alpha_2, P_2} \\
= & g_{\alpha_1, \widetilde{P}_1^{(iii)}} \sigma(g_{\alpha_2, \widetilde{P}_2^{(iii)}})g_{\alpha_2, \widetilde{P}_2^{(iii)}}
\sigma(g_{\alpha_1, \widetilde{P}_1^{(iii)}})\\
= & g_{\alpha_1, \widetilde{P}_1^{(iii)}} g_{\alpha_2, \widetilde{P}_2^{(iv)}}\sigma(g_{\alpha_2, \widetilde{P}_2^{(iv)}})
\sigma(g_{\alpha_1, \widetilde{P}_1^{(iii)}})\\
= &  g_{\alpha_2, \widetilde{P}_2^{(v)}}g_{\alpha_1, \widetilde{P}_1^{(v)}} \sigma(g_{\alpha_1, \widetilde{P}_1^{(v)}})
\sigma(g_{\alpha_2, \widetilde{P}_2^{(v)}}).\\
\end{aligned}
\end{equation}
which means that 
\begin{equation}\label{3.43}
\begin{aligned}
& \Pi_2(\alpha_1, \alpha_2, \alpha_3, \alpha_4) (P_1, P_2, c_{U,k}(P_1), c_{U,k}(P_2))\\
= & (\widetilde{P}_1^{(v)}, \widetilde{P}_2^{(v)}, c_{U,k}(\widetilde{P}_1^{(v)}), c_{U,\ell}(\widetilde{P}_2^{(v)})).\\
\end{aligned}
\end{equation}
Equating \eqref{3.41} and \eqref{3.43}, we conclude that
\begin{equation}\label{3.44}
(P_1^{(v)}, P_2^{(iv)}, c_{U,k}(P_1^{(v)}), c_{U,\ell}(P_2^{(iv)}))=(\widetilde{P}_1^{(v)}, \widetilde{P}_2^{(v)}, c_{U,k}(\widetilde{P}_1^{(v)}), c_{U,\ell}(\widetilde{P}_2^{(v)})).
\end{equation}
To conclude the proof, we make use of \eqref{3.40} again and obtain
\begin{equation}\label{3.45}
\begin{aligned}
& B_1(\alpha_1) R_{21}(\tau(\alpha_2), \alpha_1)  B_2(\alpha_2)R_{12}(\alpha_1,\alpha_2)(P_1, P_2)\\
= & B_1(\alpha_1) R_{21}(\tau(\alpha_2), \alpha_1)  B_2(\alpha_2) (P_1^{(ii)}, P_2^{(ii)})\\
= & B_1(\alpha_1) R_{21}(\tau(\alpha_2), \alpha_1) (P_1^{(ii)}, c_{U,\ell} (P_2^{(iii)}))\\
= & B_1(\alpha_1) (P_1^{(iv)}, c_{U,\ell}(P_2^{(iv)}))\\
= & (c_{U,k}(P_1^{(v)}), c_{U,\ell}(P_2^{(iv)})).\\
\end{aligned}
\end{equation}
Similarly, by using \eqref{3.42} again, we find
\begin{equation}\label{3.46}
\begin{aligned}
& R_{21}(\tau(\alpha_2), \tau(\alpha_1))B_2(\alpha_2)R_{12}(\tau(\alpha_1), \alpha_2) B_1(\alpha_1)(P_1, P_2)\\
= &  R_{21}(\tau(\alpha_2), \tau(\alpha_1))B_2(\alpha_2)R_{12}(\tau(\alpha_1), \alpha_2)( c_{U,k}(\widetilde{P}_1^{(ii)}), P_2)\\
= & R_{21}(\tau(\alpha_2), \tau(\alpha_1))B_2(\alpha_2)(c_{U,k}(\widetilde{P}_1^{(iii)}), \widetilde{P}_2^{(iii)})\\
= & R_{21}(\tau(\alpha_2), \tau(\alpha_1)) (c_{U,k}(\widetilde{P}_1^{(iii)}),  c_{U,\ell}(\widetilde{P}_2^{(iv)}))\\
= & (c_{U,k}(\widetilde{P}_1^{(v)}), c_{U,\ell}(\widetilde{P}_2^{(v)})).\\
\end{aligned}
\end{equation}
Hence the equality of \eqref{3.45} and \eqref{3.46} follows from \eqref{3.44}.
\newline
(c)  With the definition of the map $(c_{U,k}, id_{P(n)_k})$ in \eqref{3.30}, it is clear that we can regard $R^k(\alpha)$ 
 as the the composite
\begin{equation}\label{3.47}
R^k(\alpha) = R^{k,k}_{\text{red}}(\tau(\alpha), \alpha) \circ (c_{U,k}, id_{P(n)_k}).
\end{equation}
Now $c_{U,k}$ is a symplectomorphism, when $P(n)_k$ is equipped with the $2$-form $\omega_{E_k}.$
By direct calculation, we have
\begin{equation}\label{3.48}
 ( c_{U,k}, id_{P(n)_k})^* \omega^k
= c_{U,k}^* (\omega_{E_k}) +  \omega^{\prime}_{E_k} = 2\omega_{E_k} ,
\end{equation}
hence the map 
\begin{equation}\label{3.49}
(c_{U,k},  id_{P(n)_k}): (P(n)_k, 2\omega_{E_k})\longrightarrow (\mathcal{G}_{U,k}, \omega^k)
\end{equation}
 is  a symplectomorphism as well.   As the composition of symplectomorphisms is a symplectomorphism, it follows from the above
 argument, \eqref{3.48}, and 
 Theorem 3.4 (c)  that $R^k(\alpha)$ is a symplectomorphism.    To establish the corresponding assertion for $B(\alpha),$
 first note that
 \begin{equation}\label{3.50}
 B^k(\alpha) = \pi_1\circ \iota_{k}\circ R^k(\alpha),
 \end{equation}
 where $\iota_{k}$ is the inclusion map of $\mathcal{G}_{U,k}$ in $P(n)_k\times P(n)_k,$ and
 $\pi_1$ is the projection map of this product space into the first factor.    But now it is easy to show that
 $\pi_1 \circ \iota_{k} =  (id_{P(n)_k}, c_{U,k})^{-1} = (c_{U,k}, id_{P(n)_k})^{-1}\circ S.$    Hence the assertion for $B^k(\alpha)$ follows as we can drop the factor $2$ from the $2$-form $2\omega_{E_k},$ and $S$ is clearly a symplectomorphism.
\end{proof}
In view of Theorem 3.4  and Definition 2.17, $B^k(\alpha)$ is a  {\sl parametric reflection map}.

\section{The rank $1$ case}

The case where the projectors are of rank $1$ is of special interest because it arises in soliton-boundary interactions in multi-component soliton equations such
as the $n$-Manakov system on the half-line \cite{CZ2}.   Of course, what we are doing here is beyond what is required to understand the results in \cite{CZ2}.

We begin by introducing quantities related to the ones which appear in the last section in the case when the projectors are of rank $1.$
First, we have the diffeomorphism
\begin{equation}\label{4.1}
\widetilde{c}_{U}:  \mathbb{C}\mathbb{P}^{n-1}\longrightarrow \mathbb{C}\mathbb{P}^{n-1}, [p]\mapsto [Up]
\end{equation}
which is conjugate to $c_{U,1}$ with the relation 
\begin{equation}\label{4.2}
\widetilde{c}_{U} = j_{\delta}^{-1} \circ c_{U,1} \circ j_{\delta},
\end{equation}
where $j_{\delta}$ is defined in \eqref{2.22}
Corresponding to the map $\widetilde{c}_{U}$ is its graph, which we define to be the submanifold of $\mathbb{C}\mathbb{P}^{n-1}\times \mathbb{C}\mathbb{P}^{n-1},$
given by
\begin{equation}\label{4.3}
\widetilde{\mathcal{G}}_{U}= \{([Up], [p])\mid [p]\in \mathbb{C}\mathbb{P}^{n-1}\}.
\end{equation}
Let $\widetilde{\iota}_1: \widetilde{\mathcal{G}}_{U}\longrightarrow \mathbb{C}\mathbb{P}^{n-1}\times \mathbb{C}\mathbb{P}^{n-1}$ be the inclusion map, 
and let
\begin{equation}\label{4.4}
(j_{\delta}\times j_{\delta})_{r}: \widetilde{\mathcal{G}}_{U}\longrightarrow \mathcal{G}_{U,1}, \,\, ([Up], [p])\mapsto (\pi_{[Up]}, \pi_{[p]}) = (U\pi_{[p]}U^{*}, \pi_{[p]})
\end{equation}
be the map induced by $j_{\delta}\times j_{\delta}$,
then by conjugating $R^{1,1}_{\text{red}}(\tau(\alpha),\alpha)$ by the map $(j_{\delta}\times j_{\delta})_r^{-1},$
we have the reduced diffeomorphism
\begin{equation}\label{4.5}
 \widetilde{R}_{\text{red}}(\tau(\alpha), \alpha): \widetilde{\mathcal{G}}_{U}\longrightarrow \widetilde{\mathcal{G}}_{U},
\end{equation}
where $R^{k,k}_{\text{red}}(\tau(\alpha), \alpha)$ is defined in \eqref{3.19}.  Clearly, $\widetilde{R}_{\text{red}}(\tau(\alpha), \alpha)$ is the
reduction of $\widetilde{R}(\tau(\alpha),\alpha):= (j_{\delta}\times j_{\delta})^{-1}\circ R^{1,1}(\tau(\alpha), \alpha)\circ (j_{\delta}\times j_{\delta})$ to
$\widetilde{\mathcal{G}}_{U}.$  We now define the pre-symplectic form
\begin{equation}\label{4.6}
\widetilde{\omega} = \widetilde{\iota}_1^{*}(\omega_{FS} \oplus \omega_{FS}),
\end{equation}
where $\omega_{FS}$ is the Fubini-Study $2$-form in \eqref{2.23}.
The proof of the next proposition makes use of the invariance of $\omega_{FS}$ under
$U(n)$, Theorem 2.3 (b), and the relation $\widetilde{\iota}_1\circ \widetilde{R}_{\text{red}}(\tau(\alpha), \alpha) = \widetilde{R}(\tau(\alpha), \alpha)\circ \widetilde{\iota}_1,$  it proceeds in a similar way as in the proof of Proposition 3.2.
\begin{prop}\label{P: 4.1}
The $2$-form $\widetilde{\omega}$ is nondegenerate on $\widetilde{\mathcal{G}}_{U}$ so that $(\widetilde{\mathcal{G}}_{U}, \widetilde{\omega})$
is a symplectic submanifold of $(\mathbb{C}\mathbb{P}^{n-1}\times \mathbb{C}\mathbb{P}^{n-1}, \omega_{FS}\oplus \omega_{FS})$.
Hence $\widetilde{R}_{\text{red}}(\tau(\alpha) \alpha)$ is a symplectomorphism when its domain and codomain
are equipped with the symplectic form $\widetilde{\omega}.$
\end{prop}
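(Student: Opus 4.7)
The plan is to mirror the proof strategy of Proposition~3.2 with $\mathbb{C}\mathbb{P}^{n-1}$ in place of $P(n)_k$ and the Fubini-Study form $\omega_{FS}$ in place of $\omega_{E_k}$. The key algebraic fact that drives everything is the $U(n)$-invariance of $\omega_{FS}$: since $U$ is unitary, the map $\widetilde{c}_U:[p]\mapsto [Up]$ is a symplectomorphism of $(\mathbb{C}\mathbb{P}^{n-1},\omega_{FS})$, i.e.\ $\widetilde{c}_U^{\,*}\omega_{FS}=\omega_{FS}$. This is the analogue of the computation at the end of the display \eqref{3.22}, but is even cleaner here because $\omega_{FS}$ is literally $U(n)$-invariant.

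For the nondegeneracy of $\widetilde{\omega}$, first I would describe tangent vectors to $\widetilde{\mathcal{G}}_{U}$: at a point $([Up],[p])$ every tangent vector has the form $\bigl(T_{[p]}\widetilde{c}_U(v),\,v\bigr)$ for some $v\in T_{[p]}\mathbb{C}\mathbb{P}^{n-1}$. Suppose such a vector lies in $\ker\widetilde{\omega}_{([Up],[p])}$. Pairing against an arbitrary $\bigl(T_{[p]}\widetilde{c}_U(w),w\bigr)$ and using the definition of $\widetilde{\omega}$ together with the $U(n)$-invariance gives
\begin{equation*}
\omega_{FS}\!\bigl(T_{[p]}\widetilde{c}_U(v),T_{[p]}\widetilde{c}_U(w)\bigr) + \omega_{FS}(v,w) = 2\,\omega_{FS}(v,w) = 0
\end{equation*}
for all $w$. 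Since $\omega_{FS}$ is nondegenerate on $\mathbb{C}\mathbb{P}^{n-1}$, this forces $v=0$, so $\widetilde{\omega}$ is nondegenerate. Closedness is automatic as $\widetilde{\omega}$ is the pullback of a closed form via $\widetilde{\iota}_1$, so $(\widetilde{\mathcal{G}}_U,\widetilde{\omega})$ is symplectic.

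For the symplectomorphism assertion, I would use the intertwining identity $\widetilde{\iota}_1\circ\widetilde{R}_{\text{red}}(\tau(\alpha),\alpha)=\widetilde{R}(\tau(\alpha),\alpha)\circ\widetilde{\iota}_1$ to compute
\begin{equation*}
\widetilde{R}_{\text{red}}(\tau(\alpha),\alpha)^{*}\widetilde{\omega} = \widetilde{\iota}_1^{*}\bigl(\widetilde{R}(\tau(\alpha),\alpha)^{*}(\omega_{FS}\oplus\omega_{FS})\bigr).
\end{equation*}
By Theorem~2.3(b), $\widetilde{R}(\alpha_1,\alpha_2)$ preserves $\Omega_{\alpha_1,\alpha_2}=(\overline{\alpha}_1-\alpha_1)\omega_{FS}\oplus(\overline{\alpha}_2-\alpha_2)\omega_{FS}$. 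With $\alpha_1=\tau(\alpha)=-\overline{\alpha}$ and $\alpha_2=\alpha$, both prefactors equal $\overline{\alpha}-\alpha$, so after cancelling this common nonzero scalar we get $\widetilde{R}(\tau(\alpha),\alpha)^{*}(\omega_{FS}\oplus\omega_{FS})=\omega_{FS}\oplus\omega_{FS}$. Pulling back by $\widetilde{\iota}_1$ recovers $\widetilde{\omega}$, finishing the proof.

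The only step that requires any care is the nondegeneracy computation, and even there the main obstacle is purely bookkeeping: making sure that the tangent space $T_{([Up],[p])}\widetilde{\mathcal{G}}_U$ is correctly identified as the graph of $T_{[p]}\widetilde{c}_U$, so that the cancellation leading to $2\omega_{FS}(v,w)$ is legitimate. Once this is in hand, both the symplectic structure and the symplectomorphism property follow exactly as in Proposition~3.2.
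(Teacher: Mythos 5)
Your proposal is correct and follows essentially the same route the paper indicates: it invokes the $U(n)$-invariance of $\omega_{FS}$ to identify the restriction of $\omega_{FS}\oplus\omega_{FS}$ to the graph with $2\omega_{FS}$ (whence nondegeneracy), and then combines the intertwining relation $\widetilde{\iota}_1\circ\widetilde{R}_{\text{red}}(\tau(\alpha),\alpha)=\widetilde{R}(\tau(\alpha),\alpha)\circ\widetilde{\iota}_1$ with Theorem~2.3(b) and the observation that the two prefactors $\overline{\alpha}_i-\alpha_i$ coincide for $(\tau(\alpha),\alpha)$. This is exactly the argument the paper sketches by reference to Proposition~3.2.
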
 
We next introduce the analog of Definition 3.3, for the case $k=1,$ at the level of complex projective space.
\begin{definition}\label{D: 4.2}
We define the  map  
\begin{equation}\label{4.7}
\widetilde{B} : (\mathbb{C}\setminus (\mathbb{R}\cup \sqrt{-1}\, \mathbb{R}))\times \mathbb{C}\mathbb{P}^{n-1}\longrightarrow 
(\mathbb{C}\setminus (\mathbb{R}\cup \sqrt{-1}\, \mathbb{R}))\times \mathbb{C}\mathbb{P}^{n-1}
\end{equation}
by
\begin{equation}\label{4.8}
\widetilde{B} = (id \times j_{\delta})^{-1}\circ B^1 \circ (id \times j_{\delta}),
\end{equation}
where $id$ is the identity map of the parameter space $\mathbb{C}\setminus (\mathbb{R}\cup \sqrt{-1}\, \mathbb{R})$
and $B$ is defined in \eqref{3.23}.
The corresponding parametric  map is defined by the relation
\begin{equation}\label{4.9}
\widetilde{B}(\alpha, [p]) = (\tau(\alpha), \widetilde{B}(\alpha)([p])).
\end{equation}
\end{definition}
From the above definition, we have (cf. \eqref{3.24})
\begin{equation}\label{4.10}
\widetilde{R}(\tau(\alpha), \alpha)(\widetilde{c}_{U}([p]), [p]) = (\widetilde{B}(\alpha)([p]), \widetilde{c}_{U}(\widetilde{B}(\alpha)([p])).
\end{equation}
The next result is the analog of Theorem 3.4 at the level of $\mathbb{C}\mathbb{P}^{n-1},$ for the case where $k=1.$
\begin{prop}\label{P: 4.3}
 (a)  The map $\widetilde{B}$ is an involution.   In particular, $\widetilde{B}$ satisfies
\begin{equation}\label{4.11}
\widetilde{B}(\tau(\alpha))\widetilde{B}(\alpha)=\text{id}_{\mathbb{C}\mathbb{P}^{n-1}}.
\end{equation}
 Explicitly,
\begin{equation}\label{4.12}
\begin{aligned}
\widetilde{B}(\alpha)([p]) = & [ (\alpha + \overline{\alpha}) g_{\overline{\alpha}, \pi_{[p]}}(\tau(\alpha)) Up]\\
= & \left[ \left(I + \frac{\overline{\alpha} -\alpha}{\alpha + \overline{\alpha}} \frac{pp^*}{p^*p}\right) Up \right].
\end{aligned}
\end{equation}
(b)  $\widetilde{B}(\alpha)$ satisfies the parametric reflection equation
\begin{equation}\label{4.13}
\begin{aligned}
& \widetilde{B}_1(\alpha_1) \widetilde{R}_{21}(\tau(\alpha_2), \alpha_1)  \widetilde{B}_2(\alpha_2)\widetilde{R}_{12}(\alpha_1,\alpha_2) \\
 = & \widetilde{R}_{21}(\tau(\alpha_2), \tau(\alpha_1))\widetilde{B}_2(\alpha_2)\widetilde{R}_{12}(\tau(\alpha_1), \alpha_2) \widetilde{B}_1(\alpha_1)\\
\end{aligned}
\end{equation}
for all $\alpha_1, \alpha_2\in \mathbb{C}\setminus (\mathbb{R}\cup \sqrt{-1}\, \mathbb{R})$ satisfying the conditions in \eqref{3.27}.
\newline
(c)  The map
\begin{equation}\label{4.14}
\widetilde{R}(\alpha): (\mathbb{C}\mathbb{P}^{n-1}, 2 \omega_{FS}) \longrightarrow (\widetilde{\mathcal{G}}_{U}, \widetilde{\omega}): [p]\mapsto \widetilde{R}(\tau(\alpha), \alpha)(\widetilde{c}_{U}([p]), [p])
\end{equation}
is a symplectomorphism.   Moreover, the parametric  map
\begin{equation}\label{4.15}
\begin{aligned}
\widetilde{B}(\alpha) = & (id_{\mathbb{C}\mathbb{P}^{n-1}}, \widetilde{c}_{U})^{-1}\circ \widetilde{R}(\alpha)\\
= & (\widetilde{c}_{U}, id_{\mathbb{C}\mathbb{P}^{n-1}})^{-1}\circ (\widetilde{s}\circ \widetilde{R}_{\text{red}}(\tau(\alpha), \alpha))\circ(\widetilde{c}_{U}, id_{\mathbb{C}\mathbb{P}^{n-1}})\\
\end{aligned}
\end{equation}
is also a symplectomorphism, when $\mathbb{C}\mathbb{P}^{n-1}$ is equipped with $\omega_{\text{FS}}.$  Here 
$\widetilde{s}:\widetilde{\mathcal{G}}_{U}\longrightarrow \widetilde{\mathcal{G}}_{U}$
is the map induced by the permutation map on $ \mathbb{C}\mathbb{P}^{n-1}\times \mathbb{C}\mathbb{P}^{n-1}$ that maps $([p_1], [p_2])$ to 
$([p_2], [p_1]).$
\end{prop}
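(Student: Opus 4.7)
The overall strategy is to transfer Theorem 3.4 (specialized to $k=1$) from the projector picture to the complex projective space picture by conjugation with $j_\delta$. Since $\widetilde{B}$, $\widetilde{R}$, $\widetilde{R}_{\text{red}}$, $\widetilde{c}_U$, and $\widetilde{\omega}$ are all defined as the $j_\delta$-conjugates (or pullbacks) of their $P(n)_1$-level counterparts, most structural properties descend formally; the only real work is to write down the explicit formula in part (a).

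For part (a), the involution property and equation (4.11) follow immediately from Theorem 3.4(a) via $\widetilde{B} = (\mathrm{id}\times j_\delta)^{-1}\circ B^1 \circ (\mathrm{id}\times j_\delta)$. To obtain the explicit formula (4.12), I would trace through the definitions: $B^1(\alpha)(\pi_{[p]}) = U\phi\,\pi_{[p]}\,\phi^{-1}U^*$ with $\phi = 2\alpha I + (\overline{\alpha}-\alpha)(\pi_{[p]} + U\pi_{[p]}U^*)$ by Proposition 3.1. Since $\phi\pi_{[p]}\phi^{-1}$ is a rank-one Hermitian projector with range spanned by $\phi p$, it equals $\pi_{[\phi p]}$, so $B^1(\alpha)(\pi_{[p]}) = \pi_{[U\phi p]}$, and thus $\widetilde{B}(\alpha)([p]) = [U\phi p] = [\phi U p]$ using $U\phi = \phi U$ from (3.14). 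A direct calculation then gives
\begin{equation*}
\phi U p = (\alpha+\overline{\alpha})Up + (\overline{\alpha}-\alpha)\pi_{[p]}Up = (\alpha+\overline{\alpha})\Bigl(I + \tfrac{\overline{\alpha}-\alpha}{\alpha+\overline{\alpha}}\pi_{[p]}\Bigr)Up,
\end{equation*}
which matches the second line of (4.12); identifying the factor in parentheses with $g_{\overline{\alpha},\pi_{[p]}}(\tau(\alpha))$ through the definition (2.4) evaluated at $z = -\overline{\alpha}$ yields the first line.

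For part (b), the parametric reflection equation (4.13) is the $(j_\delta\times j_\delta)$-conjugate of the generalized parametric reflection equation (3.26) specialized to $k=\ell=1$, which by the remark following Theorem 3.4(b) coincides with the usual parametric reflection equation. For part (c), I would parallel the argument of Theorem 3.4(c). The key input is that $\widetilde{c}_U:[p]\mapsto[Up]$ is the isometry of $(\mathbb{C}\mathbb{P}^{n-1},\omega_{FS})$ induced by a unitary acting on $\mathbb{C}^n$, so $\widetilde{c}_U^{*}\omega_{FS} = \omega_{FS}$. A direct computation analogous to (3.48) then yields
\begin{equation*}
(\widetilde{c}_U,\,\mathrm{id}_{\mathbb{C}\mathbb{P}^{n-1}})^{*}\widetilde{\omega} = \widetilde{c}_U^{*}\omega_{FS} + \omega_{FS} = 2\,\omega_{FS},
\end{equation*}
and combining this with Proposition 4.1 shows that $\widetilde{R}(\alpha)$ is a symplectomorphism. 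The statement for $\widetilde{B}(\alpha)$ then follows from (4.15), since $\widetilde{s}$ is clearly a symplectomorphism of $(\widetilde{\mathcal{G}}_U,\widetilde{\omega})$ and the global factor of $2$ does not affect the symplectomorphism property.

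The main obstacle is purely bookkeeping: tracking carefully how $j_\delta$ interacts with $U$, $c_{U,1}$, and the inclusions, and verifying the algebraic identity $\phi Up = (\alpha+\overline{\alpha})g_{\overline{\alpha},\pi_{[p]}}(\tau(\alpha))Up$. Since the deep symplectic and set-theoretic content has already been established at the $P(n)_1$ level in Section 3, no genuine new difficulty arises here; Proposition 4.3 should essentially be a translation of Theorem 3.4 through $j_\delta$, supplemented by the $U(n)$-invariance of $\omega_{FS}$.
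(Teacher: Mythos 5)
Your proposal is correct and follows essentially the same route as the paper: parts (a) and (b) are obtained by conjugating Theorem 3.4 (with $k=\ell=1$) through $j_\delta$, and part (c) mirrors the argument of Theorem 3.4(c) using the $U(n)$-invariance of $\omega_{FS}$ to get $(\widetilde{c}_U,\mathrm{id})^{*}\widetilde{\omega}=2\omega_{FS}$ together with Proposition 4.1. The only difference is that you carry out explicitly the ``direct computation'' of \eqref{4.12} that the paper leaves to the reader, and your computation (including the identification $\phi\pi_{[p]}\phi^{-1}=\pi_{[\phi p]}$, which is justified because Theorem 2.2 guarantees the conjugate is a Hermitian projector, and the evaluation of $g_{\overline{\alpha},\pi_{[p]}}$ at $z=-\overline{\alpha}$) is accurate.
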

\begin{proof}
(a)  The assertion that $\widetilde{B}$ is an involution follows from \eqref{4.8} and Theorem 3.4 (a).   Moreover,
the relation in \eqref{4.11} follows from \eqref{3.25}.   Lastly, the explicit formula in \eqref{4.12} can be
obtained by direct computation by using $\widetilde{B}(\alpha) = j_{\delta}^{-1} \circ B^1(\alpha)\circ j_{\delta}$
and then by using the explicit formula for the simple elements.
\newline
(b)  We have the following relations
\begin{equation}\label{4.16}
\begin{aligned}
& \widetilde{R}(\alpha_1, \alpha_2) = (j_{\delta}\times j_{\delta})^{-1}\circ R^{1,1}(\alpha_1,\alpha_2)\circ (j_{\delta}\times j_{\delta}), \\
& \widetilde{B}_2(\alpha_2) =  (j_{\delta}\times j_{\delta})^{-1}\circ B_2^1(\alpha_2)\circ (j_{\delta}\times j_{\delta}),\\ 
& \widetilde{B}_1(\alpha_1) =  (j_{\delta}\times j_{\delta})^{-1}\circ B_1^1(\alpha_1)\circ (j_{\delta}\times j_{\delta}) \\
\end{aligned}
\end{equation}
from which we obtain
\begin{equation}\label{4.17}
\begin{aligned}
& \widetilde{B}_1(\alpha_1) \widetilde{R}_{21}(\tau(\alpha_2), \alpha_1)  \widetilde{B}_2(\alpha_2)\widetilde{R}_{12}(\alpha_1,\alpha_2) \\
= & (j_{\delta}\times j_{\delta})^{-1} \circ B_1^1(\alpha_1) R_{21}^{1,1}(\tau(\alpha_2), \alpha_1)  B_2^{1}(\alpha_2)R_{12}^{1,1}(\alpha_1,\alpha_2)\circ (j_{\delta} \times j_{\delta}).\\
\end{aligned}
\end{equation}
In a similar way, we find
\begin{equation}\label{4.18}
\begin{aligned}
& \widetilde{R}_{21}(\tau(\alpha_2), \tau(\alpha_1))\widetilde{B}_2(\alpha_2)\widetilde{R}_{12}(\tau(\alpha_1), \alpha_2) \widetilde{B}_1(\alpha_1)\\
= & (j_{\delta}\times j_{\delta})^{-1}\circ R_{21}^{1,1}(\tau(\alpha_2), \tau(\alpha_1))B_2^{1}(\alpha_2)R_{12}^{1,1}(\tau(\alpha_1), \alpha_2) B_1^{1}(\alpha_1)\circ (j_{\delta}\times j_{\delta}).\\
\end{aligned}
\end{equation}
The assertion therefore follows from \eqref{3.26} and the above relations.
\newline
(c)   In view of the relation
\begin{equation}\label{4.19}
\widetilde{R}(\alpha) = \widetilde{R}_{\text{red}}(\tau(\alpha), \alpha)\circ (\widetilde{c}_{U}, id_{\mathbb{C}\mathbb{P}^{n-1}})
\end{equation}
and Proposition 4.1, it suffices to show that $(\widetilde{c}_{U}, id_{\mathbb{C}\mathbb{P}^{n-1}})$
is a symplectomorphism,  when its domain is equipped with $2\omega_{\text{FS}},$ and its codomain
is equipped with $\widetilde{\omega}.$  This checking is as in \eqref{3.49}.   Finally, it is straightforward
to derive the first line of \eqref{4.15}.   From this relation and what we just proved, it is now plain
that $\widetilde{B}(\alpha)$ is a symplectomorphism, when $\mathbb{C}\mathbb{P}^{n-1}$ is
equipped with $\omega_{FS}.$
\end{proof}
In view of Proposition 4.3, $\widetilde{B}(\alpha)$ is a parametric reflection map.

Now we fix a positive integer $N.$  For given $\alpha_1,\cdots, \alpha_N\in \mathbb{C}\setminus (\mathbb{R}\cup \sqrt{-1}\mathbb{R}),$
 let $\alpha_{i+N} = \tau(\alpha_i),$
$i=1,\cdots, N.$   We will make the assumption that $\{\alpha_i, \overline{\alpha}_i\} \cap \{\alpha_j, \overline{\alpha}_j\} =\emptyset,$
for $i\neq j,$ $1\leq i,j\leq 2N.$  Given $[p_1^{-}],\cdots, [p_{N}^{-}]$ in $\mathbb{C}\mathbb{P}^{n-1},$ consider
the refactorization problem
\begin{equation}\label{4.20}
\begin{aligned}
& \sigma(g_{\alpha_1, \pi_{[p_1^{-}]}}\cdots g_{\alpha_N, \pi_{[p_N^{-}]}}) g_{\alpha_1, \pi_{[p_1^{-}]}}\cdots g_{\alpha_N, \pi_{[p_N^{-}]}} \\
= & g_{\alpha_N, \pi_{[p_N^{+}]}}\cdots g_{\alpha_1, \pi_{[p_1^{+}]}} \sigma(g_{\alpha_N, \pi_{[p_N^{+}]}}\cdots g_{\alpha_1, \pi_{[p_1^{+}]}}).
\\
\end{aligned}
\end{equation}
Under the above assumptions, the problem has unique solutions for $[p_1^+],\cdots , [p_N^+].$  We define
\begin{equation}\label{4.21}
\begin{aligned}
 &\Pi(\alpha_1,\cdots, \alpha_N):   (\mathbb{C}\mathbb{P}^{n-1})^N\longrightarrow (\mathbb{C}\mathbb{P}^{n-1})^N\\
 \,\,\,& ([p_1^{-}],\cdots, [p_N^{-}])\mapsto ([\widetilde{c}_{U}([p_1^+]),\cdots , \widetilde{c}_{U}([p_N^+])) .\\
 \end{aligned}
 \end{equation}  
 We will call $\Pi(\alpha_1,\cdots, \alpha_N)$ the $N$-body polarization reflection map.  Note that when
 $N=1,$  $\Pi(\alpha_1)$ is nothing but the parametric reflection map $\widetilde{B}(\alpha_1).$  When
 $N=2,$  it follows from the calculations in \eqref{3.40} and \eqref{3.45} that $\Pi(\alpha_1,\alpha_2)$  can be obtained as the composite  
 \begin{equation}\label{4.22}
 \widetilde{B}_1(\alpha_1) \widetilde{R}_{21}(\tau(\alpha_2), \alpha_1)\widetilde{B}_2(\alpha_2)\widetilde{R}_{12}(\alpha_1,\alpha_2).
 \end{equation}
 Thus $\Pi(\alpha_1,\alpha_2)$ is just  the map
 which appears in the formulation of the parametric reflection equation \eqref{4.13}.
\begin{theorem}\label{T: 4.4}
Under the above assumptions, the $N$-body polarization reflection map $\Pi(\alpha_1,\cdots, \alpha_N)$ is a symplectomorphism, when $(\mathbb{C}\mathbb{P}^{n-1})^N$ is equipped
with the symplectic $2$-form
\begin{equation}\label{4.23}
\Omega_{\alpha_1,\cdots, \alpha_N}=(\overline{\alpha}_1-\alpha_1)\omega_{\text{FS}}\oplus \cdots \oplus (\overline{\alpha}_N -\alpha_N)\omega_{\text{FS}}.
\end{equation}
\begin{proof}
Let $\widetilde{B}_i(\alpha_i),$ $1\leq N$ be the map from $(\mathbb{C}\mathbb{P}^{n-1})^{N}$ to itself which acts as $\widetilde{B}(\alpha_i)$ on the
$i$-th factor of $(\mathbb{C}\mathbb{P}^{n-1})^N$ and acts as the identity on the other factors.  Similarly, we define $\widetilde{R}_{ij}(\alpha_i,\alpha_j)$  (resp.~ $\widetilde{R}_{ij}(\tau(\alpha_i), \alpha_j)$)
for $1\leq i\leq j\leq N$ (resp.~ for $\leq 1\leq j<i \leq N$ ) as the map from $(\mathbb{C}\mathbb{P}^{n-1})^{N}$ to itself which acts as $\widetilde{R}(\alpha_i,\alpha_j)$
on the $i$-th and the $j$-th factor of $(\mathbb{C}\mathbb{P}^{n-1})^{N}$ and as the identity on the other factors.  Since the unique solution of the refactorization
problem can be obtained by applying Theorem 2.2 (a) repeatedly, it follows that we can obtain the map $\Pi(\alpha_1,\cdots, \alpha_N)$
as a composition of maps of the three types which we introduced above.   Now, from Theorem 2.3 (b), for each $(i,j)$ with $1\leq i<j\leq N$ (resp.~$1\leq j<i\leq N$)  the map $\widetilde{R}_{ij}(\alpha_i, \alpha_j)$ (resp.~ $\widetilde{R}_{ij}(\tau(\alpha_i), \alpha_j)$) is a symplectomorphism, when $(\mathbb{C}\mathbb{P}^{n-1})^{N}$ is equipped with the structure in \eqref{4.23}.   On the other hand, it follows from Proposition 4.3 (c) that for each $1\leq i\leq N,$ $\widetilde{B}_i(\alpha_i)$ is also a symplectomorphism of $((\mathbb{C}\mathbb{P}^{n-1})^{N}, \Omega_{\alpha_1,\cdots, \alpha_N}).$  Hence the assertion follows.
\end{proof}
\end{theorem}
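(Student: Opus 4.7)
The plan is to factor $\Pi(\alpha_1,\ldots,\alpha_N)$ as an explicit ordered composition of elementary building blocks and then verify that each such block is a symplectomorphism of $((\mathbb{C}\mathbb{P}^{n-1})^N,\Omega_{\alpha_1,\ldots,\alpha_N})$. The two kinds of blocks I need are the partial Yang--Baxter maps $\widetilde{R}_{ij}$ acting on a pair of factors and the partial reflection maps $\widetilde{B}_i$ acting on a single factor, with parameters drawn from $\{\alpha_\ell,\tau(\alpha_\ell)\}$.

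First, I would carry out the decomposition by iterating, at the projective level, the calculation already performed in \eqref{3.40} and \eqref{3.45}. Starting from the left-hand side of \eqref{4.20}, every refactorization of the type $g_{\alpha_i,\pi_{[p]}}\,g_{\alpha_j,\pi_{[q]}} = g_{\alpha_j,\cdot}\,g_{\alpha_i,\cdot}$, which is uniquely solvable by Theorem~2.2(a) under our parameter hypotheses, contributes one factor $\widetilde{R}_{ij}$ with appropriate parameters; every refactorization of the type $\sigma(g_{\alpha_i,\pi_{[p]}})\,g_{\alpha_i,\pi_{[p]}} = g_{\alpha_i,\cdot}\,\sigma(g_{\alpha_i,\cdot})$, handled by Proposition~3.1, contributes one factor $\widetilde{B}_i(\alpha_i)$. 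Threading these moves systematically to convert the left-hand product in \eqref{4.20} into the right-hand product yields $\Pi(\alpha_1,\ldots,\alpha_N)$ as a finite ordered composition of such factors; the $N=2$ instance recorded in \eqref{4.22} is the prototype.

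Second, I would check that each factor is symplectic with respect to $\Omega_{\alpha_1,\ldots,\alpha_N}$. The key observation is the identity $\overline{\tau(\alpha)}-\tau(\alpha)=\overline{\alpha}-\alpha$, which ensures that whenever a parameter $\alpha_\ell$ is replaced by $\tau(\alpha_\ell)$ during the decomposition, the coefficient attached to the $\ell$-th factor in $\Omega_{\alpha_1,\ldots,\alpha_N}$ is unchanged. Consequently, any $\widetilde{R}_{ij}$-factor (whether its parameters are $\alpha_i,\alpha_j$ or involve $\tau$) is a symplectomorphism of the $(i,j)$-block with $(\overline{\alpha}_i-\alpha_i)\omega_{\mathrm{FS}}\oplus(\overline{\alpha}_j-\alpha_j)\omega_{\mathrm{FS}}$ by Theorem~2.3(b), and since it is the identity on the remaining factors it preserves the whole $\Omega_{\alpha_1,\ldots,\alpha_N}$; each $\widetilde{B}_i(\alpha_i)$-factor preserves $\omega_{\mathrm{FS}}$ on the $i$-th factor by Proposition~4.3(c), and again the coefficient $\overline{\alpha}_i-\alpha_i$ is untouched.

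The main obstacle I anticipate is purely bookkeeping: to make the decomposition rigorous I must verify that every intermediate refactorization encountered in passing from the left-hand product to the right-hand product of \eqref{4.20} falls among those with a unique solution, i.e., that all the parameter pairs that appear satisfy $\{\alpha_a,\overline{\alpha}_a\}\cap\{\alpha_b,\overline{\alpha}_b\}=\emptyset$ for $a\neq b$ in $\{1,\ldots,2N\}$. This is precisely the standing hypothesis on $\alpha_1,\ldots,\alpha_N$ together with the definition $\alpha_{i+N}=\tau(\alpha_i)$. Once the decomposition is in place and each elementary factor is confirmed to be a symplectomorphism, $\Pi(\alpha_1,\ldots,\alpha_N)$ is a finite composition of symplectomorphisms of $((\mathbb{C}\mathbb{P}^{n-1})^N,\Omega_{\alpha_1,\ldots,\alpha_N})$ and hence itself a symplectomorphism, completing the proof.
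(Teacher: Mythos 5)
Your proposal is correct and follows essentially the same route as the paper's own proof: decompose $\Pi(\alpha_1,\ldots,\alpha_N)$ into elementary factors of type $\widetilde{R}_{ij}$ and $\widetilde{B}_i$ via repeated application of the uniquely solvable refactorization problems, then invoke Theorem 2.3(b) and Proposition 4.3(c) to see that each factor preserves $\Omega_{\alpha_1,\ldots,\alpha_N}$. Your explicit remark that $\overline{\tau(\alpha)}-\tau(\alpha)=\overline{\alpha}-\alpha$, so the coefficients in \eqref{4.23} are unaffected when a parameter is replaced by its mirror, is a useful point that the paper leaves implicit (it is noted only earlier, near \eqref{3.20}).
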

 We will wrap up this section by explaining the physical meaning of the polarization reflection map in the context of the $n$-Manakov system on the half-line $x\geq 0$ with mixed Dirichlet/Neumann boundary conditions at $x=0.$   We will also point out, for general Hermitian $U\in U(n),$ the relationship between the {\sl full} polarization
 scattering map in \eqref{2.30}, and that of the $N$-body polarization reflection map in \eqref{4.21}.
 
 Consider an $N$-soliton solution of the half-line problem, with mixed Dirichlet/Neumann boundary conditions at $x=0.$   As we pointed out earlier, this can be obtained from a $2N$-soliton solution on the full line with norming constants and the poles $\alpha_j$ satisfying certain mirror symmetry conditions \cite{CZ2}.    More precisely, when $t\to -\infty,$ the $2N$-soliton behaves like
the sum of $2N$ $1$-$soliton$ solutions characterized by $\alpha_j = \frac{1}{2}(u_j + iv_j)$ for the real solitons and $\tau(\alpha_j)=-\overline{\alpha}_j$ for the
`mirror' solitons, where $u_j, v_j >0$ for $j=1,\cdots, N.$   Thus $-2u_j$
is the velocity of the real $j$-th $1$-soliton on $x>0$ while  $2u_j$ is the velocity of the $j$-th `mirror' $1$-soliton on $x<0,$ $j=1,\cdots, N.$  Assume 
\begin{equation}\label{4.24}
0< u_1< u_2<\cdots < u_N
\end{equation}
so that the $1$-$solitons$ are arranged in the order $1,\cdots, N$  on the positive $x$-axis as $t\to -\infty.$   The system evolves towards the boundary at $x=0,$
where the real solitons interact with the `mirror' solitons which then turn into real solitons.    To summarize, we have the following scattering picture \cite{CZ1}:
\begin{equation}\label{4.25}
\begin{aligned}
& 2N, 2N-1,\cdots, N+1 | 1, 2,\cdots , N,\quad t\to -\infty,\\
& N, N-1,\cdots, 1 | N+1, N+2,\cdots, 2N, \quad t\to \infty,\\
\end{aligned}
\end{equation}
where the vertical bar stands for the boundary at $x=0.$
Consequently, the polarization scattering map is given by
\begin{equation}\label{4.26}
([U p_N^{-}],\cdots, [U p_1^{-}], [p_1^{-}],\cdots, [p_{N}^{-}])\mapsto ([U p_N^{+}],\cdots, [ U p_1^{+}], [p_1^{+}],\cdots, [p_N^{+}])
\end{equation}
with $U=VI_{S}V^*$,  where $p_j^{-}$ (resp.~ $U p_j^{+}$) is the asymptotic unit polarization vector of the $j$-th real soliton as $t\to-\infty$ (resp.~ $t\to \infty$), $j=1,\cdots, N.$

With this information, we can now interpret the corresponding $N$-body polarization reflection map, given by $\Pi(\alpha_1,\cdots, \alpha_N),$ as defined in \eqref{4.21}, in the case where $U=V I_{S}V^*$ and with the above
assumptions.   And the general definition in \eqref{4.21} is motivated by this scenario.

To understand the relationship between the polarization scattering map in \eqref{4.26} and the $N$-body polarization reflection map, we introduce further notations.
For this purpose, let $(\widetilde{c}_{U})^N = \widetilde{c}_{U}\times \cdots \widetilde{c}_{U}$  ($N$ copies), and let 
\begin{equation}\label{4.27}
\beta: (\mathbb{C}\mathbb{P}^{n-1})^N\longrightarrow (\mathbb{C}\mathbb{P}^{n-1})^N,\,\,([p_1],\cdots, [p_N])\mapsto ([p_N],\cdots, [p_1]).
\end{equation}
Denote the graph of $(\widetilde{c}_{U})^N\circ \beta$ by
\begin{equation}\label{4.28}
\widetilde{\mathcal{G}}_{U}^N=\{ ((\widetilde{c}_{U})^N\circ \beta(x), x)\mid x\in (\mathbb{C}\mathbb{P}^{n-1})^N\}.
\end{equation}
Then the polarization scattering map in \eqref{4.26} is the reduction 
\begin{equation}\label{4.29}
\mathcal{S}_{\text{red}}(\tau(\alpha_N),\cdots, \tau(\alpha_1),\alpha_1,\cdots, \alpha_N): \widetilde{\mathcal{G}}_{U}^N\longrightarrow
\widetilde{\mathcal{G}}_{U}^N .
\end{equation}
Therefore, if $\widetilde{s}_N: \widetilde{\mathcal{G}}_{U}^N\longrightarrow  \widetilde{\mathcal{G}}_{U}^N$ is the map induced by the permutation map
on $(\mathbb{C}\mathbb{P}^{n-1})^N\times (\mathbb{C}\mathbb{P}^{n-1})^N$ that maps $(X,Y)$ to $(Y,X),$ we have
\begin{equation}\label{4.30}
\begin{aligned}
& \Pi(\alpha_1,\cdots, \alpha_N)\\
= & \beta\circ ((\widetilde{c}_{U})^N\circ \beta, id_{(\mathbb{C}\mathbb{P}^{n-1})^N})^{-1} (\widetilde{s}_N\circ \mathcal{S}_{\text{red}}(\tau(\alpha_N),\cdots, \tau(\alpha_1),\alpha_1,\cdots, \alpha_N))\circ ((\widetilde{c}_{U})^N\circ \beta, id_{(\mathbb{C}\mathbb{P}^{n-1})^N}),\\
\end{aligned}
\end{equation}
where $\alpha$ stands for $(\alpha_1,\cdots, \alpha_N)$ and similarly $\tau(\alpha)$ is the shorthand for $(\tau(\alpha_1),\cdots, \tau(\alpha_N)).$   Note that in the case when $N=1,$ the map $\beta$ is just the
identity map and so in this case, the relation in \eqref{4.30} is just the relation in \eqref{4.15}.   It is in this
sense that we have a generalization of the relation in \eqref{4.15}. 
\begin{remark}\label{R: 4.5}
For given projectors $P_1^{-}\in P(n)_{k_1},\cdots, P_N^{-}\in P(n)_{k_N},$ we can consider the refactorization problem
\begin{equation}\label{4.31}
\begin{aligned}
& \sigma(g_{\alpha_1, P_1^{-}}\cdots g_{\alpha_N, P_N^{-}}) g_{\alpha_1,  P_1^{-}}\cdots g_{\alpha_N, P_N^{-}} \\
= & g_{\alpha_N, P_N^{+}}\cdots g_{\alpha_1, P_1^{+}} \sigma(g_{\alpha_N, P_N^{+}}\cdots g_{\alpha_1, P_1^{+}}).\\
\end{aligned}
\end{equation}
Clearly, we can formulate the analog of Theorem 4.4 in this context, by introducing the map
\begin{equation}\label{4.32}
(P_1^{-},\cdots P_N^{-})\mapsto (c_{U,k_1}(P_1^{+}),\cdots , c_{U, k_N}(P_N^{+})).
\end{equation}
We will leave the details to the interested reader.
\end{remark}

\section{Reflection maps and Poisson Lie groups}

We begin by formulating two results in the context of a Poisson Lie group $(G, \pi_{G}).$  Then we will apply the general result to the rational 
loop group $K_{\text{rat}}.$

We will make the following assumptions.
\vskip .1in
\noindent $\underline{(A1)}$  There exist a left partial group action $\xi$ of $G$ on itself given by a family of subsets $\{G_g\}_{g\in G}$ of $G$ and a family of bijections
$\{\xi_{g}: G_{g^{-1}}\longrightarrow G_g\}_{g\in G}$ satisfying the usual conditions \cite{B, L1}.  We also assume the existence of  a right partial group action $\eta$ of 
$G$ on itself given by the same family of subsets above and a family of bijections $\{\eta_{g}: G_{g^{-1}}\longrightarrow G_g\}_{g\in G}.$
\newline
\noindent $\underline{(A2)}$  Let 
\begin{equation}\label{5.1}
G\ast G =\{ (g,h)\in G\times G\mid g\in G_{h^{-1}}, h\in G_{g^{-1}}\}.
\end{equation}
We assume $G\ast G$ is an open submanifold of $G\times G$ and that the partial group actions are compatible in the sense that
\begin{equation}\label{5.2}
gh = \xi_{g} (h) \eta_{h} (g) \quad \text{for all} \,\,\, (g,h)\in G\ast G .
\end{equation}
In addition, we assume that
\begin{equation}\label{5.3}
G_{g^{-1}} = G_{g}\quad \text{for all}\,\, g\in G
\end{equation}
so that  $\xi_{g}$ and $\eta_{g}$ are maps from $G_{g}$ to itself.   

From the definition of $G\ast G,$ it is clear that
 $G\ast G$ is symmetric in the sense that
\begin{equation}\label{5.4}
(g,h)\in G\ast G \iff (h, g) \in G\ast G.
\end{equation}
\newline
\noindent $\underline{(A3)}$  For all $(g, h) \in G\ast G,$ we assume that
\begin{equation}\label{5.5}
G_{\xi_{g}(h)} = G_{h}, \quad G_{\eta_{h}(g)} = G_{g}.
\end{equation}
In addition,  we assume that
\begin{equation}\label{5.6}
g_1\in G_{g_2} \iff g_2\in G_{g_1}\quad \text{for}\,\, g_1, g_2\in G.
\end{equation}
\noindent  $\underline{(A4)}$  Let $\sigma: G\longrightarrow G$ be a Poisson involution which is also a Lie group anti-morphism such that
\begin{equation}\label{5.7}
(g, h) \in G\ast G \implies (\sigma(h), \sigma(g)) \in G\ast G, 
\end{equation}
and moreover, 
\begin{equation}\label{5.8}
\sigma(\xi_{g}(h)) = \eta_{\sigma(g)}(\sigma(h)),  \,\, \sigma(\eta_{h}(g)) = \xi_{\sigma(h)}(\sigma(g)).
\end{equation}
Let the graph of
$\sigma$ be the submanifold
\begin{equation}\label{5.9}
G(\sigma):=\{(\sigma(g), g)\mid g\in G\}.
\end{equation}
We assume
\begin{equation}\label{5.10}
G^{\prime}:=(\sigma, id_{G})^{-1}(G(\sigma)\cap (G\ast G))\neq \emptyset,
\end{equation}
where $(\sigma, id_{G})$ is the diffeomorphism defined by
\begin{equation}\label{5.11}
 (\sigma, id_{G}) : G\longrightarrow G(\sigma), g\mapsto (\sigma(g), g).
\end{equation}
In addition, we assume that 
\begin{equation}\label{5.12}
\text{the equation}\,\, x \sigma(x) =1\,\,\, \text{has only the solution}\,\,\, x=1.
\end{equation}
\begin{remark}
If in assumption (A1), $\xi$ and $\eta$ are group actions (rather than partial group actions), then we have $G_{g} = G$ for all $g\in G.$   In this case, we
can replace $G\ast G$ in assumption (A2) by the product $G\times G$ and (5.3) will hold automatically.   Also, there is no need to make the
assumptions in (A3).    However, in (A4), we still have to assume the relations in (5.8) and that
the equation $x\sigma(x)=1$ has only the trivial solution $x=1.$    The reader will see that they are essential in our discussion below.
\end{remark}

In the following, we will equip $G\times G$ with the product Poisson structure so that the open submanifold $G\ast G$ of $G\times G$ has an induced Poisson
bracket.  Similarly, we will equip the open submanifold $G^{\prime}$ of $G$ with the induced Poisson bracket.   
\begin{definition}\label{D:5.2}
We define
\begin{equation}\label{5.13}
R: G\ast G\longrightarrow G\ast G
\end{equation}
by the formula
\begin{equation}\label{5.14}
R(g, h) = (\eta_{h}(g), \xi_{g}(h)),\,\,\, (g,h)\in G\ast G
\end{equation}
and define the (putative) {\sl reflection map} $\mathbf{B}$ associated with $R$ and $\sigma$ by
\begin{equation}\label{5.15}
\mathbf{B}: G^{\prime}\longrightarrow G^{\prime}, g\mapsto \eta_{g} (\sigma(g)).
\end{equation}
\end{definition}
\begin{remark}\label{R:5.3}
(a) In the definition of $\mathbf{B},$ the reason why we know $\eta_{g} (\sigma(g))\in G^{\prime}$ follows from $\sigma(\eta_{g} (\sigma(g))=\xi_{\sigma(g)}(g)\in G_{\sigma(g)} = G_{\eta_{g}(\sigma(g))}$ and $\eta_{g}(\sigma(g))\in G_{g} = G_{\xi_{\sigma(g)}(g)},$ where we have used the definition of $G^{\prime}$ and (5.5).
\newline
(b) Defining the reflection map $\mathbf{B}$ by the formula in \eqref{5.15} is a matter of choice we pick here.   The fact is that instead of $\mathbf{B},$ we could use $\mathbf{B}^{\prime}: G^{\prime}\longrightarrow G^{\prime}, \sigma(g)\mapsto \xi_{\sigma(g)}(g)$  because of the mirror symmetry.    Indeed, if we call $\mathbf{B}$ the reflection map, then 
 we might call $\mathbf{B}^{\prime}$ the {\sl mirror reflection map}, as motivated by the collisions scenario in Section 4  (since we can consider the map which keeps
 track of the change in polarization vector of the mirror $1$-soliton).
\end{remark}
\begin{theorem}\label{T: 5.3}   Under assumptions (A1)-(A3) above,  
\newline
(a) $R$ is a Yang-Baxter map, i.e.,
\begin{equation}\label{5.16}
R_{12} R_{13} R_{23} = R_{23} R_{13} R_{12},
\end{equation}
where we interpret \eqref{5.16} as an equality of maps from $G^{(3)}$ to itself, where
\begin{equation}\label{5.17}
G^{(3)} =\{ (g_1, g_2, g_3)\in G\times G\times G\mid (g_i, g_j) \in G\ast G, \,i\neq j\}.
\end{equation}
(b)  $R$ is a Poisson diffeomorphism,  when the open Poisson submanifold $G\ast G$ is equipped with
the structure induced from $G\times G.$
\end{theorem}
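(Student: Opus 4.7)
For part (a), I would mimic the standard argument that a factorization map satisfies the set-theoretic Yang--Baxter equation (cf.~Corollary 2.6). Given $(g_1, g_2, g_3) \in G^{(3)}$, apply $R_{12} R_{13} R_{23}$ and $R_{23} R_{13} R_{12}$ and track the effect on the product $g_1 g_2 g_3 \in G$. Each application of $R_{ij}$ at a pair $(a,b)$ occupying positions $i,j$ replaces it by $(\eta_b(a), \xi_a(b))$, which by \eqref{5.2} amounts to rewriting $ab = \xi_a(b)\eta_b(a)$ inside the product. After three such pairwise rewrites in either order, associativity of multiplication in $G$ yields a factorization $g_1 g_2 g_3 = h_3 h_2 h_1$, and the uniqueness of each individual pairwise refactorization (built into the partial action framework of (A1)--(A2)) forces both orderings to yield the same triple $(h_1, h_2, h_3)$. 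The one technical point is that at each intermediate stage the pair to which $R_{ij}$ is to be applied must lie in $G \ast G$; this is guaranteed by the invariance conditions $G_{\xi_g(h)} = G_h$ and $G_{\eta_h(g)} = G_g$ in (A3), together with the symmetry \eqref{5.6}.

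For part (b), the strategy is to exploit the multiplicativity of $\pi_G$ on the Poisson Lie group $G$. Since multiplication $m: G \times G \to G$ is a Poisson map, both of its restrictions $\Phi := m|_{G \ast G}$ and $\tilde\Phi := m \circ \tau|_{G \ast G}$ are Poisson, where $\tau$ is the swap $(g',h') \mapsto (h',g')$. By construction $\tilde\Phi \circ R = \Phi$. I would deduce that $R$ is Poisson by showing that its graph
\[
\Gamma_R = \{((g,h),(\eta_h(g),\xi_g(h))) : (g,h) \in G \ast G\} \subset (G \ast G) \times (G \ast G)^{-}
\]
is a coisotropic submanifold of the product Poisson manifold. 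This should follow from the coisotropicity of the graph of $m$ in $(G \times G) \times G^{-}$ (which is equivalent to the multiplicativity of $\pi_G$), combined with the identity $\xi_g(h) \eta_h(g) = gh$ that characterizes $R$. Equivalently, one can compute the pullback brackets $\{f \circ R, f' \circ R\}$ directly, using multiplicativity applied to both sides of $gh = \xi_g(h) \eta_h(g)$.

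The main obstacle will be in part (b): the abstract hypotheses (A1)--(A3) strip away the concrete $r$-matrix structure available in the proof of Theorem 2.9 for $K_{\text{rat}}$ in \cite{L1}, so the Poisson property must be extracted purely from multiplicativity of $\pi_G$ and the compatibility of the partial actions. I expect the coisotropic-graph route to be the cleanest, as it reduces the verification to standard multiplicativity plus the algebraic check that $\Gamma_R$ sits inside the preimage of the diagonal of $G$ under the map $((g,h),(g',h')) \mapsto (gh, h'g')$. Part (a), by contrast, is essentially bookkeeping once the domain-preservation afforded by (A3) is used to keep the intermediate triples inside $G^{(3)}$.
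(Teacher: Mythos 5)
Your part (a) follows essentially the same route as the paper: one first verifies, using \eqref{5.5} and \eqref{5.6} of (A3), that every intermediate pair lies in $G\ast G$, writes out both sides explicitly as in \eqref{5.19} and \eqref{5.20}, and then the paper defers the equality of the resulting triples to the argument of Corollary 5.2 of \cite{L1}. One small correction to your justification of that last step: ``uniqueness of each individual pairwise refactorization'' is not an axiom here (in the abstract setting there is no notion of the ``type'' of a factor that would make such a uniqueness statement meaningful), and it would not by itself identify two three-fold factorizations of $g_1g_2g_3$. What actually closes the argument is that the outer components of \eqref{5.19} and \eqref{5.20} coincide by the composition laws of the partial actions ($\xi_{g_1g_2}=\xi_{g_1}\circ\xi_{g_2}$, $\eta_{g_2g_3}=\eta_{g_3}\circ\eta_{g_2}$ together with \eqref{5.2}), after which the two reversed-order factorizations of $g_1g_2g_3$ have equal first and last factors, so the middle factors agree by cancellation in the group.

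Part (b) is where the genuine gap lies. The identity $m\circ s\circ R=m$ on $G\ast G$ (your $\tilde\Phi\circ R=\Phi$) is only a necessary condition for $R$ to be Poisson, and the coisotropicity argument does not close: the set $C=\{(g,h,g',h')\mid gh=h'g'\}$ is indeed coisotropic in $(G\times G)\times(G\times G)^{-}$, being the preimage of the (coisotropic) diagonal under the Poisson submersion $((g,h),(g',h'))\mapsto(gh,\,h'g')$, but $\Gamma_R$ is a submanifold of $C$ of codimension $\dim G$, and coisotropicity is not inherited by submanifolds. Multiplicativity of $\pi_G$ constrains only the product $\xi_g(h)\eta_h(g)$, never the individual factors, so no argument built solely from (A1)--(A3) and multiplicativity can succeed. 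Indeed it cannot: $\xi_g(h)=ghg^{-1}$, $\eta_h(g)=g$ is a compatible pair of genuine actions satisfying (A1)--(A3), yet $R(g,h)=(g,ghg^{-1})$ is not a Poisson map of $G\times G$ for a general Poisson Lie group with $\pi_G\neq 0$, since the mixed brackets $\{f_1(g),f_2(ghg^{-1})\}$ do not vanish in general (conjugation involves the inversion map, which is anti-Poisson). The paper's own proof of (b) is a one-line appeal to the argument of Theorem 5.13 of \cite{L1}, which supplies exactly the input you set aside: the coboundary ($r$-matrix) form \eqref{2.49} of the bracket and the fact that $\xi$ and $\eta$ are Poisson group partial actions in the sense of Definition 2.8 (cf.\ Theorem 2.9(b)). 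To repair your proposal you would need to establish coisotropicity of the graphs of $\xi$ and $\eta$ themselves and derive the coisotropicity of $\Gamma_R$ from that, not from the graph of $m$.
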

\begin{proof}   (a)  Take $(g_1, g_2, g_3) \in G^{(3)}.$  Then clearly, $R_{23} (g_1, g_2, g_3)$ is defined.      In order for $R_{13} R_{23}(g_1, g_2,g_3)$ to be
defined, we require that $g_1\in G_{\xi_{g_2}(g_3)}$ and $\xi_{g_2}(g_3)\in G_{g_1}.$   As $G_{\xi_{g_2}(g_3)}= G_{g_3}$ by (A3), the first condition is 
satisfied and so the second condition follows by \eqref{5.6} in (A3).   This gives
\begin{equation}\label{5.18}
R_{13} R_{23}(g_1, g_2, g_3) = (\eta_{\xi_{g_2}(g_3)}(g_1), \eta_{g_3}(g_2), \xi_{g_1 g_2}(g_3)).
\end{equation}
To show that $R_{12}R_{13} R_{23}(g_1, g_2, g_3)$ is defined, we require that $\eta_{\xi_{g_2}(g_3)}(g_1)\in G_{\eta_{g_3}(g_2)}= G_{g_2}$ and 
$\eta_{g_3}(g_2)\in G_{\eta_{\xi_{g_2}(g_3)}(g_1)}= G_{g_1}.$   However, the former follows as we have $g_2 \in G_{g_1} =  G_{\eta_{\xi_{g_2}(g_3)}(g_1)}$
and so the second one is true as well by (5.6) in (A3).     So this leads to
\begin{equation}\label{5.19}
R_{12} R_{13} R_{23}(g_1, g_2, g_3) = (\eta_{g_2 g_3}(g_1), \xi_{\eta_{\xi_{g_2}(g_3)}(g_1)}(\eta_{g_3}(g_2)), \xi_{g_1g_2}(g_3)).
\end{equation}
In a similar way, we can show that $R_{23} R_{13}R_{12}(g_1, g_2, g_3)$ is defined and we have
\begin{equation}\label{5.20}
R_{23} R_{13} R_{12} (g_1,g_2,g_3) = (\eta_{g_2g_3}(g_1), \eta_{\xi_{\eta_{g_2}(g_1)}(g_3)}(\xi_{g_1}(g_2)), \xi_{g_1g_2}(g_3))
\end{equation}
The argument to show that the expressions in \eqref{5.19} and \eqref{5.20} are equal is identical to the one in Corollary 5.2 of \cite{L1}.
\newline
(b)  The proof follows the same argument as in the proof of Theorem 5.13 in \cite{L1}.
\end{proof}

\begin{lemma}\label{L: 5.5}
 The map 
\begin{equation}\label{5.21}
\Sigma: G\times G\longrightarrow G\times G,  (g,h) \mapsto (\sigma(h), \sigma(g)), (g,h)\in G\times G
\end{equation}
is a Poisson involution with stable locus $(G\times G)^{\Sigma}$ given by the graph of $\sigma,$ defined in
\eqref{5.9} above.
Hence $G(\sigma)$ is a Dirac submanifold of $G\times G$ and the bundle map of its induced Poisson structure is given by
the formula
\begin{equation}\label{5.22}
\begin{aligned}
& \pi^{\#}_{G(\sigma)} (\sigma(g), g) (a, b)\\
 =  & \frac{1}{2} ( \pi^{\#}_{G}(\sigma(g)) (a) +  \pi^{\#}_{G} (\sigma(g)) T_{\sigma(g)}^{*} \sigma (b), 
  \pi^{\#}_{G}(g)(b) + \pi^{\#}_{G}(g) T^{*}_{g} \sigma (a)). \\
\end{aligned}
\end{equation}
Consequently, the open submanifold $G^{\prime}(\sigma)$ of $G(\sigma)$ defined by
\begin{equation}\label{5.23}
G^{\prime}(\sigma) := G(\sigma)\cap (G \ast G)
\end{equation}
carries an induced Poisson structure $\pi_{G^{\prime}(\sigma)}.$
\end{lemma}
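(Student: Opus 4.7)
The plan is to verify in turn that $\Sigma$ is an involution, a Poisson map, has $G(\sigma)$ as its stable locus, and then to extract formula \eqref{5.22} by applying Theorem \ref{T: 2.13} together with the general formula \eqref{2.56}. First, $\Sigma^{2}(g,h) = \Sigma(\sigma(h),\sigma(g)) = (\sigma^{2}(g),\sigma^{2}(h)) = (g,h)$, so $\Sigma$ is an involution. To see that $\Sigma$ is Poisson, I factor it as $\Sigma = (\sigma\times\sigma)\circ S$, where $S\colon G\times G\to G\times G$, $(g,h)\mapsto (h,g)$, is the permutation map. The product Poisson structure on $G\times G$ is symmetric under $S$ because both factors carry the same bracket $\pi_{G}$, and $\sigma\times\sigma$ is Poisson since $\sigma$ is; thus $\Sigma$ is Poisson as a composition of Poisson maps. (The Lie group anti-morphism property of $\sigma$ is not needed here, but will be used later.) The equation $\Sigma(g,h)=(g,h)$ forces $g=\sigma(h)$, and the companion condition $h=\sigma(g)$ then follows by applying $\sigma$ and using $\sigma^{2}=\mathrm{id}_{G}$; hence $(G\times G)^{\Sigma} = \{(\sigma(h),h) : h\in G\} = G(\sigma)$.

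Applying Theorem \ref{T: 2.13} to $\Sigma$ immediately yields that $G(\sigma)$ is a Dirac submanifold of $(G \times G,\pi_{G}\oplus\pi_{G})$, with complementary bundle $V_{G(\sigma)} = \bigcup_{x\in G(\sigma)}\ker(T_{x}\Sigma + I)$, and hence carries an induced Poisson structure $\pi_{G(\sigma)}$ whose bundle map is given by \eqref{2.56}, $\pi^{\#}_{G(\sigma)} = pr\circ \pi^{\#}\circ pr^{*}$. At a fixed point $x = (\sigma(g),g)$ the tangent involution $T_{x}\Sigma$ acts on $T_{\sigma(g)}G\oplus T_{g}G$ by $(v,w)\mapsto (T_{g}\sigma(w),T_{\sigma(g)}\sigma(v))$, so the Whitney projection onto $T_{x}G(\sigma)$ is the symmetrizer $pr = \tfrac{1}{2}(I + T_{x}\Sigma)$; dually $pr^{*} = \tfrac{1}{2}(I + T_{x}^{*}\Sigma)$ where $T_{x}^{*}\Sigma$ sends $(a,b)\mapsto (T^{*}_{\sigma(g)}\sigma(b), T^{*}_{g}\sigma(a))$. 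Expanding $pr\circ \pi^{\#}\circ pr^{*}$ on an input $(a,b)\in T^{*}_{x}(G\times G)$ produces nominally four terms per component, with an overall factor of $1/4$; I then use the Poisson property of $\sigma$ in the form $\pi^{\#}_{G}(\sigma(g)) = T_{g}\sigma\circ\pi^{\#}_{G}(g)\circ T^{*}_{g}\sigma$, together with the identity $T^{*}_{g}\sigma\circ T^{*}_{\sigma(g)}\sigma = \mathrm{id}$ which follows from $\sigma\circ\sigma = \mathrm{id}_{G}$, to check that within each component the four terms collapse into two equal ones. This converts the nominal $1/4$ into the $1/2$ appearing in \eqref{5.22} and yields the stated formula. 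As a consistency check, the same Poisson identity shows that the resulting pair lies in $\{(T_{g}\sigma(w),w) : w\in T_{g}G\} = T_{x}G(\sigma)$, as required.

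Finally, since $G\ast G$ is open in $G\times G$ by assumption (A2), its intersection $G^{\prime}(\sigma) = G(\sigma)\cap(G\ast G)$ is open in $G(\sigma)$, and therefore inherits a Poisson structure $\pi_{G^{\prime}(\sigma)}$ by restriction of $\pi_{G(\sigma)}$ to an open Poisson submanifold. The main obstacle in executing this plan is purely notational bookkeeping in the derivation of \eqref{5.22}: one must carefully distinguish the two different transposes $T^{*}_{g}\sigma$ and $T^{*}_{\sigma(g)}\sigma$, and invoke the Poisson condition on $\sigma$ in the correct direction at each step, but no computation beyond routine linear algebra is required.
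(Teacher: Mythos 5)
Your proposal is correct and follows essentially the same route as the paper: factoring $\Sigma=(\sigma\times\sigma)\circ s$ to see it is a Poisson involution, invoking Theorem \ref{T: 2.13} and formula \eqref{2.56}, computing $pr$ and $pr^{*}$ explicitly at a point of the graph, and using the Poisson property of $\sigma$ together with $T^{*}_{g}\sigma\circ T^{*}_{\sigma(g)}\sigma=\mathrm{id}$ to reduce the four terms to two and obtain \eqref{5.22}. Your identification of $pr$ as the symmetrizer $\tfrac12(I+T_{x}\Sigma)$ is a clean way to package the ``straightforward but lengthy calculation'' the paper alludes to; the only detail you omit is the remark that $G^{\prime}(\sigma)\neq\emptyset$ by (A4), which is needed for the final sentence to have content.
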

\begin{proof}
 Let $s: G\times G\longrightarrow G\times G$ be the swap map, given by $s(g, h) = (h, g),$ $(g, h)\in G\times G.$   Then
clearly $\Sigma = (\sigma\times \sigma)\circ s.$   Since both $s$ and $\sigma\times \sigma$ are Poisson involutions, it
follows that $\Sigma$ is a Poisson involution and the assertion about its stable locus is clear.   To compute the bundle map of the induced Poisson structure
on the Dirac submanifold $(G\times G)^{\Sigma} = G(\sigma),$ we make use of the formula
\begin{equation}\label{5.24}
\pi^{\#}_{G(\sigma)} = pr \circ \pi^{\#}_{G\times G}\mid_{G(\sigma)} \circ\, pr^{*},
\end{equation}
where $pr: T_{G(\sigma)}(G\times G)\longrightarrow T G(\sigma)$ is the projection map induced by the vector bundle decomposition
\begin{equation}\label{5.25}
T_{G(\sigma)} (G\times G) = T G(\sigma) \oplus \bigcup _{(\sigma(g), g)\in G(\sigma)} \text{ker}\,(T_{(\sigma(g), g)} \Sigma +1).
\end{equation}
From
\begin{equation}\label{5.26}
\begin{aligned}
& T_{(\sigma(g), g)} G(\sigma) =\{ (T_g \sigma (v), v) \mid v\in T_{g} G\},\\
& \text{ker}\,(T_{(\sigma(g), g)} \Sigma +1)=\{ (-T_{g} \sigma (v), v) \mid v\in T_{g} G\},\\
\end{aligned}
\end{equation} 
a direct calculation shows that
\begin{equation}\label{5.27}
pr_{(\sigma(g), g)} (v, w) = \frac{1}{2} ( v + T_{g}\sigma (w),  w + T_{\sigma(g)} \sigma (v)).
\end{equation}
Using this, another computation gives 
\begin{equation}\label{5.28}
pr^{*}_{(\sigma(g), g)} (a,b) = \frac{1}{2} (a + T_{\sigma(g)}^{*} \sigma (b),  b + T_{g}^{*}\sigma (a)).
\end{equation}
By making use of  \eqref{5.24}, \eqref{5.27}, and \eqref{5.28}, a straightforward but lengthy
calculation then gives the formula in \eqref{5.22}.   Lastly, it follows from (A4) that $G^{\prime}(\sigma)\neq \emptyset,$ hence the assertion is clear.
\end{proof}
In order to analyze $\bold{B},$  we introduce the diffeomorphism
\begin{equation}\label{5.29}
 (\sigma, id_{G})^{\prime} : G^{\prime} \longrightarrow G^{\prime}(\sigma), g\mapsto (\sigma(g), g).
\end{equation}
If $i_{G^{\prime}(\sigma)}: G^{\prime}(\sigma)\longrightarrow G(\sigma)$ and $i_{G^{\prime}}: G^{\prime} \longrightarrow G$ are the embedding maps, we 
have the relation
\begin{equation}\label{5.30}
i_{G^{\prime}(\sigma)}\circ (\sigma, id_{G})^{\prime} = (\sigma, id_{G})\circ i_{G^{\prime}}.
\end{equation}

Clearly, we can push the Poisson structure on $G$ forward to $G(\sigma)$ using this map so that $(\sigma, id_{G})$ is a Poisson diffeomorphism
when its codomain is equipped with the pushforward structure.    We now compute this structure and describe its consequences.

\begin{lemma}\label{L: 5.6}
(a)  For all $g\in G,$ $(a,b)\in T_g^{*} G(\sigma),$ we have
\begin{equation}\label{5.31}
\begin{aligned}
& T_g (\sigma, id_{G})\circ \pi^{\#}_{G} (g) \circ T_{g}^{*}(\sigma, id_{G}) (a,b)\\
= & ( T_g\sigma\pi^{\#}_{G}(g)T^*_g\sigma (a) + T_g\sigma \pi^{\#}_{G}(g)(b), 
 \pi^{\#}_{G}(g)T^*_{g}\sigma (a) +  \pi^{\#}_{G}(g) (b))\\
= & 2 \pi^{\#}_{G(\sigma)}(\sigma(g), g)(a,b)
\end{aligned}
\end{equation}
so that $(\sigma, id_{G}): (G, \pi_{G})\longrightarrow (G(\sigma), 2 \pi_{G(\sigma)})$ is a Poisson diffeomorphism.
\newline
(b)  The map $(\sigma, id_{G})^{\prime} : (G^{\prime}, \pi_{G^{\prime}})\longrightarrow (G^{\prime}(\sigma), 2 \pi_{G^{\prime}(\sigma)})$
is a Poisson diffeomorphism.
\end{lemma}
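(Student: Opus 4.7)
The plan is to prove part (a) by directly unpacking the composition in \eqref{5.31} and comparing component-by-component with the formula \eqref{5.22} from Lemma 5.5, using just two facts about $\sigma$: that it is a Poisson map and that $\sigma^{2} = id_{G}$. Part (b) will then follow formally from (a) together with the relation \eqref{5.30}.

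First I would compute the tangent and cotangent maps of $(\sigma, id_{G})$. From $(\sigma, id_{G})(g) = (\sigma(g), g)$ one reads off $T_{g}(\sigma, id_{G})(v) = (T_{g}\sigma(v), v)$ for $v \in T_{g}G$, and dualizing yields $T_{g}^{*}(\sigma, id_{G})(a, b) = T_{g}^{*}\sigma(a) + b$. Applying $\pi_{G}^{\#}(g)$ to the resulting covector and then $T_{g}(\sigma, id_{G})$ to the resulting tangent vector produces the first equality of \eqref{5.31} by a purely mechanical calculation.

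For the second equality, the formula \eqref{5.22} reduces the desired identity to the two relations
\[
T_{g}\sigma \circ \pi_{G}^{\#}(g) \circ T_{g}^{*}\sigma = \pi_{G}^{\#}(\sigma(g)), \qquad T_{g}\sigma \circ \pi_{G}^{\#}(g) = \pi_{G}^{\#}(\sigma(g)) \circ T_{\sigma(g)}^{*}\sigma.
\]
The first is precisely the statement that $\sigma$ is a Poisson map at $g$, which is built into assumption (A4). For the second, I would write down the Poisson property at $\sigma(g)$, namely $T_{\sigma(g)}\sigma \circ \pi_{G}^{\#}(\sigma(g)) \circ T_{\sigma(g)}^{*}\sigma = \pi_{G}^{\#}(g)$, and compose on the left with $T_{g}\sigma$. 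The chain rule applied to $\sigma \circ \sigma = id_{G}$ gives $T_{g}\sigma \circ T_{\sigma(g)}\sigma = id_{T_{\sigma(g)}G}$, which collapses the left-hand side to $\pi_{G}^{\#}(\sigma(g)) \circ T_{\sigma(g)}^{*}\sigma$, yielding the desired identity. This step is the only nontrivial ingredient; it is where the involutive nature of $\sigma$ genuinely enters, and it is the one place where mild care is needed with base points. Everything else is bookkeeping.

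For part (b), the relation \eqref{5.30} expresses $(\sigma, id_{G})'$ as the restriction of $(\sigma, id_{G})$ to the open submanifolds $G' \subset G$ and $G'(\sigma) \subset G(\sigma)$, whose induced Poisson structures are precisely the restrictions of $\pi_{G}$ and $\pi_{G(\sigma)}$. Since being a Poisson diffeomorphism is a local (in fact, pointwise) condition on the anchor map, (b) is an immediate consequence of (a). I do not expect any real obstacle in this argument; the main opportunity for error is keeping track of the two different base points $g$ and $\sigma(g)$ in the identification of the various tangent and cotangent spaces.
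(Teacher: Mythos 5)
Your proposal is correct and follows essentially the same route as the paper: compute $T_g(\sigma,id_G)$ and $T_g^*(\sigma,id_G)$, expand mechanically to get the middle line of \eqref{5.31}, and match against \eqref{5.22} using the Poisson property of $\sigma$, with (b) deduced from \eqref{5.30}. The only difference is presentational: the paper simply asserts the identity $T_g\sigma\,\pi_G^{\#}(g)=\pi_G^{\#}(\sigma(g))\,T^*_{\sigma(g)}\sigma$ as a consequence of $\sigma$ being Poisson, whereas you make explicit the (correct) extra use of $\sigma\circ\sigma=id_G$ needed to cancel the cotangent factor.
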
  
\begin{proof}
(a)  We have the formulas
\begin{equation}\label{5.32}
T_g (\sigma, id_{g}) (v) = (T_{g}\sigma(v), v), \quad T_{g}^{*} (\sigma, id_{G})(a,b) = b + T_{g}^{*}\sigma(a)
\end{equation}
from which we obtain the second line in \eqref{5.31}.   To pass from the second line to the last line in \eqref{5.31}, we use
the fact that $\sigma$ is a Poisson map from which we find $T_{g}\sigma \pi^{\#}_{G}(g) = \pi^{\#}_{G}(\sigma(g))T^{*}_{\sigma(g)}\sigma.$
The assertion therefore follows by comparing with the formula in \eqref{5.22}.
\newline
(b)  This is a consequence of part (a) and the relation in \eqref{5.30}.
\end{proof}
With this preparation, we are now ready to establish the following.   For this purpose, consider the reduction of the
map $\Sigma$ in \eqref{5.21} to $G\ast G$:
\begin{equation}\label{5.33}
\Sigma\mid_{G\ast G} : G\ast G\longrightarrow G\ast G,
\end{equation}
in the sense of Theorem 2.14, which is well defined by (A4).   Moreover, it is a Poisson involution since $\Sigma$ is a Poisson involution and the embedding map of $G\ast G$ into $G\times G$
is Poisson.
\begin{theorem}\label{T: 5.7}  (a)  The map $R$ commutes with $\Sigma\mid{G\ast G},$ i.e. $\Sigma\mid_{G\ast G}\circ R = R\circ \Sigma\mid_{G\ast G}$ and therefore its
reduction
\begin{equation}\label{5.34}
R_{\text{red}} : G^{\prime}(\sigma)\longrightarrow G^{\prime}(\sigma)
\end{equation}
is a Poisson diffeomorphism, when $G^{\prime}(\sigma)$ is equipped with any nonzero multiple of $\pi_{G^{\prime}(\sigma)}.$  That is,
$R_{\text{red}}$ is a Dirac reduction of $R.$
\newline
(b)  Let $s: G\times G\longrightarrow G\times G$ be the swap map, given by $s(g,h) = (h,g)$ for $(g,h)\in G\times G.$ Then the map
\begin{equation}\label{5.35} 
\bold{B} : (G^{\prime}, \pi_{G^{\prime}}) \longrightarrow (G^{\prime}, \pi_{G^{\prime}})
\end{equation}
satisfies the relation
\begin{equation}\label{5.36}
\bold{B} = ((\sigma, id_{G})^{\prime})^{-1}\circ (s\circ R)_{\text{red}} \circ (\sigma, id_{G})^{\prime}
\end{equation}
and hence is a Poisson diffeomorphism.  Here $(s\circ R)_{\text{red}} : G^{\prime}(\sigma)\longrightarrow G^{\prime}(\sigma).$
\newline
(c)  The map $\bold{B}$ satisfies the reflection equation
\begin{equation}\label{5.37}
\bold{B}_1 R_{21} \bold{B}_2 R_{12} = R_{21}\bold{B}_2 R_{12} \bold{B}_1
\end{equation}
where we interpret \eqref{5.37} as an equality of maps from $G^{(2)}_{\sigma}$ to itself, where
\begin{equation}\label{5.38}
G^{(2)}_{\sigma}=\{(g_1, g_2)\in G^{\prime}\times G^{\prime}\mid (g_1, g_2)\in G\ast G, (g_1, \sigma(g_2))\in G\ast G\}.
\end{equation}
Hence $\bold{B}$ is a reflection map.
\end{theorem}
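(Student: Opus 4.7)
The plan for part (a) is to verify directly that $\Sigma \circ R = R \circ \Sigma$ on $G \ast G$. Unwinding definitions gives
\begin{equation*}
\Sigma \circ R(g,h) = (\sigma(\xi_g(h)),\, \sigma(\eta_h(g))), \qquad R \circ \Sigma(g,h) = (\eta_{\sigma(g)}(\sigma(h)),\, \xi_{\sigma(h)}(\sigma(g))),
\end{equation*}
and the two coincide by the $\sigma$-equivariance \eqref{5.8} of (A4). Since $\Sigma$ preserves $G \ast G$ by \eqref{5.7}, the restriction $\Sigma|_{G \ast G}$ is a Poisson involution whose stable locus is precisely $G^{\prime}(\sigma)$, a Dirac submanifold by Theorem \ref{T: 2.13}. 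Combining this commutation with the Poisson property of $R$ from Theorem \ref{T: 5.3}(b) and invoking Corollary \ref{C:2.15} delivers $R_{\text{red}}: G^{\prime}(\sigma) \to G^{\prime}(\sigma)$ as a Poisson diffeomorphism; rescaling $\pi_{G^{\prime}(\sigma)}$ by a nonzero constant preserves this conclusion.

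For part (b), I would compute the image of $(\sigma(g), g) \in G^{\prime}(\sigma)$ directly. Expanding,
\begin{equation*}
R(\sigma(g), g) = (\eta_g(\sigma(g)),\, \xi_{\sigma(g)}(g)) = (\mathbf{B}(g),\, \sigma(\mathbf{B}(g))),
\end{equation*}
where the second equality uses $\sigma^{2} = id_G$ together with \eqref{5.8} to rewrite $\xi_{\sigma(g)}(g) = \xi_{\sigma(g)}(\sigma^{2}(g)) = \sigma(\eta_g(\sigma(g)))$. Applying $s$ then yields $s \circ R(\sigma(g), g) = (\sigma, id_G)^{\prime}(\mathbf{B}(g))$, which rearranges to \eqref{5.36}. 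Since both $s$ and $R$ commute with $\Sigma|_{G \ast G}$ (the former by direct check on the swap, the latter by part (a)), the reduction $(s \circ R)_{\text{red}}$ is well defined and Poisson. As $(\sigma, id_G)^{\prime}$ is a Poisson diffeomorphism by Lemma \ref{L: 5.6}(b), $\mathbf{B}$, being conjugate to $(s \circ R)_{\text{red}}$, is a Poisson diffeomorphism on $(G^{\prime}, \pi_{G^{\prime}})$.

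For part (c), the plan is to adapt the four-fold refactorization scheme from the proof of Theorem \ref{T: 3.4}(b). The starting observation is that to each $(g_1, g_2) \in G^{(2)}_\sigma$ one associates the $\sigma$-invariant product $\sigma(g_1 g_2) g_1 g_2 = \sigma(g_2) \sigma(g_1) g_1 g_2$; setting $g_3 = \sigma(g_1)$, $g_4 = \sigma(g_2)$ places this data inside the quadruple picture of \eqref{3.38}. The identity $\Pi_1 = \Pi_2$ of \eqref{3.38}, whose proof uses only the YBE and is available to us by Theorem \ref{T: 5.3}(a), is then applied on this quadruple. Reading off each $R_{ij}$-step as an application of $R$ or $R_{21}$ to the appropriate pair of sites, and each passage of a $\sigma(g) g$ sub-block across itself as a $\mathbf{B}$-move (via the tautology $\sigma(g) g = \sigma(\mathbf{B}(g)) \mathbf{B}(g)$), one order reproduces $\mathbf{B}_1 R_{21} \mathbf{B}_2 R_{12}(g_1, g_2)$ while the other reproduces $R_{21} \mathbf{B}_2 R_{12} \mathbf{B}_1(g_1, g_2)$ on the first two coordinates. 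The main obstacle is bookkeeping: checking that every intermediate refactorization lies in $G \ast G$, and that the $\mathbf{B}$- and $R$-moves interleave exactly as in \eqref{3.40} and \eqref{3.42}. The definition of $G^{(2)}_\sigma$ in \eqref{5.38} is tailored so that the first two moves of both orderings are admissible, subsequent admissibility is controlled by \eqref{5.5} and \eqref{5.6} in (A3), and uniqueness of each refactorization step from (A1)--(A3) guarantees the two routes converge to the same quadruple.
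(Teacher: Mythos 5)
Parts (a) and (b) of your proposal are correct and follow the paper's own argument essentially verbatim: the commutation of $R$ with $\Sigma\mid_{G\ast G}$ via \eqref{5.8}, Dirac reduction through Corollary \ref{C:2.15}, the computation $R(\sigma(g),g)=(\mathbf{B}(g),\sigma(\mathbf{B}(g)))$, and the conjugation \eqref{5.36} combined with Lemma \ref{L: 5.6} are exactly the steps the authors take.

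Part (c) is where you genuinely diverge, and where your plan has a gap. You propose to obtain the quadruple identity \eqref{5.39} by importing the braid computation \eqref{3.38}, asserting that it is ``available to us by Theorem \ref{T: 5.3}(a).'' It is not, as stated. Theorem \ref{T: 5.3}(a) proves only the standard-order identity $R_{12}R_{13}R_{23}=R_{23}R_{13}R_{12}$ on $G^{(3)}$; the chain \eqref{3.38} also uses the Yang--Baxter identity for reversed index patterns such as $R_{41}R_{42}R_{12}=R_{12}R_{42}R_{41}$ and $R_{31}R_{41}R_{43}=R_{43}R_{41}R_{31}$, and in the abstract setting of Section 5 the maps $R_{ij}$ with $i>j$ are not even defined prior to \eqref{5.39} --- they are introduced only implicitly through the refactorizations \eqref{5.40}--\eqref{5.49}. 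These reversed forms can indeed be derived (each is the standard YBE for the refactorization map after relabelling positions, provable by the same outer-component-plus-cancellation argument, with every intermediate composite checked to be defined via \eqref{5.5}--\eqref{5.6}), but that derivation is real work that your proposal leaves entirely to ``bookkeeping,'' and it cannot be discharged by citation.

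The deeper point is that the paper does \emph{not} prove \eqref{5.39} from the Yang--Baxter equation at all. It computes both sides of \eqref{5.39} explicitly, obtains $\ell_2=u_2$ from the composition laws of the partial actions, and then closes the remaining gap $\ell_1=u_1$ by comparing \eqref{5.52} with \eqref{5.53} and invoking assumption \eqref{5.12}, namely that $x\sigma(x)=1$ forces $x=1$. This is the sole place \eqref{5.12} is used, and it is why \eqref{5.12} appears in (A4) and is verified for $K_{\text{rat}}$ in Proposition \ref{P: 5.9}. Your route, if completed, would bypass \eqref{5.12} for part (c) altogether; that would be a genuine structural difference (and arguably a simplification), but precisely because it would render one of the paper's standing hypotheses unnecessary, the missing step --- a full proof of the reversed/relabelled YBE identities together with the domain verification at the specific quadruple $(g_1,g_2,\sigma(g_1),\sigma(g_2))$ --- must be written out rather than assumed. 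The final extraction of \eqref{5.37} from \eqref{5.39} via the identifications in \eqref{5.54}--\eqref{5.55} is, as you say, bookkeeping, and that part of your plan matches the paper.
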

\begin{proof}
(a) Let $(g, h)\in G\ast G.$   Then from $gh =\xi_{g}(h) \eta_{h}(g),$ we have $\Sigma\mid_{G\ast G}\circ R(g,h) = (\sigma(\xi_{g}(h)), \sigma(\eta_{h}(g)).$   On the other
hand, since $\sigma$ is a Lie group anti-morphism, it follows that $\sigma(h)\sigma(g) =\sigma(\eta_{h}(g)) \sigma(\xi_{g}(h)).$  From this, we find that
$R\circ \Sigma\mid_{G\ast G}(g,h)= R(\sigma(h), \sigma(g))= (\sigma(\xi_{g}(h)), \sigma(\eta_{h}(g)).$  Since $(g, h)\in G\ast G$ is arbitrary, we thus conclude that
$R$ commutes with the Poisson involution $\Sigma\mid_{G\ast G}.$   Since the stable locus of $\Sigma\mid_{G\ast G}$ is given by
$G^{\prime}(\sigma),$ it follows from Theorem 5.4 (b) above and  Dirac reduction (Corollary 2.9) 
that the map $R\mid_{G^{\prime}(\sigma)}$ is a Poisson
diffeomorphism, when $G^{\prime}(\sigma)$ is equipped with the induced structure in (5.22).
\newline
(b)  In view of the last relation in \eqref{5.8},  the map that sends $R(\sigma(g), g)$ to $\eta_{g}(\sigma(g))$ is given by $(id_{G},\sigma)^{-1}.$  But clearly,
$(id_{G}, \sigma) = s\mid_{G(\sigma)}\circ (\sigma, id_{G}).$  As $s$ is an involution,  the relation in \eqref{5.36} follows.   Now by Lemma 5.6, the
map $(id_{G}, \sigma): (G, \pi_{G})\longrightarrow (G(\sigma), 2\pi_{G(\sigma)})$ is a Poisson map.   On the other hand, it follows from part (a)
above that $R\mid_{G(\sigma)}$ is a Poisson map, when $G(\sigma)$ is equipped with the structure $2 \pi_{G(\sigma)}.$   As 
$s\circ \Sigma = \Sigma\circ s = (\sigma\times \sigma)\mid_{G\ast G},$ as can be easily verified, it follows by Dirac reduction that
$s\mid_{G(\sigma)}$ is Poisson, when $G(\sigma)$ is equipped with the structure $2 \pi_{G(\sigma)}.$   Lastly, it follows from the 
above discussion that $(id_{G}, \sigma)^{-1}$ is a Poisson map from $(G(\sigma), 2\pi_{G(\sigma)})$  to $(G, \pi_{G}).$
Since composition of Poisson maps is Poisson, the assertion regarding $B^{\sigma}$ follows from \eqref{5.36}.
\newline
(c)  We will establish the relation
\begin{equation}\label{5.39}
\begin{aligned}
& R_{31} R_{32} R_{41} R_{42} R_{43} R_{12} (g_1, g_2, \sigma(g_1), \sigma(g_2)) \\
= & R_{43} R_{12} R_{42} R_{32} R_{41} R_{31}(g_1, g_2, \sigma(g_1), \sigma(g_2))\\
\end{aligned}
\end{equation}
under the assumption that $(g_1, g_2)\in G^{(2)}_{\sigma}.$   We begin by showing that the first line above is well defined and in the process,
we will compute the expression step by step.  Since $(g_1, g_2)\in G\ast G,$ and (5.8) holds, we have
the factorizations
\begin{equation}\label{5.40}
\begin{aligned}
& g_{1} g_{2} = h_{2} h_{1}, \quad h_{2} = \xi_{g_1}(g_2), h_1 = \eta_{g_2}(g_1)\\
& \sigma(g_{2}) \sigma(g_1) = \sigma(h_{1}) \sigma(h_{2}), \quad \sigma(h_1) = \xi_{\sigma(g_2)}(\sigma(g_1)), \sigma(h_2) = \eta_{\sigma(g_1)}(\sigma(g_2))\\
\end{aligned}
\end{equation}
from which it follows that
\begin{equation}\label{5.41}
R_{43} R_{12} (g_1, g_2, \sigma(g_1), \sigma(g_2))= (h_1, h_2, \sigma(h_1), \sigma(h_2)).
\end{equation}
On the other hand, since $g_2\in G^{\prime},$ we can check that $h_2\in G^{\prime},$ and hence we have the factorization
\begin{equation}\label{5.42}
\sigma(h_2) h_2 = j_2 \sigma(j_2),\quad j_2 = \xi_{\sigma(h_2)} (h_2)
\end{equation}
so that
\begin{equation}\label{5.43}
R_{42} R_{43} R_{12} (g_1, g_2, \sigma(g_1), \sigma(g_2))= (h_1, j_2, \sigma(h_1), \sigma(j_2)).
\end{equation}
Now we want to apply $R_{41}$ and $R_{32}$ to the expression above.    In order to be able to do this, we form $\sigma(j_2) h_1,$  and for
solvability of the refactorization problem, we require that
$\sigma(j_2)\in G_{h_1}$ and $h_1\in G_{\sigma(j_2)}.$  By symmetry, and by using (5.5), (5.36), and (5.34), it suffices to show that 
$h_1 =\eta_{g_2}(g_1)\in G_{\sigma(j_2)} = G_{\sigma(h_2)}= G_{\sigma(g_2)}.$  Again by symmetry, it suffices to show that
$\sigma(g_2) \in G_{\eta_{g_2}(g_1)} = G_{g_1}.$   But the validity of this follows by assumption that $(g_1, \sigma(g_2))\in G\ast G.$
Thus we have
\begin{equation}\label{5.44}
\sigma(j_2) h_1 = k_1\sigma(\ell_2), \,\, \sigma(h_1) j_2 = \ell_2 \sigma(k_1), k_1 =\xi_{\sigma(j_2)}(h_1),  \sigma(\ell_2) =\eta_{h_2}(\sigma(j_2))
\end{equation}
and therefore
\begin{equation}\label{5.45}
R_{32}R_{41}R_{42} R_{43} R_{12} (g_1, g_2, \sigma(g_1), \sigma(g_2))= (k_1,\ell_2, \sigma(k_1), \sigma(\ell_2)).
\end{equation}
Finally, from the assumption that $g_1\in G^{\prime},$ we can show that $k_1\in G^{\prime},$ hence we have the factorization
\begin{equation}\label{5.46}
\sigma(k_1) k_1 =\ell_1\sigma(\ell_1), \quad \ell_1 = \xi_{\sigma(k_1)}(k_1).
\end{equation}
Therefore, when we apply $R_{31}$ to both sides of \eqref{5.45}, we obtain
\begin{equation}\label{5.47}
R_{31}R_{32}R_{41}R_{42} R_{43} R_{12} (g_1, g_2, \sigma(g_1), \sigma(g_2))=(\ell_1, \ell_2, \sigma(\ell_1), \sigma(\ell_2)).
\end{equation} 
In a similar way, we can show that the second line in \eqref{5.39} is well defined under the assumption that $(g_1, g_2) \in G^{(2)}_{\sigma}.$
Successively, we have
\begin{equation}\label{5.48}
\begin{aligned}
& R_{43}R_{12}R_{42}R_{32}R_{41}R_{31} (g_1, g_2, \sigma(g_1),\sigma(g_2))\\
= & R_{43}R_{12}R_{42}R_{32}R_{41}(r_1, g_2,\sigma(r_1), \sigma(g_2))\\
= & R_{43}R_{12}R_{42}(s_1, s_2, \sigma(s_1),\sigma(s_2))\\
= & R_{43}R_{12}(s_1, t_2, \sigma(s_1), \sigma(t_2))\\
= & (u_1, u_2, \sigma(u_1), \sigma(u_2)),
\end{aligned}
\end{equation}
where
\begin{equation}\label{5.49}
\begin{aligned}
& \sigma(g_1) g_1 = r_1\sigma(r_1), r_1 =\xi_{\sigma(g_1)}(g_1)\\
& \sigma(g_2) r_1 = s_1 \sigma(s_2), \sigma(r_1)g_2 = s_2\sigma(s_1), s_1 =\xi_{\sigma(g_2)}(r_1), \sigma(s_2) =\eta_{r_1}(\sigma(g_2))\\
& \sigma(s_2) s_2 = t_2\sigma(t_2), t_2 =\xi_{\sigma(s_2)}(s_2)\\
& s_1 t_2 = u_2 u_1, \sigma(t_2)\sigma(s_1) = \sigma(u_1)\sigma(u_2), u_2 = \xi_{s_1}(t_2), u_1 =\eta_{t_2}(s_1).\\
\end{aligned}
\end{equation}
By using  \eqref{5.44}, we have
\begin{equation}\label{5.50}
\ell_2 = \xi_{\sigma(h_1)}(j_2) = \xi_{\sigma(h_1)}\xi_{\sigma(h_2)}(h_2) =\xi_{\sigma(g_1g_2)g_1}(g_2).
\end{equation}
On the other hand, on using \eqref{5.49}, we obtain
\begin{equation}\label{5.51}
u_2 = \xi_{s_1}(t_2)= \xi_{s_1\sigma(s_2)} (s_2) = \xi_{\sigma(g_2) r_1\sigma(r_1)}(g_2)= \xi_{\sigma(g_1 g_2) g_1}(g_2).
\end{equation}
This shows $\ell_2 = u_2.$   Now, on using \eqref{5.40}, \eqref{5.42},\eqref{5.44}, and \eqref{5.46}, we find
\begin{equation}\label{5.52}
\sigma(g_1 g_2) g_1 g_2 = \ell_2\ell_1 \sigma(\ell_2\ell_1).
\end{equation}
Similarly, on using the relations in \eqref{5.49}, we obtain
\begin{equation}\label{5.53}
\sigma(g_1 g_2) g_1 g_2 = u_2 u_1 \sigma (u_2u_1).
\end{equation}
Therefore, on equating \eqref{5.52} and \eqref{5.53}, we conclude that $x = (\ell_2\ell_1)^{-1} (u_2 u_1)$ satisfies the equation $x\sigma(x) =1.$  Consequently, $x=1$
and as $\ell_2 =u_2,$ we must have $\ell_1= u_1$ and this establishes the validity of the relation in \eqref{5.39}.   To conclude the proof, we will deduce
the relation $\bold{B}_1 R_{21} \bold{B}_2 R_{12}(g_1, g_2) = R_{21}\bold{B}_2 R_{12} \bold{B}_1(g_1,g_2),$ $(g_1, g_2)\in G^{(2)}_{\sigma}$ from \eqref{5.39}, by using
its proof.   Thus we have
\begin{equation}\label{5.54}
\begin{aligned}
 \bold{B}_1 R_{21}\bold{B}_2 R_{12} (g_1, g_2)  = &  \bold{B}_1 R_{21} \bold{B}_2 (h_1, h_2)\\
 = & \bold{B}_1 R_{21} (h_1, \sigma(j_2))\\
 = & \bold{B}_1 (k_1, \sigma(\ell_2))\\
 = & (\sigma(\ell_1), \sigma(\ell_2)).\\
 \end{aligned}
 \end{equation}
 In a similar fashion, we find that
 \begin{equation}\label{5.55}
 R_{21} \bold{B}_2 R_{12} \bold{B}_1 (g_1, g_2) = (\sigma(u_1), \sigma(u_2)).
 \end{equation}
 Hence the assertion follows.
\end{proof}
\begin{remark}\label{R: 5.8}
Note that since $R$ is a Yang-Baxter map, the composite $s\circ R$ is a braiding operator.   Thus according to \eqref{5.36} above, the reduction of this
braiding operator to $G^{\prime}(\sigma)$ is smoothly conjugate to $\bold{B}.$
\end{remark}
We now apply the above results to the case where the Poisson Lie group is $K_{\text{rat}},$ for which the involution $\sigma$ is given by the
formula in \eqref{3.9}.   For this example, recall that the definition of $K_{\text{rat}}\ast K_{\text{rat}}$ is given in \eqref{2.38} (we will connect this with
the object in \eqref{5.1} under assumption (A2)), which is an open submanifold of $K_{\text{rat}}\times K_{\text{rat}}$, equipped with the product Poisson structure.   Hence $K_{\text{rat}}\ast K_{\text{rat}}$ is a Poisson submanifold of $K_{\text{rat}}\times K_{\text{rat}}.$
We have to check that  the assumptions in (A1) to (A4) are satisfied.   First of all,  recall from Theorem 2.5 that we have $K_{\text{rat}}^{g^{-1}} = K_{\text{rat}}^{g}$
for $g\in K_{\text{rat}},$ and that we have a left partial group action $\xi$ and a right partial group action $\eta.$   With the definition of $K_{\text{rat}}^{g}$ in \eqref{2.37}, it is clear
that the geometric object constructed in \eqref{5.1} with $G= K_{\text{rat}}$ is in agreement with what we defined in \eqref{2.38}.   Moreover, the validity of the other conditions
under (A2) are clear.   Regarding the conditions under (A3), first of all, the validity of $K_{\text{rat}}^{\xi_{g}(h)} = K_{\text{rat}}^{h}$ and
$K_{\text{rat}}^{\eta_{h}(g)} = K_{\text{rat}}^g$  is a consequence of the fact that  $(\xi_{g}(h)) = (h)$ and $(\eta_{h}(g)) = (g).$  The other condition is
also clear as we have $g_1 \in K_{\text{rat}}^{g_2}$ iff $\text{supp}\,(g_1)\cap \text{supp}\, (g_2) = \emptyset$ and this condition is
symmetric in $g_1$ and $g_2.$   We now come to (A4).   To check the condition in \eqref{5.6}, suppose $\text{supp}\, (g) =\{a_i, \overline{a}_i\}_{i=1}^{\ell},$
$\text{supp}\, (h) = \{b_j, \overline{b}_j\}_{j=1}^{m}.$   Then $\text{supp}\, (\sigma(g))=\{ -a_i, -\overline{a}_i\}_{i=1}^{\ell}$ and
$\text{supp}\,\{(\sigma(h)) =\{-b_j, -\overline{b}_j\}_{j=1}^{m}.$ From this, it is clear that $\text{supp}\, (g)\cap \text{supp}\, (h) =\emptyset$ iff
$\text{supp}\, (\sigma(g))\cap \text{supp}\, (\sigma(h)) =\emptyset.$    To show that  $K_{\text{rat}}^{\prime} : = (\sigma, id_{K_{\text{rat}}})^{-1} (K_{rat}(\sigma)\cap
(K_{\text{rat}}\times K_{\text{rat}})\neq \emptyset,$ simply take a simple element 
$g_{\alpha, P}.$   As we already observed in Proposition 3.1, for $\alpha\in \mathbb{C}\setminus (\mathbb{R}\cup \sqrt{-1} \mathbb{R}),$  $g_{\alpha, P} \in K_{\text{rat}}^{\prime}.$   Moreover, $\eta_{g_{\alpha, P}}(\sigma(g_{\alpha, P})) = \sigma(\xi_{\sigma(g_{\alpha, P})}(g_{\alpha, P})).$    To show that 
\begin{equation}\label{5.56}
\eta_{\sigma(g)}(\sigma(h)) = \sigma(\xi_{g} (h)), \xi_{\sigma(h)}(\sigma(g)) =\sigma(\eta_{h}(g))\,\, \text{for all}\,\, (g,h) \in K_{\text{rat}}\ast K_{\text{rat}},
\end{equation}
we can use Theorem 2.5 (a), according to which the solution of the refactorization problem
\begin{equation}\label{5.57}
\begin{aligned}
& \sigma(h) \sigma(g)  = \xi_{\sigma(h)}(\sigma(g))\eta_{\sigma(g)}(\sigma(h)),\\
& \text{where}\,\, (\eta_{\sigma(g)}(\sigma(h))= (\sigma(h)), (\xi_{\sigma(h)}(\sigma(g)) = (\sigma(g))\\
\end{aligned}
\end{equation}
is unique and the fact that $(\sigma(\xi_{g}(h))) = (\sigma(h))$ and $(\sigma(\eta_{h}(g))) = (\sigma(g)).$
Let
\begin{equation}\label{5.58}
K_{\text{rat},\sigma}^{(2)} =\{ (g_1, g_2)\in (K_{\text{rat}}^{\prime}\times K_{\text{rat}}^{\prime})\cap(K_{\text{rat}}\ast K_{\text{rat}})\mid (g_1, \sigma(g_2))\in 
K_{\text{rat}}\ast K_{\text{rat}}\}.
\end{equation}
Take $(g_1, g_2)\in K_{\text{rat},\sigma}^{(2)}$ and going through the proof in Theorem 5.7 (c) and making use of the same notations there,
we just have to check that if we take $x= (\ell_2 \ell_1)^{-1} (u_2 u_1)\in K_{\text{rat}},$ then we must have $x=I.$
\begin{prop}\label{P: 5.9}
The element $x= (\ell_2 \ell_1)^{-1} (u_2 u_1)\in K_{\text{rat}}$ which satisfies the equation $x\sigma(x) =I$  is the $n\times n$ identity matrix $I.$
\end{prop}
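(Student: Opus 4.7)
The plan is to combine a divisor-counting argument with Liouville's theorem. The key auxiliary fact is that $\sigma$ preserves divisors on $K_{\text{rat}}$, that is, $(\sigma(g)) = (g)$ for every $g \in K_{\text{rat}}$. Since $\sigma$ is a Lie group anti-morphism by construction (see \eqref{3.9}), divisors add under products of rational matrix functions, and every element of $K_{\text{rat}}$ factorizes into simple elements by Theorem 2.1(b), it suffices to verify this on a single simple element. Formula \eqref{3.8} gives $\sigma(g_{\alpha, P}) = g_{-\overline{\alpha}, UPU^*}$, whose divisor is $-\overline{\alpha} - (-\alpha) = \alpha - \overline{\alpha} = (g_{\alpha, P})$; equivalently, the involution $\tau(z) = -\overline{z}$ carries the divisor of any $g \in K_{\text{rat}}$ to itself because $\overline{(g)} = -(g)$ by \eqref{2.31}.

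Given this, I would take divisors of both sides of the equation $x \sigma(x) = I$. This yields $(x) + (\sigma(x)) = 0$, and combined with the preservation step above gives $2(x) = 0$, hence $(x) = 0$ since the group of divisors on $\mathbb{C}\mathbb{P}^1$ is torsion-free.

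To finish, I would use the fact that an element $x \in K_{\text{rat}}$ with trivial divisor is a rational matrix function with no zeros or poles anywhere on $\mathbb{C}\mathbb{P}^1$, so every entry is a polynomial in $z$. Since $x \in K_{-}$ requires $x(\infty) = I$, the diagonal entries are polynomials taking value $1$ at infinity while the off-diagonal entries are polynomials vanishing at infinity; this forces each entry to be the constant $\delta_{ij}$, and therefore $x = I$.

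The argument is essentially formal once the divisor-preservation property of $\sigma$ is isolated; no serious obstacle remains. A more hands-on alternative would be to track the divisors of $\ell_1, \ell_2, u_1, u_2$ directly through the chain of refactorizations \eqref{5.40}--\eqref{5.49}, using that $\xi$ and $\eta$ preserve the divisors of their arguments (as recorded in Theorem 2.5), to show $(\ell_2 \ell_1) = (u_2 u_1) = (g_1) + (g_2)$ and hence $(x) = 0$ without ever invoking $x\sigma(x) = I$; the concluding step would then proceed exactly as above.
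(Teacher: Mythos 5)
There is a genuine gap, and it occurs at the very first step. The auxiliary claim that $\sigma$ preserves divisors is false: by \eqref{3.8}, $\sigma(g_{\alpha,P})=g_{-\overline{\alpha},\,UPU^{*}}$ has its zero at the \emph{point} $-\overline{\alpha}$ and its pole at the point $-\alpha$, so $(\sigma(g_{\alpha,P}))$ is the formal sum $1\cdot(-\overline{\alpha})-1\cdot(-\alpha)$, which is not the divisor $1\cdot\alpha-1\cdot\overline{\alpha}$ of $g_{\alpha,P}$ unless $\alpha$ is purely imaginary. Your computation ``$-\overline{\alpha}-(-\alpha)=\alpha-\overline{\alpha}$'' treats a formal sum of points as arithmetic in $\mathbb{C}$. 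In fact $\text{supp}\,(\sigma(g))=-\text{supp}\,(g)$, which is precisely why the paper must impose hypotheses like $\text{supp}\,(g)\cap\text{supp}\,(\sigma(g))=\emptyset$ in the first place. Consequently the ``formal'' derivation of $(x)=0$ from $x\sigma(x)=I$ collapses, and it must: the equation $x\sigma(x)=I$ has nontrivial solutions in $K_{\text{rat}}$. For $U=I$ it is equivalent (using the reality condition) to $x(-z)=x(z)$, and already for $n=1$ the even loop $x(z)=(z^{2}-\alpha^{2})/(z^{2}-\overline{\alpha}^{2})$ is such a solution. So no argument that never invokes the standing hypotheses on $g_{1},g_{2}$ can prove the proposition; assumption \eqref{5.12} is exactly what Proposition 5.9 is verifying, and it is not a formal triviality.

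Your fallback route --- establishing $(\ell_{2}\ell_{1})=(u_{2}u_{1})=(g_{1})+(g_{2})$ by tracking the refactorizations --- is indeed how the paper begins (these are \eqref{5.59}--\eqref{5.64}), but the inference ``hence $(x)=0$'' is also invalid: for matrix-valued rational functions the divisor of a product is not the sum of the divisors when zeros of one factor meet poles of the other, and here the poles of $(\ell_{2}\ell_{1})^{-1}$ sit exactly at the zeros of $u_{2}u_{1}$, so cancellation is ambiguous. (The determinant is multiplicative, but $\det x=1$ does not preclude poles of $x$.) This is why the paper extracts only the containment \eqref{5.65}--\eqref{5.66} bounding the possible support of $x$, and then combines $x=\sigma(x)^{-1}$ with the disjointness conditions \eqref{5.68} (from $g_{i}\in K^{\prime}_{\text{rat}}$) and with $(g_{1},\sigma(g_{2}))\in K_{\text{rat}}\ast K_{\text{rat}}$ to force the contradiction \eqref{5.69}. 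Your concluding step (no poles plus normalization at infinity gives $x=I$) agrees with the paper, but the heart of the proof --- ruling out the cancellation ambiguity via the support hypotheses --- is missing from both of your proposed routes.
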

\begin{proof}
We will keep track of the divisor structure of the various factors which appear in the refactorization problems.   First of all, we have
\begin{equation}\label{5.59}
\begin{aligned}
& (h_1) = (g_1), (h_2) = (g_2),  (j_2) = (h_2),  (\sigma(j_2)) =(\sigma(h_2)),\\
& (k_1) = (h_1), (\sigma(\ell_2)) = (\sigma(j_2)), (\ell_2) =(j_2)\\
& (\ell_1) = (k_1), (\sigma(\ell_1)) =(\sigma(k_1)).\\
\end{aligned}
\end{equation}
From this, we find that
\begin{equation}\label{5.60}
(\ell_1) =(k_1) =(h_1)= (g_1),\,\, (\ell_2) = (j_2)=(h_2)= (g_2).
\end{equation}
Similarly, from
\begin{equation}\label{5.61}
\begin{aligned}
& (r_1) = (g_1), (\sigma(r_1)) = (\sigma(g_1)), (s_1) = (r_1), (\sigma(s_2)) = (\sigma(g_2)),\\
& (t_2) = (s_2),  (\sigma(t_2)) = (\sigma(s_2)), (u_1) = (s_1), \\
& (u_2) =(t_2),
\end{aligned}
\end{equation}
we find
\begin{equation}\label{5.62}
(u_1) = (s_1)= (r_1)= (g_1),\,\, (u_2) = (t_2)=(s_2) =(g_2).
\end{equation} 
Since $\text{supp}(g_1)\cap \text{supp}(g_2) =\emptyset,$ we have
\begin{equation}\label{5.63}
(u_2 u_1)_{0} = (g_1)_{0} + (g_2)_{0},\,\, (u_2u_1)_{\infty} = \overline{(g_1)}_{0}+ \overline{(g_2)}_{0}.
\end{equation}
Likewise,
\begin{equation}\label{5.64}
((\ell_2\ell_1)^{-1})_{0} =  \overline{(g_1)}_{0}+ \overline{(g_2)}_{0},\,\, ((\ell_2\ell_1)^{-1})_{\infty} = (g_1)_{0} + (g_2)_{0}.
\end{equation}
We want to show that $x$ has no poles.   To do so, suppose the contrary, that is, $x$ has poles and zeros.   Let
\begin{equation}\label{5.65}
(x)_{0} =\sum_{i=1}^{d} m_{i}\alpha_{i} +\sum_{i=1}^{d} m_{i}\overline{\alpha}_i + \sum_{j=1}^{e} n_j\beta_j +\sum_{j=1}^{e} n_j\overline{\beta}_j,
\end{equation}
where
\begin{equation}\label{5.66}
\begin{aligned}
& \sum_{i=1}^{d} m_{i}\alpha_{i} +\sum_{i=1}^{d} m_{i}\overline{\alpha}_i \leq (g_1)_{0} + \overline{(g_1)}_{0}, \\
& \sum_{j=1}^{e} n_j\beta_j +\sum_{j=1}^{e} n_j\overline{\beta}_j\leq (g_2)_{0} + \overline{(g_2)}_{0}.\\
\end{aligned}
\end{equation}
Then from the definition of $\sigma,$ we have
\begin{equation}\label{5.67}
(\sigma(x)^{-1})_{0} = \sum_{i=1}^{d} m_{i}(-\alpha_{i}) +\sum_{i=1}^{d} m_{i}(-\overline{\alpha}_i )+ \sum_{j=1}^{e} n_j(-\beta_j) +\sum_{j=1}^{e} n_j(-\overline{\beta}_j).
\end{equation}
Since $g_1, g_2\in K^{\prime}_{\text{rat}},$ we have $\text{supp}(g_i)\cap \text{supp}(\sigma(g_i))=\emptyset,$ $i=1,2.$  Hence the following conditions hold:
\begin{equation}\label{5.68}
\{\alpha_i , \overline{\alpha}_i\}_{i=1}^{d} \cap \{ -\alpha_i, -\overline{\alpha}_i\}_{i=1}^{d}=\emptyset,\,\,
\{\beta_j , \overline{\beta}_j\}_{j=1}^{e} \cap \{ -\beta_j, -\overline{\beta}_j\}_{j=1}^{e}=\emptyset.
\end{equation}
In view of these conditions, it follows from the equation $x = \sigma(x)^{-1}$ and \eqref{5.65}, \eqref{5.67} that
\begin{equation}\label{5.69}
\sum_i m_i \alpha_i + m_{i}\overline{\alpha}_i = \sum_j n_j(-\beta_j) +\sum_j n_j(-\overline{\beta}_j).
\end{equation}
But this is a contradiction to the assumption that $(g_1, \sigma(g_2))\in K_{\text{rat}}\ast K_{\text{rat}}.$
Consequently, $x$ has no poles, and the only such element in $K_{\text{rat}}$ is the identity matrix $I.$
\end{proof}
Now let $\pi_{K_{\text{rat}}}$ be the induced Poisson structure on $K_{\text{rat}}$ as a Poisson Lie subgroup of $(K, \{\cdot,\cdot\}_{J}).$  Recall that
$K_{\text{rat}}\times K_{\text{rat}}$ is equipped with the product Poisson structure, and the open submanifold $K_{\text{rat}}\ast K_{\text{rat}}$ with
the induced structure.  Likewise, we will equip the open submanifold $K^{\prime}_{\text{rat}}$ of $K_{\text{rat}}$ with the induced structure.
We next check that $\sigma$ is a Poisson involution.
\begin{prop}\label{P: 5.10}
The map $\sigma$ defined in \eqref{3.9} is a Poisson involution, when $K_{\text{rat}}$ is equipped with $\pi_{K_{\text{rat}}}.$
\end{prop}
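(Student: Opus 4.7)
The plan is to verify that $\sigma$ is a Poisson map on the ambient group $(K,\{\cdot,\cdot\}_J)$ and then descend to $K_{\text{rat}}$ using the fact that $K_{\text{rat}}$ is a Poisson Lie subgroup. Since the defining formula $\sigma(g)(z)=Ug^*(-\overline z)U^*$ makes sense for every $g\in K$, we may regard $\sigma$ as an involutive anti-morphism of the full loop group $K$; the restriction to $K_{\text{rat}}$ then coincides with the map of the statement, and once Poisson-ness is established on $K$ it will be inherited by $K_{\text{rat}}$.

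The first step is to compute the Lie algebra map $\rho := T_e\sigma:\fk\to\fk$. A direct calculation using $X(\overline z)^*+X(z)=0$ for $X\in\fk$ gives $\rho(X)(z) = UX^*(-\overline z)U^* = -U X(-z) U^*$, and $\rho^2=\mathrm{id}_{\fk}$. With the (implicit) assumption that $\mathcal{O}_\infty$ is invariant under $z\mapsto -z$ as well as $z\mapsto\overline z$, one checks that $\rho$ preserves both $\fk_+$ and $\fk_-$, so $\rho$ commutes with $J=\Pi_{\fk_+}-\Pi_{\fk_-}$. The key arithmetic point is the interaction of $\rho$ with the pairing: by the change of variables $z\mapsto -z$ on the contour $\gamma=\partial\mathcal{O}_\infty$, one obtains
\begin{equation*}
(\rho(X),\rho(Y))_{\fk} \;=\; \oint_\gamma \mathrm{tr}\bigl(X(-z)Y(-z)\bigr)\,\frac{dz}{2\pi i} \;=\; -(X,Y)_{\fk},
\end{equation*}
since the Jacobian $dw=-dz$ under $w=-z$ contributes the extra minus sign.

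The second step is to relate the right and left gradients of $\varphi\circ\sigma$ to those of $\varphi$. Using that $\sigma$ is an anti-morphism, so $\sigma(e^{tX}g)=\sigma(g)e^{t\rho(X)}$ and $\sigma(ge^{tX})=e^{t\rho(X)}\sigma(g)$, together with the anti-isometry $(\rho A,\rho B)_{\fk}=-(A,B)_{\fk}$ and the involutivity $\rho^2=\mathrm{id}$, one derives
\begin{equation*}
D(\varphi\circ\sigma)(g) = -\rho\bigl(D'\varphi(\sigma(g))\bigr),\qquad
D'(\varphi\circ\sigma)(g) = -\rho\bigl(D\varphi(\sigma(g))\bigr),
\end{equation*}
valid for $\varphi\in\mathcal{F}(K)$. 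This also shows $\varphi\circ\sigma\in\mathcal{F}(K)$, so the algebra of test functions is preserved.

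The third step is a direct substitution into the defining formula \eqref{2.49} for $\{\cdot,\cdot\}_J$. The two minus signs from the gradients combine to $+1$, $\rho$ commutes with $J$, and each pairing contributes an additional $-1$ from the anti-isometry, so each of the two terms acquires an overall sign of $-1$. After cancelling this against the signs already present in \eqref{2.49}, the two terms swap and reassemble into $\{\varphi,\psi\}_J(\sigma(g))$, giving $\{\varphi\circ\sigma,\psi\circ\sigma\}_J = \{\varphi,\psi\}_J\circ\sigma$ for all $\varphi,\psi\in\mathcal{F}(K)$. This is precisely the statement that $\sigma:(K,\{\cdot,\cdot\}_J)\to(K,\{\cdot,\cdot\}_J)$ is Poisson. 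Since $K_{\text{rat}}$ is a Poisson Lie subgroup and $\sigma$ maps $K_{\text{rat}}$ to itself, the conclusion follows. The main (minor) obstacle I anticipate is bookkeeping of the signs, specifically ensuring the $z\mapsto -z$ contour argument is handled correctly so that the anti-isometry of $\rho$ is cleanly established; everything else is a mechanical substitution.
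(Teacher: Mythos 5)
Your proposal is correct and follows essentially the same route as the paper: extend $\sigma$ to the full group $K$, use that the inclusion of the Poisson Lie subgroup $K_{\text{rat}}$ is Poisson, compute $D(\varphi\circ\sigma)$ and $D'(\varphi\circ\sigma)$ via the anti-morphism property, and track the sign from the $z\mapsto -z$ substitution in the contour pairing. Your formulation via the anti-isometry of $\rho=T_e\sigma$ and its commutation with $J$ is just a repackaging of the paper's explicit formulas \eqref{5.70}--\eqref{5.72}, and the sign bookkeeping comes out the same.
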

\begin{proof}
Extend $\sigma$ to the full group $K$ using the same formula, denote the extension by $\sigma_e,$ and let $\iota_{K_{\text{rat}}}: K_{\text{rat}}\longrightarrow K$
be the inclusion map.   In view of the relation $\iota_{K_{\text{rat}}}\circ \sigma = \sigma_e\circ \iota_{K_{\text{rat}}}$ and the fact that 
$\iota_{K_{\text{rat}}}$ is a Poisson map, it suffices to show that $\sigma_e: (K, \{\cdot, \cdot\}_{J}) \longrightarrow (K, \{\cdot, \cdot\}_{J})$ is a Poisson map.
So let $\varphi, \psi\in \mathcal{F}(K).$  Then by a direct calculation, for $g\in K,$ we have
\begin{equation}\label{5.70}
\begin{aligned}
& D(\varphi\circ \sigma_e)(g)(z) = U^* D^{\prime}\varphi(\sigma_e(g))(-z)U\\
& D^{\prime}(\varphi\circ \sigma_e)(g)(z) = U^* D\varphi(\sigma_e(g))(-z)U.\\
\end{aligned}
\end{equation}
Therefore, by the definition of $J$ in \eqref{2.45}, we have
\begin{equation}\label{5.71}
\begin{aligned}
J(D(\varphi\circ \sigma_e)(g))(z) = U^* J(D^{\prime}\varphi(\sigma_e(g))(-z) U,\\
J(D^{\prime}(\varphi\circ \sigma_e)(g))(z) = U^* J(D\varphi(\sigma_e(g))(-z) U.\\
\end{aligned}
\end{equation}
Hence on using the pairing $(\cdot, \cdot)_{\fk}$ in \eqref{2.44}, we find that
\begin{equation}\label{5.72}
\begin{aligned}
& (J(D(\varphi\circ \sigma_e)(g)), D(\psi\circ \sigma_e)(g))_{\fk} =  -(J(D^{\prime}\varphi(\sigma_e(g)), D^{\prime}\psi(\sigma_e(g))))_{\fk},\\
&  (J(D^{\prime}(\varphi\circ \sigma_e)(g)), D^{\prime}(\psi\circ \sigma_e)(g))_{\fk} = -  (J(D\varphi(\sigma_e(g)), D\psi(\sigma_e(g))))_{\fk}\\                                                                                  
\end{aligned}
\end{equation}
and the assertion that $\sigma_e$ is Poisson follows from this formula.   
\end{proof}
From this proposition, we can now conclude that the map $\Sigma$ in \eqref{5.21} with $G= K_{\text{rat}}$   is a Poisson involution by Lemma 5.5 and that its stable locus is
given by $K_{\text{rat}}(\sigma),$ the graph of $\sigma.$   This is a Dirac submanifold of $K_{\text{rat}}\times K_{\text{rat}}$ and its induced
Poisson structure $\pi_{K_{\text{rat}(\sigma)}}$ is related to the structure $\pi_{K_{\text{rat}}}$ through the relation in \eqref{5.22}.  Consequently,
the open submanifold $K^{\prime}_{\text{rat}}(\sigma)$ carries an induced structure $\pi_{K^{\prime}_{\text{rat}}(\sigma)}.$
We are now ready to state the following consequence of Theorem 5.7.

\begin{coro}\label{C:5.11}
(a)  The map $R$ commutes with $\Sigma\mid_{K_{\text{rat}}\ast K_{\text{rat}},}$ i.e. $\Sigma\mid_{K_{\text{rat}}\ast K_{\text{rat}}}\circ R = 
R\circ \Sigma\mid_{K_{\text{rat}}\ast K_{\text{rat}}}$ and therefore its
reduction
\begin{equation}\label{5.73}
R_{\text{red}} : K_{\text{rat}}^{\prime}(\sigma)\longrightarrow K_{\text{rat}}^{\prime}(\sigma)
\end{equation}
is a Poisson diffeomorphism, when $K_{\text{rat}}^{\prime}(\sigma)$ is equipped with any nonzero multiple of $\pi_{K_{\text{rat}}^{\prime}(\sigma)}.$
\newline
(b)  The map
\begin{equation}\label{5.74} 
\bold{B} : (K_{\text{rat}}^{\prime}, \pi_{K_{\text{rat}}^{\prime}}) \longrightarrow (K_{\text{rat}}^{\prime}, \pi_{K_{\text{rat}}^{\prime}})
\end{equation}
satisfies the relation
\begin{equation}\label{5.75}
\bold{B} = ((\sigma, id_{K_{\text{rat}}})^{\prime})^{-1}\circ (s\circ R)_{\text{red}} \circ (\sigma, id_{K_{\text{rat}}})^{\prime}
\end{equation}
and hence is a Poisson diffeomorphism.  Here $(s\circ R)_{\text{red}} : K_{\text{rat}}^{\prime}(\sigma)\longrightarrow K_{\text{rat}}^{\prime}(\sigma)$ is
the reduction of $s\circ R$ to $K_{\text{rat}}^{\prime}(\sigma).$
\newline
(c)  The map $\bold{B}$ satisfies the reflection equation
\begin{equation}\label{5.76}
\bold{B}_1 R_{21} \bold{B}_2 R_{12} = R_{21} \bold{B}_2 R_{12} \bold{B}_1
\end{equation}
where we interpret \eqref{5.76} as an equality of maps from $K_{\text{rat},\sigma}^{(2)}$ to itself, where
$K_{\text{rat},\sigma}^{(2)}$ is defined in \eqref{5.58}.  Hence $\bold{B}$ is a reflection map.

\end{coro}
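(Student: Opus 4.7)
The plan is to deduce Corollary 5.11 directly from Theorem 5.7 by verifying that the hypotheses (A1)--(A4) hold when $G = K_{\text{rat}}$ is equipped with the induced Poisson structure $\pi_{K_{\text{rat}}}$, and that the auxiliary uniqueness hypothesis needed for the reflection equation is satisfied. Most of the required verification has been carried out in the paragraphs immediately preceding the corollary, so the proof amounts to assembling these ingredients.

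For (A1) and (A2), Theorem 2.5 provides the left and right partial actions $\xi$ and $\eta$ on $K_{\text{rat}}$, together with the identity $K_{\text{rat}}^{g^{-1}} = K_{\text{rat}}^{g}$. The object $K_{\text{rat}} \ast K_{\text{rat}}$ constructed in \eqref{2.38} matches the abstract $G\ast G$ in \eqref{5.1}, and the compatibility $gh = \xi_g(h)\eta_h(g)$ is built into the refactorization problem \eqref{2.35}. For (A3), the identities $K_{\text{rat}}^{\xi_g(h)} = K_{\text{rat}}^{h}$ and $K_{\text{rat}}^{\eta_h(g)} = K_{\text{rat}}^{g}$ follow from the divisor identities $(\xi_g(h)) = (h)$ and $(\eta_h(g)) = (g)$, while the symmetry $g_1 \in K_{\text{rat}}^{g_2} \iff g_2 \in K_{\text{rat}}^{g_1}$ is immediate from the disjointness condition on supports. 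For (A4), Proposition 5.10 shows $\sigma$ is a Poisson involution, while the anti-morphism property is built into the extension \eqref{3.9} of $\sigma$ from simple elements. The stability of $K_{\text{rat}} \ast K_{\text{rat}}$ under $(g,h)\mapsto(\sigma(h),\sigma(g))$ follows from the observation $\mathrm{supp}(\sigma(g)) = -\mathrm{supp}(g)$, so disjointness is preserved; the compatibility relations \eqref{5.8} follow by applying $\sigma$ to \eqref{5.2} and invoking uniqueness of refactorization, as spelled out in the text. Non-emptiness of $K_{\text{rat}}'$ is witnessed by any simple element $g_{\alpha,P}$ with $\alpha \in \mathbb{C}\setminus(\mathbb{R}\cup\sqrt{-1}\mathbb{R})$, as seen in Proposition 3.1.

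With (A1)--(A4) in hand, parts (a) and (b) of the corollary are immediate applications of the corresponding statements in Theorem 5.7. For part (c), the only extra ingredient required is the assumption \eqref{5.12} that $x\sigma(x) = 1$ forces $x = 1$. This is precisely the content of Proposition 5.9, where the hypothesis $(g_1, \sigma(g_2)) \in K_{\text{rat}}\ast K_{\text{rat}}$ in the definition of $K_{\text{rat},\sigma}^{(2)}$ is used in a decisive way to rule out the possibility that $x = (\ell_2 \ell_1)^{-1}(u_2 u_1)$ has nontrivial poles. Once that step is in place, the reflection equation follows verbatim from Theorem 5.7(c), applied to pairs in $K_{\text{rat},\sigma}^{(2)}$.

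The substantive obstacle, already addressed in Proposition 5.9, is the divisor-tracking argument that shows $x$ can have no poles: all the other verifications are structural consequences of the support/divisor calculus in $K_{\text{rat}}$ and of Propositions 5.10. Thus once the hypotheses are collected, the corollary is essentially a direct specialization of Theorem 5.7 to the rational loop group setting.
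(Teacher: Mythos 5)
Your proposal is correct and follows essentially the same route as the paper: the authors likewise prove Corollary 5.11 by verifying assumptions (A1)--(A4) for $G=K_{\text{rat}}$ in the paragraphs preceding the corollary (using Theorem 2.5 for the partial actions, the divisor/support calculus for (A2)--(A3), Proposition 5.10 for the Poisson involution, and Proposition 5.9 for the condition $x\sigma(x)=I$ needed in the reflection-equation argument) and then invoking Theorem 5.7. Your observation that \eqref{5.12} is only verified for the specific element $x=(\ell_2\ell_1)^{-1}(u_2u_1)$ via the hypothesis $(g_1,\sigma(g_2))\in K_{\text{rat}}\ast K_{\text{rat}}$ is exactly how the paper handles it.
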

\section{Conclusion}

This work was motivated by the soliton-boundary interaction process for the $n$-Manakov system on the half-line,
with mixed Dirichlet/Neumann boundary conditions at $x=0$, as described in \cite{CZ2}.    We have taken a first step
here in (a) constructing reflection maps from Yang-Baxter maps on various geometrical objects and understanding
their relationships,  (b) describing the symplectic and Poisson geometry of such maps.   Thus we have proved here, for
the first time, the symplectic/Poisson properties of reflection maps.   As is clear from our work in the previous sections,
an involution plays an important role on each level, this is a structure which emerges in the authors' use of the nonlinear
mirror image method in \cite{CZ2}.    In this concluding section, we will give a short discussion of several issues which
we have not addressed in this work, as well as making some comments on the significance of our findings.

As in the case of the polarization scattering map in \cite{L1}, the $N$-body polarization reflection map in Section 4 should
be regarded as a component of the full scattering map, which would include as its components the map
which gives the change in asymptotic velocities and the map which gives the change in phase
shifts.  We hope to extend our results here to that of the full scattering map, as well as the construction of action-angle
variables on multi-soliton manifolds for the half-line problem.   This latter endeavour,  of course, would involve the
presence of the soliton parameters $\alpha_i$  in the symplectic form, as they are part of the scattering data in the reflectionless case. 

On the other hand, although we are focusing our 
attention here to the $n$-Manakov system, however, it is clear that the same methodology can be adapted to other multi-component integrable soliton equations on the half-line, if the nonlinear mirror image method applies, an important ingredient being the existence of an involution which can be extended to a Lie group anti-morphism.   This is in fact one of the motivations behind formulating several results in an abstract way in Section 5, when we deal with reflection maps on Poisson Lie groups, as different multi-component integrable soliton equations correspond to different Lie groups.   In this connection,
let us also recall that in \cite{L1}, the author shows that if we denote the Poisson Lie group dual to $(K, \{\cdot, \cdot\}_{J})$
by $K_{J},$ and the dressing orbit of $K_{J}$ through $g_{\alpha, E_k}$ by $\mathcal{L}(\alpha, E_k)$,
then the map $R\mid\mathcal{L}(\alpha_1, E_k)\times \mathcal{L}(\alpha_2, E_{\ell})$  (where $R$ is given in \eqref{2.41})
 and the map $R^{k,\ell}(\alpha_1, \alpha_2)$ in \eqref{2.18} are conjugate to each other (see \eqref{5.68} in  \cite{L1}).
 Thus from this point of view, we could have developed our results in Section 3 and Section 4 of our present work starting
 with the results in Section 5.  But of course this would be unnecessarily complicated.   The point we are trying to make here is that the Poisson Lie group carries the complete information, as there are various dressing orbits of $K_{J}$ which
 could be of interest in the study of higher order multi-soliton solutions (the ones in \cite{CZ2} correspond to Riemann-Hilbert problems with distinct simple zeros).    
 
 In any case, extending the results in Section 3 and Section 4 to the case of Poisson
 Lie groups via the method of Dirac reduction is of intrinsic geometric interest.   Here we recall the work in \cite{LYZ}, in which they show how to construct a solution of the Yang-Baxter equation on a group, assuming the existence of a pair of actions satisfying a compatibility condition.   This result is purely algebraic and in particular is devoid of any meaning in Poisson geometry.    In Theorem 5.4, by following the same argument which was used in the proof of Theorem 5.13 in \cite{L1} for  the case of $K_{\text{rat}},$ we show how two compatible partial actions on a Poisson Lie group $G$ can give rise to a Yang-Baxter map $R$ which is also a Poisson diffeomorphism.   And then by postulating the existence of  a Poisson involution $\sigma$
 on $G$ which is also a Lie group anti-morphism satisfying some additional conditions, we can define a reflection map
 $\bold{B}$ which is also a Poisson diffeomorphism.    And the method we use provides
 another illustration of the use of Dirac reduction, which was first developed in \cite{L2} in order to understand a class of spin Calogero-Moser systems associated with symmetric Lie subalgebras, and the spin-generalized Ruisjenaars-Schneider equations which correspond to $N$-soliton solutions of $A_n^{(1)}$ affine Toda field theory \cite{BHO}.   
 
 We hope to have a better understanding of the integrability of the various reflection maps in this work in the future.

\vskip .1in
\noindent  \textbf{Acknowledgments}  Luen-Chau Li acknowledges the support from the Simons Foundation through grants \#278994,  \#585813,  the London 
Mathematical Society research in pairs grant \#41849, and the Research Visitor Centre of the School of Mathematics, University of Leeds.    
Vincent Caudrelier acknowledges the support from the Simons Foundation, the London Mathematical Society research in pairs grant \#41849,
and the Shapiro visitor program in mathematics of the 
Pennsylvania State University.  Both authors would like to thank the host institutions for the hospitality during their visits.
Finally, the authors are thankful to the referee for his/her careful reading of the manuscript and making constructive
comments for making this work more readable.
\vskip .1in
\noindent  \textbf{Declarations} 
\vskip .1in
\noindent \textbf{Conflict of interest}  The authors declare that there is no conflict of interest.



\begin{thebibliography}{10}

\bibitem[ABS] {ABS} V.~E.~Adler, A.~I.~Bobenko, Yu.~B.~Suris,
Classification of integrable equations on quad-graphs. The consistency approach.
\textit{Commun.~Math.~Phys.} {\bf 233} (2003), 513-543.


\bibitem[APT1] {APT1}  M.~J.~Ablowitz, B.~Prinari, A.~D.~Trubatch,
Soliton interaction in the vector NLS equation.
\textit{Inverse Problems} {\bf 20} (2004), 1217-1237.

\bibitem[APT2] {APT2} M.~J.~Ablowitz, B.~Prinari, A.~D.~Trubatch,
Discrete and continuous nonlinear Schr\"odinger systems.
London Mathematical Society Lecture Note Series, vol. 302,
Cambridge University Press, Cambridge, 2004.

\bibitem[AV]{AV} A.~Appel, B.~ Vlaar, Universal K-matrices for quantum Kac-Moody algebras, \textit{Representation Theory of the American Mathematical Society} {\bf 26} (2022), 764-824.

\bibitem[B] {B} E.~Batista,
Partial actions: what they are and why we care.
\textit{Bull. Belg. Math. Soc. Simon Stevin} {\bf 24} (2017), 35-71.

\bibitem[Ba] {Ba} R.~Baxter,
Partition function of the eight-vertex lattice model.
\textit{Ann.~Phys.} {\bf 70} (1972), 193-228.

\bibitem[BH] {BH} G.~Biondini, G.~Hwang,
Solitons, boundary value problems and a nonlinear method of images.
\textit{J.~Phys.~A: Math.~Theor.} {\bf 42} (2009), 205207, 18pp.

\bibitem[BHO] {BHO} H.~W.~ Braden, H.W., N.~W.~Hone, 
Affine Toda solitons and systems of Calogero-Moser type.
\textit{Phys.~Lett.~B}{\bf 380} (1996), 296-302.

\bibitem[BK] {BK} M.~Balagovi\'c, S.~Kolb,
Universal $K$-matrix for quantum symmetric pairs.
\textit{J.~Reine Angew.~Math.} {\bf 747} (2019), 299-353.

\bibitem[BS] {BS} V.~Bucimas, T.~Scrimshaw,
Quasi-solvable lattice models for $Sp_{2n}$ and $SO_{2n+1}$ 
Demazure atoms and characters.
\textit{Forum of Math.~Sigma} {\bf 10:e53} (2022), 1-34.

\bibitem[BT] {BT} R.~F.~Bikbaev, V.~O.~Tarasov,
Initial-boundary value problem for the nonlinear Schr\"odinger equation.
\textit{J.~Phys.~A: Math.~Gen.} {\bf 24} (1991), 2507-2516.

\bibitem[C] {C}  I.~Cherednik,
Factorizing particles on a half-line and root systems.
\textit{Theor.~Math.~Phys.} {\bf 61} (1984), 977-83.

\bibitem[CdSW] {CdSW}  A.~Cannas da Silva, A.~Weinstein,
Geometric models for noncommutative algebras.
Berkeley Mathematics Lecture Notes, 10. American Mathematical
Society, Providence, RI; Berkeley Center for Pure and Applied
Mathematics, Berkeley, CA, 1999. xiv+184pp.

\bibitem[Ch] {Ch} W.~L.~Chow,
On the algebraic braid group.
\textit{Ann.~ Math.} {\bf 49} (1948), 65f4-658.

\bibitem[CCZ] {CCZ} V.~Caudrelier, N.~Cramp\'{e}, Q.~C.~Zhang,
Set-theoretical reflection equation: Classification of reflection maps.
\textit{J.~Phys.~A} {\bf 46} (2013), 095203 (12pp).

\bibitem[CGP] {CGP} V.~Caudrelier, A.~Gkogkou, B.~Prinari,
Soliton interactions and Yang-Baxter maps for the complex coupled short-pulse equation.
\textit{Stud.~in Appl.~Math.} {\bf 151} (2023), 285-351.


\bibitem[CZ1] {CZ1} V.~Caudrelier, Q.~C.~Zhang,
Vector nonlinear Schr\"odinger equation on the half-line.
\textit{J.~Phys.} {\bf 45} (2012), 105201, 21pp.

\bibitem[CZ2] {CZ2}  V.~Caudrelier, Q.~C.~Zhang,
Yang-Baxter and reflection maps from vector solitons with a boundary.
\textit{Nonlinearity} {\bf 27} (2014), 1081-1103.

\bibitem[D]{D} A.~Doikou, Algebraic structures in set-theoretic Yang-Baxter \& reflection equations, arXiv:2307.06140.

\bibitem[dC] {dC}  K.~De Commer,
Actions of skew-braces and set-theoretic solutions of the reflection equation.
\textit{Proc. Edinb. Math. Soc.} {\bf 62} (2019), 1089-1113.

\bibitem[DM] {DM}  A.~Doikou, P.~P.~Martin,
Hecke algebraic approach to the reflection equation for spin chains.
\textit{J.~Phys.~A: Math.~Gen.} {\bf 36} (2003), 2203-2225.

\bibitem[Dr1] {Dr1} V.~Drinfeld,
Hamiltonian structure on Lie groups, Lie bialgebras and the geometric meaning of
the Yang-Baxter equations.
\textit{Sov.~Math.~Doklady} {\bf 27} (1983), 69-71.

\bibitem[Dr2] {Dr2} V.~ Drinfeld,
On some unsolved problems in quantum group theory.
\textit{Lecture Notes in Math.} {\bf 1510} (1992), 1-8,
Springer, Berlin, 1992.

\bibitem[DS] {DS} A.~Doikou, A.~Smoktunowicz,
Set-theoretic Yang-Baxter and reflection equations and
quantum group symmetries.
\textit{Lett.~Math.~Phys.} {\bf 111} (2021), Article 105.

\bibitem[E] {E} R.~Exel,
Circle actions on $C^{*}$-algebras, partial automorphisms, and a generalized Pismsner-Voiculescu exact sequence.
\textit{J. Funct. Anal.} {\bf 126} (1994), 259-273.

\bibitem[ESS] {ESS} P.~Etingof, T.~Schedler, A.~Soloviev,
Set-theoretical solutions to the quantum Yang-Baxter equation.
\textit{Duke Math. J.} {\bf 100} (1999), 169-209.

\bibitem[F] {F} B.~Fuchssteiner,
The Lie algebra structure of degenerate Hamiltonian and bi-Hamiltonian systems.
\textit{Progr.~Theoret.~Phys.} {\bf 68} (1982), 1082-1104.

\bibitem[FT] {FT}  L.~D.~Faddeev, L.~A.~Takhtajan,
Hamiltonian methods in the theory of solitons. Springer
Series in Soviet Mathematics.
Springer-Verlag, Berlin, 1987.

\bibitem[G] {G} W.~Galleas,
The Bethe ansatz equations for reflecting magnons.
\textit{Nucl.~Phys. B} {\bf 820} (2009), 664-681.

\bibitem[Goe] {Goe} O.~Goertsches,
Generating rational loop groups with noncompact reality conditions.
\textit{Math.~Scand.} {\bf 113} (2013), 187-205.

\bibitem[GV] {GV}  V.~M.~Goncharenko, A.~P.~Veselov,
Yang-Baxter maps and matrix solitons.
\textit{New Trends in Integrability and Partial Solvability} (2004), 191-197.

\bibitem[HJN] {HJN} J.~Hietarinta, N.~Joshi, F.~W.~Nijhoff,
Discrete systems and integrability.
Cambridge University Press, Cambridge, 2016.

\bibitem[J] {J} M.~Jimbo,
Yang-Baxter equation in integrable systems.
Advanced series in mathematical physics, vol.~10, World Scientific, Singapore, 1990.

\bibitem[JSS] {JSS} M.~Jakubowski, K.~Steiglitz, R.~Squier,
Computing with classical soliton collisions.
\textit{Advances in Unconventional Computing, Vol 2: Prototypes,Models and Algorithms}, p. 261-295,
A.~Adamatzky (ed.), Springer, 2017.

\bibitem[JS] {JS}  A.~Joyal, R.~Street,
Braided tensor categories.
\textit{Adv.~Math.} {\bf 102} (1993), 20-78.

\bibitem[K] {K} C.~Kassel,
Quantum groups.
Graduate texts in mathematics, volume 155, Springer, New York,
1995.

\bibitem[Kai] {Kai} T.~Kailath,
Linear systems.  Prentice Hall, Englewood Cliffs, 1980.

\bibitem[Kha] {Kha} Khabibullin, 
B\"acklund transformation and integrable initial boundary value problems.
\textit{Mat.~Zametki} {\bf 49} (1991), 130-137.

\bibitem[KL] {KL}  L.~Kauffman, S.~J.~Lomanaco,
braiding operators are universal quantum gates.
\textit{New J.~Phys.} {\bf 6} (2004), 134.

\bibitem[KO] {KO} A.~Kuniba, M.~Okado,
Set-theoretical solutions to the reflection equations associated to
the quantum affine algebra of type $A^{(1)}_{n-1}$.
\textit{J.~Integrable Syst.} {\bf 4} (2019), xyz013, 10pp.

\bibitem[KS] {KS} P.~Kulish, E.~Sklyanin,
The general $U_q(sl(2))$ invariant XXZ integrable spin chain.
\textit{J.~Phys.~A} {\bf 24} (1991), L435-L439.

\bibitem[Lam] {Lam} S.~Lambropoulou,
Solid torus links and Hecke algebras of B-type.
Proceedings of the Conference on Quantum Topology (Manhattan, KS, 1993), 225-245,
World Sci.~Publ., River Edge, NJ, 1994.

\bibitem[L1] {L1}  L.-C.~ Li,
Factorization problems on rational loop groups, and the Poisson
geometry of Yang-Baxter maps.
\textit{Math.~Phys.~Anal.~Geom.} {\bf 25} (2022), no. 1, Paper No. 6, 51pp.

\bibitem[L2] {L2} L.-C.~Li,
Poisson involutions, spin Calogero-Moser systems associated with symmetric Lie
sublagebras and the symmetric space spin Ruijsenaars-Schneider models.
\textit{Commun.~Math.~Phys.} {\bf 265} (2006), 333-372.

\bibitem[LN] {LN} L.-C.~ Li, I.~Nenciu,
The periodic defocusing Ablowitz-Ladik equation and the geometry
of Floquet CMV matrices.
\textit{Adv.~Math.} {\bf 231} (2012), 3330-3388.

\bibitem[LYZ] {LYZ} J.~H.~Lu, M.~Yan, Y.~C.~Zhu,
On the set-theoretical Yang-Baxter equation.
\textit{Duke Math. J.} {\bf 104} (2000), 1-18.

\bibitem[LV] {LV} V.~Lebed, L.~Vendramin,
Reflection equation as a tool for studying solutions to the Yang-Baxter
equation.
\textit{J.~Alg.} {\bf 607} (2022), 360-380.

\bibitem[Man]  {Man} S.~V.~Manakov,
On the theory of two-dimensional stationary self-focusing of electromagnetic waves.
\textit{Sov.Phys. JETP} {\bf 38} (1974), 248-263.

\bibitem[Mi] {Mi}  A.~V.~Mikhailov,
The reduction problem and the inverse scattering method.
\textit{Physica D} {\bf 3} (1981), 73-107.

\bibitem[MR]{MR} A.I.~Molev, E.~Ragoucy, Representations of reflection algebras, \textit{Rev. Math. Phys.} {\bf 14} (2002), 317-342. 

\bibitem[P] {P} V.~P.~Potapov,
On the multiplicative structure of J-nonexpansive matrix functions, 
\textit{Tr. Mosk. Mat. Obs.} (1955), 125-236 (in Russian); English transl. in
\textit{Amer. Math. Soc. Transl. (2)} {\bf 15} (1966), 131-243.

\bibitem[RV]{RV} V.~Regelskis, B.~Vlaar, Solutions of the reflection equations, \textit{J. Phys. A} {\bf 51} (2018), 345204.

\bibitem[R] {R} W.~Rump,
Braces, radical rings, and the quantum Yang-Baxter equation.
\textit{J. Alg.} {\bf 307} (2007), 153-170.

\bibitem[S] {S} B.~Sutherland,
Beautiful models: 70 years of exactly solved quantum many-body problems.
World Scientific, Singapore, 2004.

\bibitem[Sch] {Sch} C.~Schwiebert,
Extended reflection equation algebras, the braid group of a handlebody, and associated
link polynomials.
\textit{J.~Math.~Phys.} {\bf 35} (1994), 5288-5305.

\bibitem[Sk] {Sk}  E.~K.~Sklyanin,
Boundary conditions for integrable quantum systems,
\textit{J.~Phys.~A: Math.~Gen.} {\bf 21} (1988), 2375-2389.

\bibitem[Sun] {Sun}  R.~Sun,
Complete integrability of the Benjamin-Ono equation on the multi-soliton
manifolds.
\textit{Commun.~Math.~Phys.} {\bf 383} (2021), 1051-1092.

\bibitem[SVW] {SVW} A.~Smoktunowicz, L.~Vendramin, R.~Weston,
Combinatorial solutions to the reflection equation.
\textit{J.~Algebra} {\bf 549} (2020), 268-290.

\bibitem[T] {T} T.~Tsuchida,
N-soliton collision in the Manakov model.
\textit{Progress in Theor. Phys.} {\bf 111} (2004), 151-182.

\bibitem[TU] {TU} C.-L. Terng and K.~Uhlenbeck,
Poisson actions and scattering theory for integrable systems.
\textit{Surveys in Differential Geometry}  (1998),  315-402.

\bibitem[Tur] {Tur}  V.~G.~Turaev,
The Yang-Baxter equation and invariants of links.
\textit{Invent.~Math.} {\bf 92} (1988), 527-553.

\bibitem[U] {U} K.~ Uhlenbeck,
Harmonic maps into Lie groups.
\textit{J.~Differential Geometry} {\bf 30} (1989), 1-50.

\bibitem[V] {V} A.~P.~Veselov,
Yang-Baxter maps and integrable dynamics.
\textit{Phys.~Lett.~A} {\bf 314} (2003), 214-221.


\bibitem[W] {W} T.~Weelnick,
A topological origin of quantum symmetric pairs.
\textit{Selecta Math. (NS)} {\bf 25} (2019), Paper No. 51, 47pp.

\bibitem[Wein] {Wein}  A.~Weinstein,
Some remarks on dressing transformations.
\textit{J.~Fac.~Sci.~Univ.~Tokyo Sect. 1A} {\bf 35} (1988), 163-167.

\bibitem[WX] {WX} A.~Weinstein, P.~Xu,
Classical solutions of the quantum Yang-Baxter equation.
\textit{Commun. Math. Phys.} {\bf 148} (1992), 309-343.

\bibitem[X] {X}  P.~Xu,
Dirac submanifolds and Poisson involutions.
\textit{Ann.~Sci.~Ecole Norm.~Sup.} {\bf 26} (2003), 403-430.

\bibitem[Xu] {Xu}  X.~Xu,
Stokes phenomenon and reflection equation.
\textit{Commun.~Math.~Phys.} {\bf 398} (2023), 353-373.

\bibitem[Y] {Y} C.~N.~Yang, 
Some exact results for the many-body problem in one dimension with
repulsive delta-function interaction.
\textit{Phys.~Rev.~Lett.} {\bf 19} (1967), 1312-1314.

\bibitem[ZS] {ZS}  V.~Zakharov, A.~V.~Shabat,
Integration of nonlinear equations of mathematical physics by the method of inverse scattering. II.
\textit{Funct. Anal. and its Appl.} {\bf 13} (1980), 166-174.


\bibitem[Z] {Z}  Y.~Zhang,
Integrable quantum computation.
\textit{Quantum Inf.~Process} {\bf 12} (2013), 631-639.

\end{thebibliography}
\end{document}